\newtheorem{theorem}{Theorem}
\newtheorem{corollary}[theorem]{Corollary}
\newtheorem{lemma}[theorem]{Lemma}
\newtheorem{remark}[theorem]{Remark}
\newtheorem{assumption}[theorem]{Assumption}
\newcommand{\E}{\mathbbm{E}}
\renewcommand{\Pr}{\mathbbm{P}}
\newcommand{\Cov}{\mathrm{Cov}}
\newcommand{\Var}{\mathrm{Var}}
\newcommand{\dd}{\mathrm{d}}
\newcommand{\one}{\mathbbm{1}}
\newcommand{\id}[1]{\one\{#1\}}
\newcommand{\mg}{\mathrm{M/G/1}}
\newcommand{\bg}{\mathrm{bg}}
\newcommand{\peak}{\mathrm{peak}}
\newcommand{\total}{\mathrm{total}}
\newcommand{\comb}[2]{\Biggl(\begin{array}{@{}c@{}}#1 \\ #2
\end{array}\Biggr)}
\newcommand{\st}{\mathrm{st}}
\newcommand{\icx}{\mathrm{icx}}
\newcommand{\icv}{\mathrm{icv}}
\newcommand{\cx}{\mathrm{cx}}
\newcommand{\yoshiaki}[1]{#1}
\newcommand{\michel}[1]{#1}
\begin{document}
\title{
Characterizing the Age of Information with 
Multiple Coexisting Data Streams
}

\hyphenation{op-tical net-works semi-conduc-tor}

\author{
Yoshiaki Inoue and Michel Mandjes
\thanks{This work was supported in part by JSPS KAKENHI Grant Number
21H03399 and the NWO Gravitation project NETWORKS under grant agreement no. 024.002.003.}%
\thanks{Y.\ Inoue is with 
Department of Information and Communications Technology, 
Graduate School of Engineering, Osaka University, Suita 565-0871, Japan 
(e-mail: yoshiaki@comm.eng.osaka-u.ac.jp).}%
\thanks{M.\ Mandjes is with Mathematical Institute, Leiden University, P.O. Box 9512, 2300 RA Leiden, 
The Netherlands (e-mail: m.r.h.mandjes@math.leidenuniv.nl). He is also affiliated with Korteweg-de Vries Institute for Mathematics, University of Amsterdam, Amsterdam, The Netherlands; Eurandom, Eindhoven University of Technology, Eindhoven, The Netherlands; Amsterdam Business School, Faculty of Economics and Business, University of Amsterdam, Amsterdam, The Netherlands.
}%
}

\allowdisplaybreaks

\maketitle

\begin{abstract}
In this paper we analyze the distribution of the Age of Information (AoI) of a tagged data stream sharing a processor with a set of other data streams. We do so in the highly general setting in which the interarrival times pertaining to the tagged stream can have any distribution, and also the service times of both the tagged stream and the background stream are generally distributed. The packet arrival times of the background process are assumed to constitute a Poisson process, which is justified by the fact that it typically is a superposition of many relatively homogeneous streams. The first \michel{main} contribution is that we derive an expression for the Laplace-Stieltjes transform of the AoI in the resulting GI+M/GI+GI/1 model. Second, we use stochastic ordering techniques to identify tight stochastic bounds on the AoI, \michel{leading to an explicit lower and upper bound on the mean AoI}. In addition, when approximating the tagged stream's inter-generation times through a phase-type distribution (which can be done at any precision), we present a computational algorithm for the mean AoI. 
As illustrated through a sequence of numerical experiments, the analysis enables us to assess the impact of background traffic on the AoI of the tagged stream. 
\michel{It turns out that the upper bound on the mean AoI is remarkably close to its true value, which yields an explicit expression (in terms of the model parameters) for an accurate proxy of the AoI-minimizing \yoshiaki{generation} rate.}
\end{abstract}

\section{Introduction}
\label{sec:intro}

In this paper we consider the situation of an information source that
is equipped with a sensor, feeding into a processor (often referred to
as a server), and a monitor (cf.\ Figure \ref{AoI}). Due to the fact that the (typically packetized) information has to be processed by the server, it arrives with some delay at the monitor. 
The {\it Age of Information} (AoI) \cite{Kaul2011,Kaul2011-2} is a performance measure that quantifies the timeliness of the monitor’s knowledge of the information produced by the source. To conveniently analyze the AoI, queueing theory has proven to be particularly useful \cite{Kaul2012, Yates2021}.

\michel{In the theoretical analysis of the AoI, traditionally there was a strong focus on a single data stream in isolation, in that the stream under consideration is the sole user of the processor \cite{Kaul2012,Costa2016,Inoue2019,Champati2019}. Examples of papers on settings with multiple data streams are \cite{Huang2015}, analyzing a system with heterogeneous users that is modeled via a multi-class M/G/1 queue, and \cite{Sun2018}, in which multiple flows of update packets are sent over multiple servers to their respective destinations.}
A few \michel{other} exceptions, \michel{in the more recent literature}, are studies on the AoI in multi-source queueing systems with Poisson input, where inter-generation times of information packets are assumed to be exponentially distributed \cite{Yates2019,Moltafet2020,Inoue2024,Chen2022,Jiang2021,Liu2021}.
In a more typical scenario, however, a tagged data stream corresponds to packet-generation intervals with relatively low variability, sharing the resource with a massive number of competing streams. This may entail that the characteristics of these background streams may significantly impact the AoI of the tagged stream. The main objective of this paper is to quantify this effect. In our analysis we rely on the well-known result that, with the number of background streams being typically large and relatively homogeneous, the corresponding aggregate packet arrival process can be accurately approximated by a Poisson process \cite{Cinlar1972}.

\begin{figure}

\centering
\scalebox{0.99}{
\begin{tikzpicture}[scale=0.5]
\node[anchor=south west,inner sep=0] (image) at (6.4,0.7) {\includegraphics[width=0.105\textwidth]{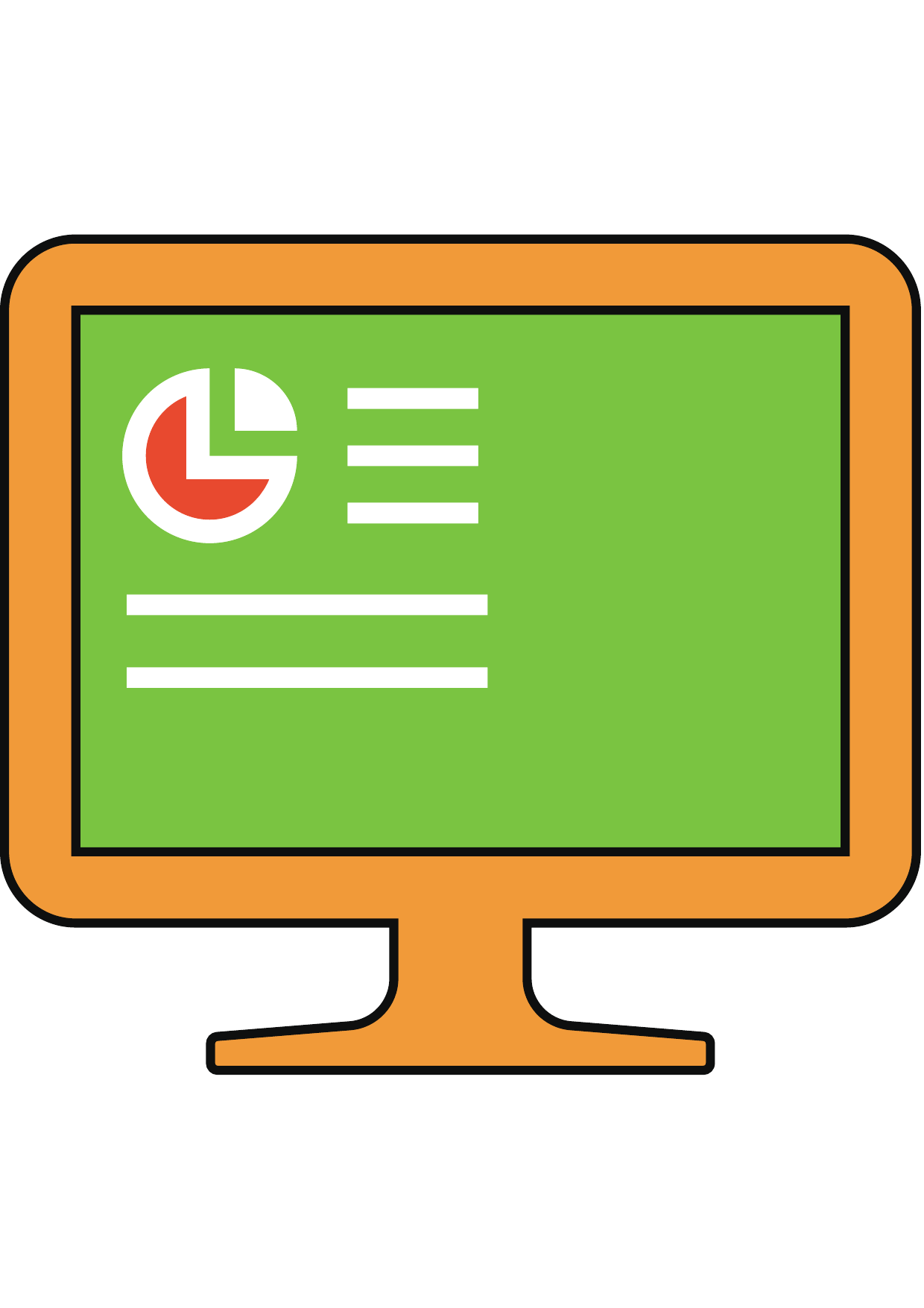}}; 
\node[anchor=south west,inner sep=0] (image) at (-4.4,0.9) {\includegraphics[width=0.129\textwidth]{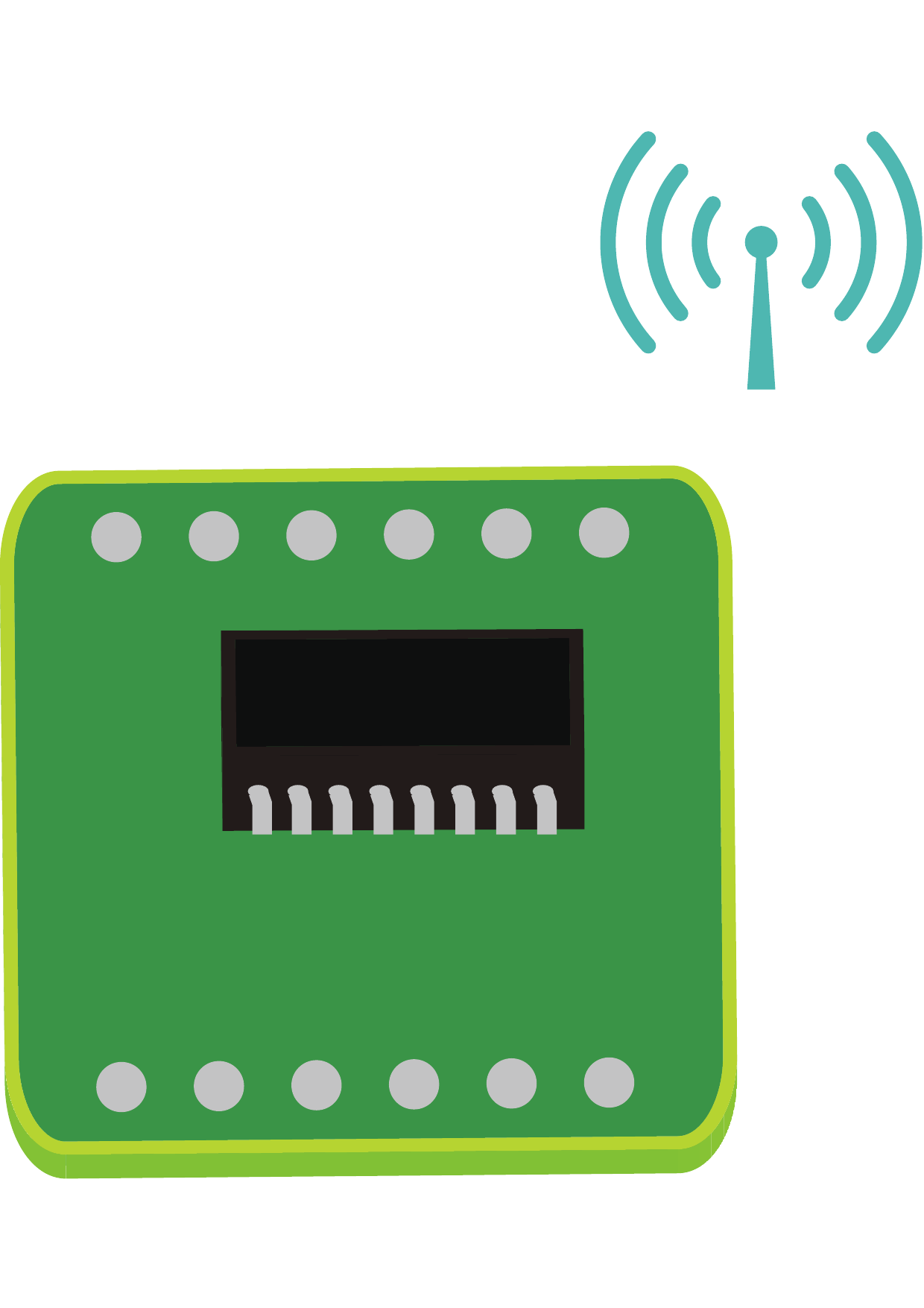}}; 
\node[anchor=south west,inner sep=0] (image) at (1.5,0.4066) {\includegraphics[width=0.08\textwidth]{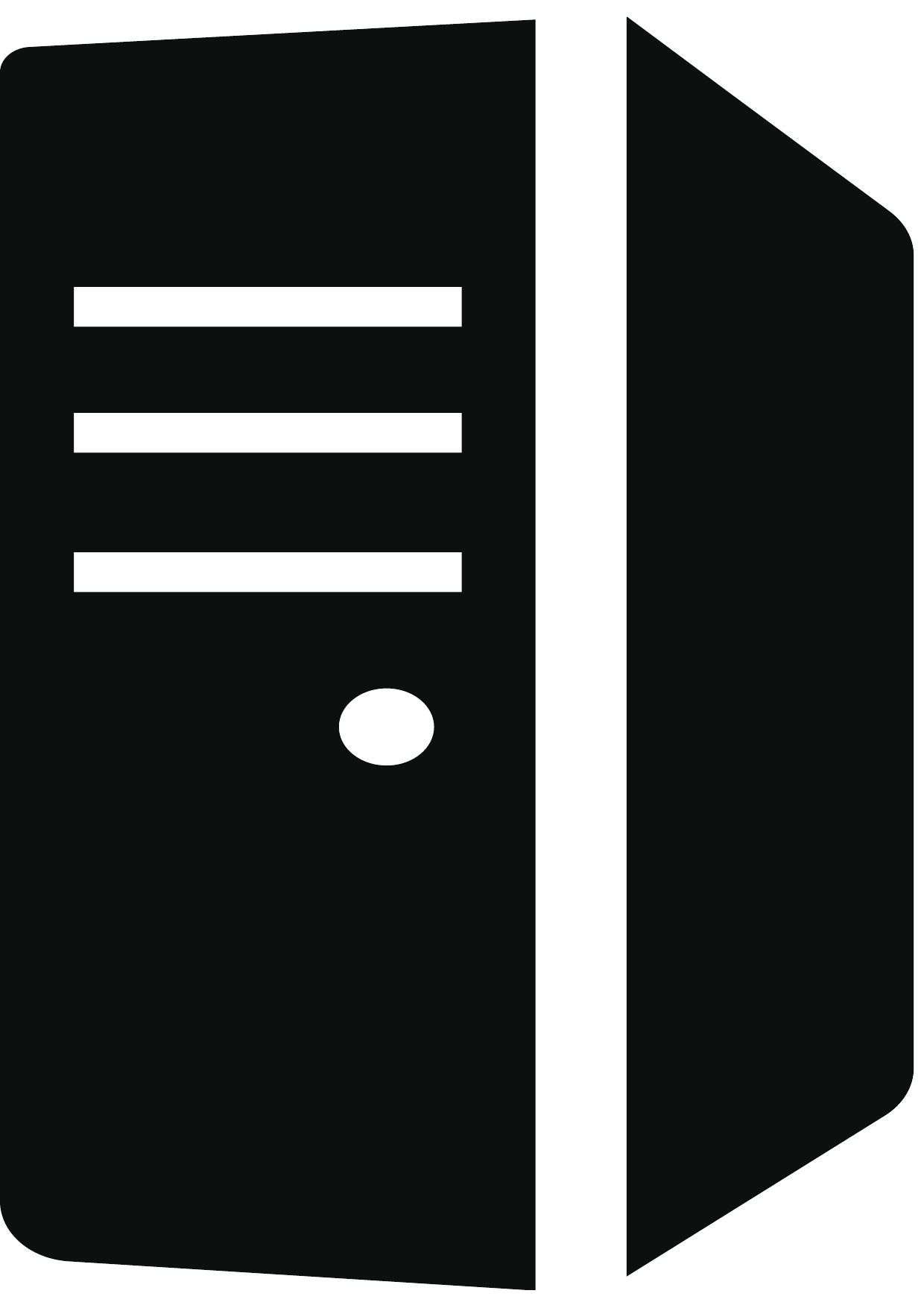}}; 
\node[anchor=south west,inner sep=0] (image) at (-1.05,1.5) {\includegraphics[width=0.08\textwidth]{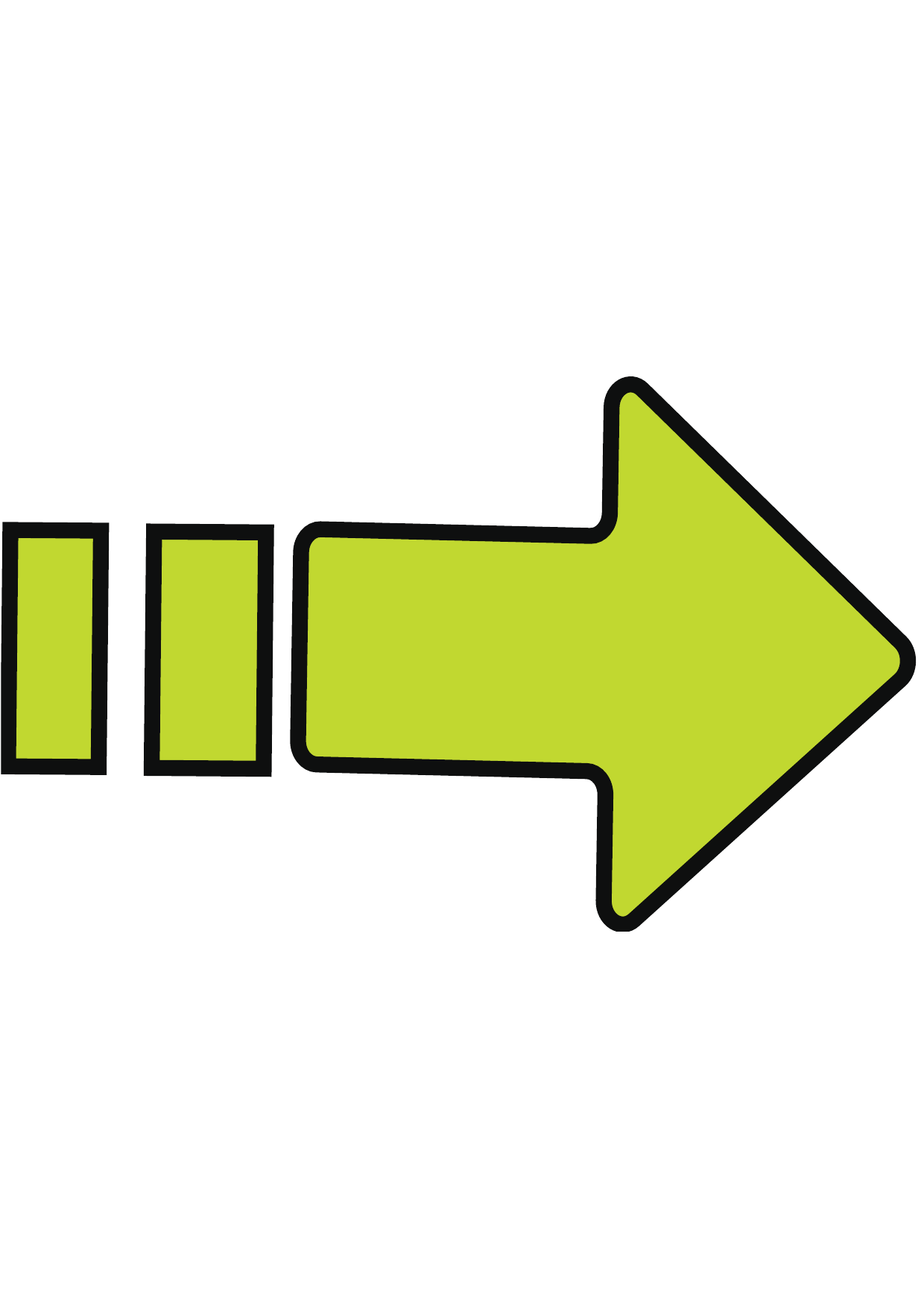}}; 
\node[anchor=south west,inner sep=0] (image) at (3.95,1.5) {\includegraphics[width=0.08\textwidth]{arrow.pdf}}; 
\node[anchor=south west,inner sep=0] (image) at (-6.96,1.5) {\includegraphics[width=0.08\textwidth]{arrow.pdf}}; 
\node[anchor=south west,inner sep=0] (image) at (-1.05,-1.1) {\includegraphics[width=0.08\textwidth]{arrow.pdf}}; 
\node[anchor=south west,inner sep=0] (image) at (3.95,-1.1) {\includegraphics[width=0.08\textwidth]{arrow.pdf}}; 

\draw (-6.9,6.3) node[below] {Information};
\draw (-6.9,5.3) node[below] {source};
\draw (-3.1,6.3) node[below] {Sensor};
\draw (2.6,6.3) node[below] {Server};
\draw (8.75,6.3) node[below] {Monitor};
\draw (0,-0.4) node[below] {Background};
\draw (0,-1.4) node[below] {stream};

\end{tikzpicture}
}

 \caption{\label{AoI}Schematic representation of the status update system, \michel{in which the information source shares the server with a background stream.}}
\end{figure}

We proceed by discussing the three main contributions of this work.  \michel{Before doing so, a general remark is that throughout this paper we rely on a variant of the compact notation system introduced by Kendall \cite[Ch.\ III.1b]{Asmussen2003}, commonly used in the AoI literature; see e.g.\ \cite{Inoue2019}. The model we focus on is for instance of GI+M/GI+GI/1 type, which means that the tagged stream has generally distributed independent (GI) interarrival times and GI service times, that the background stream is characterized by exponentially distributed inter-arrival times (M, stands for {\it memoryless}; `Poisson arrivals') and GI service times, and that there is one server.}
\begin{itemize}
    \item[$\circ$] The first part of our analysis considers the AoI of the tagged stream in the context of our \michel{general model of GI+M/GI+GI/1 type}. We succeed in evaluating the distribution of the AoI in this multi-source context through its Laplace-Stieltjes transform (LST), with the resulting expression being in terms of the (known) transform of the stationary system delay. 
    \item[$\circ$] While techniques are available to accurately invert LSTs to obtain the underlying random variable's cumulative distribution function, one would like obtain more explicit insight into the impact of the background data streams on the AoI of the tagged stream. \michel{As a second contribution, we prove a series of stochastic ordering results, by which we in particular manage to derive closed-form lower and upper bounds on the mean AoI}. 
    \item[$\circ$] Informally, the class of phase-type distributions contains convolutions and mixture of exponential distributions. Owing to the fact that any distribution on the positive half-line can be approximated arbitrarily closely by a phase-type distribution
    \cite[Th.\ III.4.2]{Asmussen2003}, we focus on the model in which the \michel{inter-generation} times of the tagged stream are of phase-type. \michel{As a third contribution, we} show how to stably compute the mean AoI, in the resulting multi-source model of the type PH+M/GI+GI/1. 
    \michel{By comparing the  resulting values with the bounds on the mean AoI, as mentioned in the second contribution, we observe that the upper bound is remarkably close. It leads to an explicit expression,  in terms of the model parameters, for an accurate proxy of the AoI-minimizing \yoshiaki{generation} rate. }
\end{itemize}

The model is more realistic than existing ones due to the fact that it allows renewal arrivals (of the tagged stream, that is), rather than just Poisson arrivals. In particular, it thus covers the practically relevant case when the tagged stream has constant interarrival times, and in addition the case that the interarrival times are `almost constant' (i.e., constant plus some perturbation). 

In previous work, all papers on multi-source models, involving a tagged stream sharing the processor with background streams, \michel{have assumed} Poisson arrivals. The consequence is that one can deal with the heterogeneity of the service times in a straightforward manner: at any arrival one can probabilistically determine whether the arrived packet corresponds to the tagged stream or to the background stream. Put differently, for any multi-source M+M/GI+GI/1 model one can work with an equivalent single source M/GI/1 model. It is remarked that in \cite{Inoue2019} the AoI under renewal arrivals (i.e., the interarrival times stemming from an \michel{arbitrary} distribution) is extensively studied, but \michel{only} in the context of a single-source model.

This paper has been organized as follows. \michel{In Section \ref{sec:literature} we provide an account of the existing literature.} Section \ref{sec:model} introduces the model, convenient notation, and some preliminaries. Then Section \ref{sec:exact} provides an exact characterization of the AoI distribution through its LST. Explicit stochastic bounds on the AoI are presented in Section \ref{sec:order}, including appealing lower and upper bounds on the mean AoI. In Section \ref{sec:PH} we specialize to the PH+M/GI+GI/1 systems, which, as argued above, is dense in the class of all GI+M/GI+GI/1 systems (in that any instance in the latter class can be approximated arbitrarily closely by an instance in the former class); we develop computational algorithms to obtain the mean AoI. In Section \ref{sec:num} we systematically assess the performance of the PH+M/GI+GI/1 system, \michel{quantifying the errors that we would face when we would have worked with the simpler M+M/GI+GI/1 system, and in addition assessing the accuracy of the bounds found in Section  \ref{sec:order}.} Section \ref{sec:conc} gives concluding remarks, including directions for potential follow-up research. 

\section{Literature overview}\label{sec:literature}

\michel{In this section we provide a brief discussion on the existing AoI structure, with emphasis on the branch that our work is in. Seen from a higher perspective, one could say there is in the first place the body of work on {\it AoI analysis}, which concerns the derivation of expressions for the AoI in various setups, often by analyzing a corresponding queueing model. The second branch could be referred to as
{\it AoI minimization}, that aims to develop techniques that can be used to minimize the AoI in specific classes of systems. This typically boils down to analyzing scheduling problems to optimize the generation rate and resource allocation (i.e., the the order of service in the underlying queueing system). The main tools used are Markov decision processes and, more generally, techniques from optimization. Importantly, this categorization is not strict: our work for instance qualifies as AoI analysis, but we also provide explicit guidelines for setting the optimal generation rate in the setup considered.}

\smallskip

\michel{This paragraph presents a (non-exhaustive) account of papers on AoI analysis in the multi-source context, primarily focusing on papers in which the underlying queueing discipline is first-come first-served (FCFS). 
The pioneering paper \cite{Yates2019} quantifies the AoI of a tagged stream interfering with a background stream, but in a setting in which interarrival times as well as service times are exponentially distributed. For the resulting M+M/M+M/1 model, and actually also for the analogous model in which more than two streams share the queueing resource, the authors 
analyze the mean of the AoI of each of the streams. This analysis is extended in \cite{Moltafet2020,Inoue2024} to general service times (but still Poisson arrivals). The fact that it is reasonable to drop older packets when new packets arrive at the queue, justifies the interest in models with finite buffer capacity. In the single-stream case, it is natural to consider the case that the buffer can contain precisely one packet. The corresponding multi-source model (with a buffer size of one, that is) is dealt with in \cite{Chen2022}, but is remarked that in this setting with multiple coexisting sources it would be more appropriate to work with a setup in which {\it all} interfering streams are given a dedicated buffer of size one.
In \cite{Jiang2021}, relying on Palm calculus techniques, an expression for the joint transform of the AoI\,s is obtained, which is valid under fairly general conditions. It should be noted, however, that its application to specific situations requires additional non-trivial analysis, and so far such analysis has succeeded only in special cases (namely, the input is Poisson and the service discipline corresponding to preemptive last-come first-served). We finally mention \cite{Liu2021}, in which the uncertainty in the arrival and service processes is captured through the use of uncertainty sets. The analysis leads to bounds on the 
peak AoI (defined as the largest value of the AoI, before it drops due to a new update), accurate under both light and heavy loads.  This literature review confirms that, before our work, no results were available for the AoI distribution in systems of the practically relevant GI+M/GI+GI/1 type.}

\smallskip

\michel{We conclude this section by discussing a set of representative recent papers on AoI minimization in the multi-source context. We do not aim to provide a comprehensive overview; instead, the selected papers are meant to highlight the most prominent research themes within this area.
An early paper on AoI minimization is \cite{He2018}, that focuses on designing a policy that delivers all messages at the end-users as fresh as possible. In the wireless network setting considered, each timeslot it has to be decided which links (i.e., transmitter-receiver pairs) are to be activated so as to minimize the overall AoI. The authors cast this scheduling problem in terms of an optimization problem, and analyze its complexity. A similar setup is studied in \cite{Kadota2018, Kadota2019}, but then aiming at minimizing the weighted sum of the links' expected AoI\,s, with the additional feature of the links being unreliable (i.e., packet transmission fails with some probability). A main contribution of \cite{Kadota2018, Kadota2019} lies in the assessment of {\em low-complexity} scheduling algorithms, including one of max-weight type and one based on the Whittle index. 
The minimization of a general AoI-penalty function in a multi-server system is investigated in \cite{Sun2018}, assuming synchronized packet arrivals.
A novel element in \cite{Yin2019} is the introduction of the concept of {\it effective AoI} to represent the AoI specifically at time instants of user requests. The link activation problem discussed above is dealt with, relying on a decomposition into multiple computationally tractable subproblems and the application of restless multi-armed bandit techniques. In \cite{Hsu2020} the focus is on a setting in which the information packets arrive in a random fashion, covering an approach relying on Whittle’s methodology for restless bandits, but also exploring an index scheduling algorithm in which the decision problem is stated in terms of a Markov decision process. In the framework studied in \cite{Bedewy2021} there is, besides the scheduler, a {\it sampler} that serves to determine when a source should generate a new packet. The paper develops a dynamic programming framework for optimal scheduling/sampling strategies. In \cite{Hirosawa2020} the focus is on AoI in the context of energy harvesting wireless sensor networks, the main contribution being resource allocation algorithms that minimize the average AoI. Some recent papers consider more refined objective functions; see e.g.\ \cite{Li2022} where the goal is to minimize maximum AoI thresholds, and \cite{Maatouk2022} which considers a setting with different priority levels of sensors. An example of a paper in which reinforcement learning is used is \cite{Zakeri2024}: it states the AoI minimization as a stochastic control optimization problem and solves it by using a constrained Markov decision process, but it in addition develops a model-free deep reinforcement learning policy that performs the scheduling decisions dynamically.}

\section{Model, notation, and preliminaries}\label{sec:model}

Throughout this paper, we follow a convention that for a non-negative
random variable $Y$, its cumulative distribution function (CDF) and
probability density function (if exists) are denoted by $F_Y(\cdot)$
and $f_Y(\cdot)$. We also define $\overline{F}_Y(\cdot)$ as
its complementary CDF and $f_Y^*(\cdot)$ as its LST:
\begin{align*}
\overline{F}_Y(x) = 1 - F_Y(x),
\;\;
x \geq 0,
\qquad
f_Y^*(s) 
= 
\E[e^{-sY}] = \int_0^{\infty} e^{-sx} \dd F_Y(x),
\;\;
\michel{s \in {\mathbb C}}.
\end{align*}

\michel{\subsection{Sensor and monitor}}
Suppose that a sensor observes (i.e., samples) the time-varying state of an
information source at a rate $\lambda > 0$.
At the sampling time instants, the sensor generates an information
packet containing the obtained data and transmits this to a remote monitor. 
Each information packet experiences some delay at an
intermediate communication channel or a processor which performs 
a computation to extract information from the raw data.
More specifically, let $(\alpha_n)_{n=0,1,\ldots}$ denote the
sequence of generation times of information packets, where 
$\alpha_{n-1} \leq \alpha_n$ ($n = 1,2,\ldots$).
We refer to the information packet generated at time $t = \alpha_n$ as
the $n$th packet. By definition, the inter-generation time $G_n$
between the $(n-1)$th and $n$th packets is given by
\begin{equation}
G_n = \alpha_n - \alpha_{n-1},
\quad
n = 1,2,\ldots.
\label{eq:G_n-def}
\end{equation}
The $n$th packet (more precisely, the information it contains) is
received by the monitor at time $\beta_n$, i.e., the system delay
$D_n$ experienced by the $n$th packet is given by
\begin{equation}
D_n = \beta_n - \alpha_n,
\quad
n = 0,1,\ldots.
\label{eq:D_n-def}
\end{equation}
We hereafter assume that the time axis is taken so 
that
%
\begin{equation}
\alpha_0 > 0,
\label{eq:alpha-condition}
\end{equation}
which is convenient in defining the AoI, as we will see later.

\begin{figure}

   \centering
\begin{tikzpicture}[scale=0.84]
\draw[thick] (-1,0)--(7,0);
\draw[thick] (-1,2)--(7,2);
\draw[thick] (7,2)--(7,0);

\draw(6.9,0.1)--(6.9,1.9);
\draw(6.0,0.1)--(6.9,0.1);
\draw(6.0,1.9)--(6.9,1.9);
\draw(6.0,0.1)--(6.0,1.9);

\filldraw[fill=black!40!white, draw=black] (6.0,0.1) rectangle (6.9,1.9);
\filldraw[fill=black!40!white, draw=black] (5.4,0.1) rectangle (5.9,1.9);
\filldraw[fill=black!40!white, draw=black] (0.7,0.1) rectangle (1.0,1.9);
\filldraw[fill=black!10!white, draw=black](8,1) circle (0.9);

\draw(5.9,0.1)--(5.4,0.1);
\draw(5.9,0.1)--(5.9,1.9);
\draw(5.9,1.9)--(5.4,1.9);
\draw(5.4,0.1)--(5.4,1.9);

\draw(5.3,0.1)--(5.3,1.9);
\draw(4.0,0.1)--(5.3,0.1);
\draw(4.0,1.9)--(5.3,1.9);
\draw(4.0,0.1)--(4.0,1.9);

\draw(3.9,0.1)--(3.9,1.9);
\draw(3.0,0.1)--(3.9,0.1);
\draw(3.0,1.9)--(3.9,1.9);
\draw(3.0,0.1)--(3.0,1.9);

\draw(2.9,0.1)--(2.9,1.9);
\draw(1.1,0.1)--(2.9,0.1);
\draw(1.1,1.9)--(2.9,1.9);
\draw(1.1,0.1)--(1.1,1.9);

\draw(1.0,0.1)--(1.0,1.9);
\draw(0.7,0.1)--(1.0,0.1);
\draw(0.7,1.9)--(1.0,1.9);
\draw(0.7,0.1)--(0.7,1.9);

\draw(0.0,0.1)--(0.0,1.9);
\draw(0.6,0.1)--(0.0,0.1);
\draw(0.6,1.9)--(0.0,1.9);
\draw(0.6,0.1)--(0.6,1.9);

\draw[->, thick](-4,0.4)--(-1.2,0.4);

\draw[->, thick](-4,1.6)--(-1.2,1.6);

\draw (8,1) node[centered] {$\mu$};
\draw (-3,1.6) node[above] {$X(t)$};
\draw (-3,0.4) node[below] {$X_{\rm bg}(t)$};
\end{tikzpicture}
    \caption{\label{figMG1}Schematic representation of queue with coexisting data streams. The grey blocks correspond to packets in the tagged stream $X(t)$, and the white blocks to packets in the background stream $X_{\rm bg}(t)$.}
\end{figure}

\michel{\subsection{Queueing Representation}}
In this paper, we model the sequence of system delays
$(D_n)_{n = 0,1,\ldots}$ via the use of a single-server queue.
This queue is fed by a \emph{tagged input}, corresponding to 
the sensor-monitor pair we focus on, as well as 
\emph{additional exogenous inputs}, which represent the
background traffic load generated by sources other than the
tagged input (cf.\ Figure \ref{figMG1}). More specifically, we
consider a queueing system in which the arrival process consists of 
two different input streams, in the sequel denoted
by $(X(t))_{t \geq 0}$ and $(X_{\bg}(t))_{t \geq 0}$,
where $X(t)$ and $X_{\bg}(t)$ denote the cumulative amount of work
brought into the system by the sensor and the background traffic, respectively.
The workload in system $V(t)$ at time $t$ is then given by
\begin{equation}
V(t) = V(0) + X(t) + X_{\bg}(t) - \mu t + \mu \int_{u \in [0,t)}
\id{V(u) = 0} \,\dd u,
\quad
t \geq 0,
\label{eq:V(t)-def}
\end{equation}
where $\mu> 0$ denotes the service rate. 
See Figure \ref{fig:X-V-A} for a graphical explanation. 
Assuming that the input processes $(X(t))_{t \geq 0}$ and
$(X_{\bg}(t))_{t \geq 0}$ are {\it c\`adl\`ag} (i.e., right continuous with left limits),
so is the associated workload process $(V(t))_{t \geq 0}$.

\begin{figure}
\centering
\resizebox{14cm}{2.5cm}{
 \pgfplotstableread{Data_X.txt}{\table}
    \begin{tikzpicture}
        \begin{axis}[
            xmin = 0, xmax = 200,
            ymin = 0, ymax = 3,
            xtick distance = 20,
            ytick distance = 0.5,
            grid = both,
            minor tick num = 1,
            major grid style = {lightgray},
            minor grid style = {lightgray!25},
            width = \textwidth,
            height = 0.24\textwidth,
            legend cell align = {left},
            legend pos = north west
        ]
            \addplot[blue, mark = *, mark size = 0.3pt, line width = 1pt] table [x = {x}, y = {a}] {\table};
                       \legend{$X(t)$}
        \end{axis}
\end{tikzpicture}}

\smallskip

\resizebox{14cm}{2.5cm}{
 \pgfplotstableread{Data_Xbg.txt}{\table}
    \begin{tikzpicture}
        \begin{axis}[
            xmin = 0, xmax = 200,
            ymin = 0, ymax = 3,
            xtick distance = 20,
            ytick distance = 0.5,
            grid = both,
            minor tick num = 1,
            major grid style = {lightgray},
            minor grid style = {lightgray!25},
            width = \textwidth,
            height = 0.24\textwidth,
            legend cell align = {left},
            legend pos = north west
        ]
            \addplot[blue, mark = *, mark size = 0.3pt, line width = 1pt] table [x = {x}, y = {a}] {\table};
                       \legend{$X_{\rm bg}(t)$}
        \end{axis}
    \end{tikzpicture}}

\smallskip
    
\resizebox{14cm}{2.5cm}{
 \pgfplotstableread{Data_V.txt}{\table}
    \begin{tikzpicture}
        \begin{axis}[
            xmin = 0, xmax = 200,
            ymin = 0, ymax = 35,
            xtick distance = 20,
            ytick distance = 10,
            grid = both,
            minor tick num = 1,
            major grid style = {lightgray},
            minor grid style = {lightgray!25},
            width = \textwidth,
            height = 0.24\textwidth,
            legend cell align = {left},
            legend pos = north west
        ]
            \addplot[blue, mark = *, mark size = 0.3pt, line width = 1pt] table [x = {x}, y = {a}] {\table};
                       \legend{$V(t)$}
        \end{axis}
    \end{tikzpicture}}

\smallskip
    
\resizebox{14cm}{2.5cm}{
\pgfplotstableread{Data_A.txt}{\table}
  \begin{tikzpicture}
      \begin{axis}[
          xmin = 0, xmax = 200,
          ymin = 0, ymax = 35,
          xtick distance = 20,
          ytick distance = 10,
          grid = both,
          minor tick num = 1,
          major grid style = {lightgray},
          minor grid style = {lightgray!25},
          width = \textwidth,
          height = 0.24\textwidth,
          legend cell align = {left},
          legend pos = north west
      ]
          \addplot[blue, mark = *, mark size = 0.3pt, line width = 1pt] table [x = {x}, y = {a}] {\table};
                     \legend{$A(t)$}
\end{axis}
\end{tikzpicture}}

\caption{
The top graphs represent sample paths of the arrival times and service requirements corresponding to the processes $X(t)$ and $X_{\rm bg}(t)$. The bottom graphs represent the resulting workload process $V(t)$ and AoI process $A(t)$. The service rate $\mu$ is equal to 1.
The tagged stream has interarrival times that are Erlang with 10 phases and mean 20, whereas the service requirements are Erlang with 5 phases and mean 1. The background stream is Poissonian with arrival rate $0.9$, whereas the service requirements are Erlang with 5 phases and mean 1.}
\label{fig:X-V-A}
\end{figure}

Letting $H_n$ ($n = 0,1,\ldots$) denote the service requirement of the
$n$th packet generated by the sensor, $X(t)$ can be represented as
\[
X(t) = \sum_{n=1}^{\infty} \id{0 \leq \alpha_n \leq t} H_n. 
\]
In this paper, we let the inter-generation times pertaining to the sensor be denoted by
$(G_n)_{n=1,2,\ldots}$, and the corresponding service requirements by
$(H_n)_{n=0,1,\ldots}$. We assume that these are two sequences of independent and identically distributed
(i.i.d.) random variables, with the respective CDFs $F_G(\cdot)$ and $F_H(\cdot)$, where its is in addition assumed that both sequences are independent of each other.
On the other hand, we assume that the background traffic
$(X_{\bg}(t))_{t \geq 0}$ is a compound Poisson process: it has
upward jumps following a Poisson process with rate
$\lambda_{\bg}$, where the jump sizes are i.i.d.\ with CDF $F_{H_{\bg}}(\cdot)$.
The background traffic load is then characterized as
\[
\Pr(X_{\bg}(t) \leq x) 
=
\sum_{k=0}^{\infty}
\frac{e^{-\lambda_{\bg}t}(\lambda_{\bg} t)^k}{k!}
\cdot
F_{H_{\bg}}^{\star k}(x),
\quad
x \geq 0,
\]
where $F_{H_{\bg}}^{\star k}(\cdot)$ denotes the $k$-fold convolution of
$F_{H_{\bg}}(\cdot)$, being defined recursively via
\[
F_{H_{\bg}}^{\star 1}(x) = F_{H_{\bg}}(x),
\quad
F_{H_{\bg}}^{\star k}(x) 
= 
\int_0^x F_{H_{\bg}}^{\star k-1}(x-y)\, \dd F_{H_{\bg}}(y),
\;\;
k = 2,3,\ldots.
\]
The processes $(X(t))_{t \geq 0}$ and $(X_{\bg}(t))_{t \geq 0}$ are assumed independent.
The resulting queueing model could be denoted by GI+M/GI+GI/1, where the +
symbol stands for `superposition': a renewal input stream with generally distributed service times shares a single-server queueing resource with a Poisson input stream with generally distributed service times. As mentioned in the introduction, the Poissonian nature of the background arrivals is motivated by the fact that the number of background streams is typically large and fairly homogeneous. 

\subsection{\michel{Age of Information}}
\michel{We have now introduced all concepts needed to formally define the Age of Information. 
Observe that, in terms of the workload process,} the
system delay $D_n$ of the $n$th packet is given by
\begin{align}
D_n = \frac{V(\alpha_n-) + H_n}{\mu} 
= 
\frac{V(\alpha_n)}{\mu},
\label{eq:D_n-by-V}
\end{align}
where the second equation follows from the right-continuity of
$(V(t))_{t \geq 0}$. The AoI $A(t)$ at time $t$ is
defined as the elapsed time since the generation time of the last
packet received by the monitor (cf. Figure \ref{fig:X-V-A}):
\begin{equation}
A(t) = 
\left\{
\begin{array}{@{}l@{\quad}l}
t - A(0), & 0 \leq t < \beta_0,
\\[1ex]
t - \max\{\alpha_n;\; \alpha_n + D_n \leq t\},
&
t \geq \beta_0.
\label{eq:A(t)-def}
\end{array}
\right.
\end{equation}
Recall that $\alpha_n + D_n = \beta_n$ represents the reception time
of the $n$th packet, so that $t \geq \beta_0$ in the second case of 
(\ref{eq:A(t)-def}) is required for the set $\{\alpha_n;\; \alpha_n + D_n \leq t\}$ to be
non-empty. Also, note that the condition (\ref{eq:alpha-condition}) ensures 
$\beta_0 > 0$. Let $A_{\peak,n}$ ($n=1,2,\ldots$) denote the $n$th peak AoI, i.e.,
the value of the AoI just before the $n$th packet is received by the
monitor:
\begin{align}
A_{\peak,n} 
= 
\lim_{t \to \beta_n-} A(t) 
= 
\beta_n - \alpha_{n-1}
&=
\beta_n - \alpha_n + \alpha_n - \alpha_{n-1}
\nonumber
\\
&=
D_n + G_n,
\label{eq:A_peak-DG}
\end{align}
where the last equation follows from (\ref{eq:G_n-def}) 
and (\ref{eq:D_n-def}).

\subsection{\michel{Stationary Behavior}}
Let $G$, $H$, and $H_{\bg}$ respectively denote generic random
variables following the CDFs $F_G(\cdot)$, $F_H(\cdot)$, and
$F_{H_{\bg}}(\cdot)$. Also recall that $\lambda := 1/\E[G]$ denotes the
\yoshiaki{generation} rate of the sensor. We define the traffic intensity of
respective streams as
\[
\rho := \frac{\lambda\E[H]}{\mu},
\quad
\rho_{\bg} := \frac{\lambda_{\bg}\E[H_{\bg}]}{\mu}.
\]
In the rest of this paper, we assume that the stability condition
\begin{equation}
\rho + \rho_{\bg} < 1,
\label{eq:stability}
\end{equation}
is in place. Because the system is regenerative and the regeneration
time is almost surely (a.s.) finite under (\ref{eq:stability}), the time-averaged
distributions of $(D_n)_{n = 0,1,\ldots}$,
$(A_{\peak,n})_{n=1,2,\ldots}$, and $(A(t))_{t \geq 0}$ agree with 
their respective stationary distributions $F_D(\cdot)$, $F_{A_{\peak}}(\cdot)$,
and $F_A(\cdot)$, i.e., the following relations hold a.s.:
\begin{align}
\lim_{N \to \infty}\frac{1}{N}\sum_{n=0}^{N-1} \id{D_n \leq x}
&=
F_D(x),
\qquad
x \geq 0,
\nonumber
\\
\lim_{N \to \infty}\frac{1}{N}\sum_{n=1}^{N} \id{A_{\peak,n} \leq x}
&=
F_{A_{\peak}}(x), 
\qquad
x \geq 0,
\nonumber
\\
\lim_{T \to \infty}\frac{1}{T}\int_0^T \id{A(t) \leq x} \dd t
&=
F_A(x),
\qquad
x \geq 0.
\label{eq:A-time-avg-dist}
\end{align}
We thus define $D$, $A_{\peak}$, and $A$ as 
generic random variables corresponding to the stationary versions of $D_n$, $A_{\peak,n}$, and $A(t)$,
respectively.

\section{Exact Formula for the AoI Distribution}\label{sec:exact}

\michel{In this section we derive, in Theorem \ref{theorem:A-by-D}, an exact expression for the LST of the
stationary AoI~$A$. Importantly, we allow the tagged stream, corresponding to
the sensor-monitor pair, to have a general \michel{inter-generation} time distribution.}
\michel{Having an expression for the LST $f^*_{A}(s)$ is particularly useful in various respects: 
\begin{itemize}
    \item[$\circ$] In the first place one can use computational software to invert it (for instance based on the algorithm proposed in \cite{Abate1992}), so as to obtain a numerical approximation of the density of $A$. Also, when inverting $f^*_{A}(s)/s$ one obtains a numerical approximation of the corresponding cumulative distribution function. 
    \item[$\circ$] In the second place, the expression in Theorem~\ref{theorem:A-by-D} reveals that, distributionally, the stationary AoI $A$ can be written as the sum of $H/\mu$ and a second random variable. This second random variable has an LST that can be written as the expectation of the product of two LSTs, thus yielding an appealing interpretation; see Remark \ref{remark:A-by-D}. 
    \item[$\circ$]  In the third place, by differentiating the LST (with respect to $s$, that is) and inserting $s=0$ provides us with the moments of the stationary AoI; we will use this property in Section \ref{sec:PH}.  
\end{itemize}}

We utilize the following characterization of the stationary
distribution of the AoI \cite{Inoue2019}:
\begin{align}
F_A(x) &= 
\frac{1}{\E[G]}
\int_0^x 
\left\{\overline{F}_{A_{\peak}}(y) - \overline{F}_D(y)\right\}\dd y,
\quad
x \geq 0,
\label{eq:F_A-formula}
\\
f_A^*(s) &= \frac{f_D^*(s) - f_{A_{\peak}^*}(s)}{s\E[G]},
\quad
\michel{s \in{\mathbb C}}.
\label{eq:f_A*-formula}
\end{align}
Note here that (\ref{eq:A_peak-DG}) implies $\E[A_{\peak}] - \E[D] =
\E[G]$, with which we can verify that $F_A(\cdot)$ and 
$f_A^*(\cdot)$ given by (\ref{eq:F_A-formula}) and (\ref{eq:f_A*-formula})
satisfy $\lim_{x \to \infty} F_A(x) = 1$ and $\lim_{s \downarrow 0} f_A^*(s)
= 1$. By (\ref{eq:F_A-formula}) and (\ref{eq:f_A*-formula}), the analysis
of the stationary AoI $A$ is reduced to that of the stationary system
delay $D$ and the stationary peak AoI $A_{\peak}$. 

Let $B_{\bg}$ denote the length of a busy period of
the M/G/1 queue only with the background traffic load; 
it is a standard result from queueing theory that its 
LST $f_{B_{\bg}}^*(s)$ satisfies the {\it Kendall functional equation} \cite[Eq.\ (1)]{Abate1995}
\begin{equation}
f_{B_{\bg}}^*(s) 
= 
f_{H_{\bg}}^*\left( 
\frac{s + \lambda_{\bg} - \lambda_{\bg}f_{B_{\bg}}^*(s)}{\mu}
\right).
\label{eq:f_B_bg-def}
\end{equation}
It is easily seen that for each $s \geq 0$, 
this equation and the condition $|f_{B_{\bg}}^*(s)| \leq 1$ 
uniquely determine the value of $f_{B_{\bg}}^*(s)$; to this end, observe that 
$f_{B_{\bg}}^*(s)$ is the solution of the equation $y=f^*(y)$ with a function
\[f^*(y)=f_{H_{\bg}}^*\left( 
\frac{s + \lambda_{\bg} - \lambda_{\bg}y}{\mu}
\right), \]
which is a convex function satisfying $f^*(0)>0$ and $f^*(1)= f_{H_{\bg}}^*(s/\mu)<1$.
\begin{lemma}
\label{lemma:A_peak-by-D}

The LST of the stationary peak AoI $f_{A_{\peak}}^*(\cdot)$ is given in
terms of the LST of the stationary system delay $f_D^*(\cdot)$ by
\begin{equation}
f_{A_{\peak}}^*(s) 
= 
f_H^*(s/\mu)
\int_{y=0}^{\infty}
e^{-sy}
\mathcal{L}_{\omega}^{-1}
\left[
\frac{1}{\phi(s) - \omega}
\left(
\frac{s}{\psi(\omega)}
\cdot
f_D^*(\psi(\omega)) - f_D^*(s)
\right)
\right]\! (y)
\; \dd F_G(y),
\label{eq:A_peak-by-D}
\end{equation}
where $\phi(s)$ and $\psi(\omega)$ are defined as
\begin{align}
\phi(s) &:= s - \lambda_{\bg} + \lambda_{\bg}f_{H_{\bg}}^*(s/\mu),
\label{eq:phi-def}
\\
\psi(\omega) &:= \omega + \lambda_{\bg} - \lambda_{\bg}f_{B_{\bg}}^*(\omega),
\label{eq:psi-def}
\end{align}
and $\mathcal{L}_{\omega}^{-1}[f(s,\omega)]$ denotes the inverse
Laplace transform of $f(s,\omega)$ with respect to $\omega$:
\[
f(s,\omega)
=
\int_0^{\infty} \michel{\mathcal{L}_\omega^{-1}}[f(s,\omega)](t) e^{-\omega t} \dd t.
\]
\end{lemma}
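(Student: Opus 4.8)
The plan is to reduce the stationary peak AoI to a functional of the stationary system delay $D$ and a generic inter-generation time $G$ by analysing the workload dynamics between two consecutive tagged arrivals, and then to evaluate the resulting expectation via a transient double-transform analysis of the background-only M/G/1 workload. \textbf{Step 1 (decompose the peak AoI).} By (\ref{eq:A_peak-DG}) and (\ref{eq:D_n-by-V}),
\[
A_{\peak,n}=G_n+D_n=G_n+\frac{V(\alpha_n-)+H_n}{\mu},
\]
and since $H_n$ is independent of the pair $(G_n,V(\alpha_n-))$ it contributes the factor $f_H^*(s/\mu)$; it then remains to evaluate $\E[e^{-s(G_n+V(\alpha_n-)/\mu)}]$ in steady state.

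\textbf{Step 2 (workload between two tagged arrivals).} On the open interval $(\alpha_{n-1},\alpha_n)$ there is no tagged arrival, so there the workload evolves exactly as that of the M/G/1 queue driven by the background compound Poisson input alone, started from $V(\alpha_{n-1})=\mu D_{n-1}$ and run for the duration $G_n$. Writing $V_x(\cdot)$ for this background-only workload started at level $x$ — a Markov process, namely the reflection at $0$ of $t\mapsto x+X_{\bg}(t)-\mu t$ — and using that $G_n$ is independent of $D_{n-1}$ (the latter depending only on $G_1,\dots,G_{n-1}$, $H_0,\dots,H_{n-1}$ and the background before $\alpha_{n-1}$) as well as of the post-$\alpha_{n-1}$ background increments, I would pass to the stationary regime ($D\sim F_D$, $G\sim F_G$ and an independent copy of the background, all mutually independent) to obtain
\[
f_{A_{\peak}}^*(s)=f_H^*(s/\mu)\int_{y=0}^{\infty}e^{-sy}\,\E\!\left[e^{-(s/\mu)V_{\mu D}(y)}\right]\dd F_G(y).
\]
Comparing with (\ref{eq:A_peak-by-D}), it suffices to identify the Laplace transform in $y$ of $y\mapsto\E[e^{-(s/\mu)V_{\mu D}(y)}]$ with the bracketed expression inside $\mathcal L_\omega^{-1}$.

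\textbf{Step 3 (transient transform of the background-only workload).} For fixed $\theta,x\ge0$ I would apply the Kella--Whitt martingale (equivalently, Itô's formula for jump processes, or the classical transient solution of the M/G/1 workload) to $V_x(t)=x+X_{\bg}(t)-\mu t+L(t)$, with regulator $L(t)=\mu\int_0^t\id{V_x(u)=0}\dd u$. Taking expectations gives, with $g(t):=\E[e^{-\theta V_x(t)}]$, $p(t):=\Pr(V_x(t)=0)$ and $\kappa(\theta):=\mu\theta-\lambda_{\bg}(1-f_{H_{\bg}}^*(\theta))$ the Laplace exponent of the net input (so that $\kappa(s/\mu)=\phi(s)$),
\[
g(t)=e^{-\theta x}+\kappa(\theta)\int_0^t g(u)\dd u-\theta\mu\int_0^t p(u)\dd u,
\]
and hence, Laplace-transforming in $t$ (variable $\omega$) with $\widehat g,\widehat p$ the transforms,
\[
\widehat g(\omega)\,(\omega-\kappa(\theta))=e^{-\theta x}-\theta\mu\,\widehat p(\omega).
\]
Since $0\le g\le1$ we have $\widehat g(\omega)\le1/\omega<\infty$ for $\omega>0$; as $\kappa$ is a strictly increasing bijection of $[0,\infty)$ (a consequence of $\rho_{\bg}<1$ in (\ref{eq:stability})), evaluating the last identity at $\theta=\kappa^{-1}(\omega)$, where its left side vanishes, forces $\widehat p(\omega)=e^{-\psi(\omega)x/\mu}/\psi(\omega)$; here I use $\psi(\omega)=\mu\kappa^{-1}(\omega)$, which follows from the Kendall equation and makes $\phi,\psi$ a pair of mutual inverses. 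Substituting $\widehat p$ back, setting $\theta=s/\mu$ and $x=\mu d$, and integrating over $d\sim F_D$ (Fubini) yields
\[
\int_{t=0}^{\infty}e^{-\omega t}\,\E\!\left[e^{-(s/\mu)V_{\mu D}(t)}\right]\dd t=\frac{1}{\phi(s)-\omega}\left(\frac{s}{\psi(\omega)}f_D^*(\psi(\omega))-f_D^*(s)\right),
\]
which is precisely the argument of $\mathcal L_\omega^{-1}$ in (\ref{eq:A_peak-by-D}); inverting in $\omega$ and plugging into the display of Step 2 finishes the proof.

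\textbf{Expected main obstacle.} The delicate part is Step 3: making the Kella--Whitt/Takács identity rigorous — in particular the identification $\E[L(t)]=\mu\int_0^t p(u)\dd u$ and the integrability required for the martingale argument — and then the argument that finiteness of $\widehat g(\omega)$ pins down $\widehat p(\omega)$, which relies on $\kappa$ being invertible on $[0,\infty)$ and on reading off $\psi=\mu\kappa^{-1}$ from Kendall's equation (and checking this $\psi$ agrees with the one defined via $f_{B_{\bg}}^*$). The Fubini interchanges (expectation over $D$, integration against $F_G$, Laplace inversion) and the independence/measurability bookkeeping of Step 2 — which reduces the stationary peak-AoI transform to $\E[e^{-sG}e^{-(s/\mu)V_{\mu D}(G)}]$ with $D$, $G$ and the background mutually independent — are comparatively routine.
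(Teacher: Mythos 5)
Your proof is correct and follows the paper's strategy: decompose $A_{\peak,n}=G_n+W_n+H_n/\mu$, factor out $f_H^*(s/\mu)$, observe that over $(\alpha_{n-1},\alpha_n)$ the workload runs as a pure background M/G/1 queue started at $\mu D_{n-1}$, and then evaluate the resulting double Laplace transform of the transient M/G/1 workload. The only substantive difference is in Step 3: the paper simply invokes the known transient transform formula for the M/G/1 virtual waiting time (citing Takagi), whereas you re-derive it from first principles via the Kella--Whitt martingale, identify $\psi(\omega)=\mu\kappa^{-1}(\omega)$, and verify consistency with the Kendall functional equation. Your identification $\phi(\psi(\omega))=\omega$ and the elimination of $\widehat p(\omega)$ by evaluating at $\theta=\kappa^{-1}(\omega)$ are carried out correctly, so the derivation is sound and self-contained where the paper relies on a citation.
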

\begin{proof}
From (\ref{eq:D_n-by-V}) and (\ref{eq:A_peak-DG}), we have 
\begin{align}
A_{\peak,n} &= \frac{V(\alpha_n-) + H_n}{\mu} + G_n
= 
\hat{V}(\alpha_n-)+ G_n
+ \frac{H_n}{\mu},
\label{eq:A_peak-by-VHG}
\end{align}
where $\hat{V}(t)$ denotes the virtual waiting time:
\begin{equation}
\hat{V}(t) = \frac{V(t)}{\mu}.
\label{eq:hatV-def}
\end{equation}
Because $(X(t))_{t \geq 0}$ has no jumps in $t \in [\alpha_{n-1},
\alpha_n)$, in this interval (\ref{eq:V(t)-def}) can be rewritten as
\[
V(t) = V(\alpha_{n-1}) + X_{\bg}(t) - \mu t 
+ \mu \int_{u \in [\alpha_{n-1},t)}
\id{V(u) = 0} \,\dd u,
\quad
t \in [\alpha_{n-1}, \alpha_n),
\]
i.e., $(V(t))_{t \in [\alpha_{n-1}, \alpha_n)}$ behaves as if it is the workload process of an ordinary M/GI/1
queue with the arrival rate $\lambda_{\bg}$, the service requirement distribution
$F_{H_{\bg}}(\cdot)$, and the service rate $\mu$. 
Let $(V_{\mg}(t))_{t \geq 0}$ denote the workload process of such an
M/G/1 queue and let $\hat{V}_{\mg}(t) = V_{\mg}(t)/\mu$ denote the
corresponding virtual waiting time. It is known that
the LST of $\hat{V}_{\mg}(t)$ satisfies \cite[P.\ 83]{Takagi1991}
\begin{align}
\int_{t=0}^{\infty} 
\E\bigl[e^{-s\hat{V}_{\mg}(t)}\bigr]
e^{-\omega t} 
\dd t
&=
\frac{1}{\phi(s) - \omega}
\left(
\frac{s}{\psi(\omega)}
\cdot
f_{\hat{V}_{\mg}(0)}^*(\psi(\omega))
- f_{\hat{V}_{\mg}(0)}^*(s)
\right).
\label{eq:hatV-transient}
\end{align}
Therefore, we have from (\ref{eq:A_peak-by-VHG}),
\begin{align*}
\E[e^{-sA_{\peak,n}}]
&=
f_H^*(s/\mu)
\cdot
\E[e^{-s(\hat{V}(\alpha_n-) + G_n)}]
\\
&=
f_H^*(s/\mu)
\int_{x=0}^{\infty}
\E[e^{-s(\hat{V}(\alpha_n-) + G_n)} \mid \hat{V}(\alpha_{n-1}) = x]
\,\dd F_{\hat{V}(\alpha_{n-1})}(x)
\\
&=
f_H^*(s/\mu)
\int_{x=0}^{\infty}
\E[e^{-s(\hat{V}_{\mg}(G_n) + G_n)} \mid \hat{V}_{\mg}(0) = x]
\,\dd F_{D_{n-1}}(x)
\\
&=
f_H^*(s/\mu)
\int_{x=0}^{\infty}
\dd F_{D_{n-1}}(x)
\int_{y=0}^{\infty}
e^{-sy}\,
\E[e^{-s\hat{V}_{\mg}(y)} \mid \hat{V}_{\mg}(0) = x]
\,\dd F_G(y)
\\
&=
f_H^*(s/\mu)
\int_{y=0}^{\infty}
e^{-sy}
\left\{
\int_{x=0}^{\infty}
\E[e^{-s\hat{V}_{\mg}(y)} \mid \hat{V}_{\mg}(0) = x]
\,\dd F_{D_{n-1}}(x)
\right\}
\dd F_G(y),
\end{align*}
which together with (\ref{eq:hatV-transient}) implies (\ref{eq:A_peak-by-D}).
\end{proof}

\begin{theorem}
\label{theorem:A-by-D}

The LST $f_A^*(\cdot)$ of the stationary AoI is given in terms of the
LST $f_D^*(\cdot)$ of the stationary system delay by
\begin{equation}
f_A^*(s) 
= 
f_H^*(s/\mu)
\int_{y=0}^{\infty}
\frac{1 - e^{-sy}}{\E[G]s}
\cdot
\mathcal{L}_{\omega}^{-1}
\left[
\frac{1}{\phi(s) - \omega}
\left(
\frac{s}{\psi(\omega)}
\cdot
f_D^*(\psi(\omega)) - f_D^*(s)
\right)
\right]\! (y)
\; \dd F_G(y).
\label{eq:A-by-D}
\end{equation}
\end{theorem}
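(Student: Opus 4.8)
The plan is to combine the general AoI identity \eqref{eq:f_A*-formula} with the expression for $f^*_{A_{\peak}}(\cdot)$ obtained in Lemma~\ref{lemma:A_peak-by-D}. Specifically, \eqref{eq:f_A*-formula} reads $f_A^*(s) = (f_D^*(s) - f_{A_{\peak}}^*(s))/(s\E[G])$, so the whole task reduces to writing $f_D^*(s)$ in a form that meshes with the integral representation of $f_{A_{\peak}}^*(s)$ in \eqref{eq:A_peak-by-D}. The natural idea is to exhibit $f_D^*(s)$ itself as an integral against $\dd F_G(y)$ with the same inner kernel, so that the subtraction produces the factor $1 - e^{-sy}$ seen in \eqref{eq:A-by-D}.

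First I would record the companion identity for the stationary delay $D$. Reading off the chain of equalities in the proof of Lemma~\ref{lemma:A_peak-by-D}, the quantity $\E[e^{-sD_n}]$ (as opposed to $\E[e^{-sA_{\peak,n}}]$) is obtained by the very same conditioning argument but \emph{without} the extra factor $e^{-sG_n}$ coming from the $+G_n$ term in \eqref{eq:A_peak-by-VHG}; indeed $D_n = \hat V(\alpha_n-) + H_n/\mu$. Running the identical steps — condition on $\hat V(\alpha_{n-1}) = x$, use that $(V(t))$ on $[\alpha_{n-1},\alpha_n)$ evolves as the M/GI/1 workload, identify $F_{\hat V(\alpha_{n-1})} = F_{D_{n-1}}$, and average over $G_n$ — yields in stationarity
\begin{equation}
f_D^*(s) = f_H^*(s/\mu)\int_{y=0}^{\infty} \mathcal{L}_{\omega}^{-1}\!\left[\frac{1}{\phi(s)-\omega}\left(\frac{s}{\psi(\omega)}\cdot f_D^*(\psi(\omega)) - f_D^*(s)\right)\right]\!(y)\;\dd F_G(y),
\label{eq:D-self}
\end{equation}
i.e.\ exactly \eqref{eq:A_peak-by-D} with the $e^{-sy}$ weight removed. (One should note that in \eqref{eq:D-self} the dependence of the bracketed kernel on $y$ is only through the inverse Laplace transform $\mathcal{L}^{-1}_\omega$, whose argument does not involve $y$; so \eqref{eq:D-self} is a genuine — if implicit — fixed-point characterization of $f_D^*$, the stationary version of \eqref{eq:hatV-transient} integrated against $F_G$.)

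Then I would simply subtract: using \eqref{eq:D-self} for $f_D^*(s)$ and \eqref{eq:A_peak-by-D} for $f_{A_{\peak}}^*(s)$, the difference $f_D^*(s) - f_{A_{\peak}}^*(s)$ equals $f_H^*(s/\mu)$ times the integral over $y$ of $(1 - e^{-sy})$ against the common kernel. Dividing by $s\E[G]$ as prescribed by \eqref{eq:f_A*-formula} gives precisely \eqref{eq:A-by-D}, with the weight $(1-e^{-sy})/(\E[G]s)$ appearing as stated. I would also remark that this weight is nonnegative and integrable, and that $\int_0^\infty (1-e^{-sy})/(s\E[G])\,\dd F_G(y) \to 1$ as $s \downarrow 0$, consistent with the normalization check already noted after \eqref{eq:f_A*-formula}.

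The main obstacle is really the justification of \eqref{eq:D-self}, rather than the final algebra: one must argue that the transient identity \eqref{eq:hatV-transient} — stated for a fixed deterministic initial workload and a fixed horizon — can be integrated against the stationary delay distribution in the first argument and against $F_G$ in the horizon, with the resulting stationary-in, stationary-out consistency being exactly \eqref{eq:D-self}. This is the same limiting/Palm argument that underpins the last displayed chain in the proof of Lemma~\ref{lemma:A_peak-by-D}; the point to be careful about is that $\hat V(\alpha_n-)$ has distribution $F_D$ in stationarity (via \eqref{eq:D_n-by-V}, since $D_n = \hat V(\alpha_n-) + H_n/\mu$ and $H_n$ is independent), so that both $F_{D_{n-1}}$ on the right and the implicit $f_D^*$ inside the kernel are the \emph{same} stationary law, making \eqref{eq:D-self} a closed identity rather than a recursion across distinct distributions. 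Once that is in hand, everything else is a one-line subtraction.
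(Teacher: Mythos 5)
Your proposal is correct and takes essentially the same route as the paper: the paper's proof of Theorem~\ref{theorem:A-by-D} likewise derives the companion identity for $f_D^*(s)$ (which it labels \eqref{eq:D-recursion}, identical to your \eqref{eq:D-self}) by rerunning the conditioning argument from Lemma~\ref{lemma:A_peak-by-D} without the $e^{-sG_n}$ factor, and then combines it with \eqref{eq:f_A*-formula} and \eqref{eq:A_peak-by-D} by the same subtraction. Your additional care about justifying the stationary-in, stationary-out consistency is a reasonable elaboration of what the paper leaves implicit in its "similarly to the proof of Lemma~\ref{lemma:A_peak-by-D}" remark.
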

\begin{proof}
Similarly to the proof of Lemma \ref{lemma:A_peak-by-D}, 
we observe from $D_n = (V(\alpha_n-) + H_n)/\mu$ that its
distribution equals that of the sum of $H/\mu$ and
$\hat{V}_{\mg}(G_n)$ conditioned on $\hat{V}_{\mg}(0) = D_{n-1}$.
We thus have for the stationary system
delay $D$ (cf.\ (\ref{eq:hatV-transient})),
\begin{equation}
f_D^*(s) 
= 
f_H^*(s/\mu)
\int_{y=0}^{\infty}
\mathcal{L}_{\omega}^{-1}
\left[
\frac{1}{\phi(s) - \omega}
\left(
\frac{s}{\psi(\omega)}
\cdot
f_D^*(\psi(\omega)) - f_D^*(s)
\right)
\right]\! (y)
\; \dd F_G(y).
\label{eq:D-recursion}
\end{equation}
Therefore, we obtain (\ref{eq:A-by-D}) by combining 
(\ref{eq:f_A*-formula}), (\ref{eq:A_peak-by-D}), and
(\ref{eq:D-recursion}).
\end{proof}

\michel{Theorem \ref{theorem:A-by-D} give rise to an expression for the mean AoI, that will be relied on in Section \ref{sec:PH}, provided in the following lemma.}

\yoshiaki{
\begin{corollary}
The mean AoI $\E[A]$ \michel{can be expressed} in terms of the
LST $f_D^*(\cdot)$ of the stationary system delay \michel{via}
\begin{align}
\E[A] 
&=
\frac{\E[H]}{\mu}
+
\frac{\E[G^2]}{2\E[G]}
+
\int_{y=0}^{\infty}
q(y) \cdot \frac{y}{\E[G]} \, \dd F_G(y),
\label{eq:EA-by-q(y)-0}
\end{align}
where $q(t)$ is given by
\begin{align}
q(t) :=& \:\E\bigl[\hat{V}_{\mg}(t) \mid \hat{V}_{\mg}(0) \mbox{ is distributed as $D$}\bigr]
\label{eq:q-as-mean}
\\
=&\:
\mathcal{L}_{\omega}^{-1}\left[
-\frac{1-\rho_{\bg}}{\omega^2}
+
\frac{1}{\omega}
\left(
\frac{f_D^*(\psi(\omega))}{\psi(\omega)}
+\E[D]
\right)
\right].
\label{eq:q-t-inv}
\end{align}

\end{corollary}
\begin{proof}

We rewrite (\ref{eq:A-by-D}) as
\begin{equation}
f_A^*(s) 
= 
f_H^*(s/\mu)
\int_{y=0}^{\infty} \frac{1 - e^{-sy}}{\E[G]s} \cdot p(s,y)
\; \dd F_G(y),
\label{eq:f_A-by-p-st}
\end{equation}
where $p(s,t)$ is defined as
\[
p(s,t) 
:= 
\mathcal{L}_{\omega}^{-1}
\left[
\frac{1}{\phi(s) - \omega}
\left(
\frac{s}{\psi(\omega)}
\cdot
f_D^*(\psi(\omega)) - f_D^*(s)
\right)
\right]\! (t)
\]
Note here that $p^*(s, t)$ represents the LST of the transient virtual waiting time
of the M/GI/1 queue \michel{{\it with only the background traffic}}, whose initial
value follows the same distribution as the stationary system delay $D$
(cf.\ (\ref{eq:hatV-transient})):
\begin{equation}
p(s,t)
=
\E\bigl[e^{-s\hat{V}_{\mg}(t)} \mid \hat{V}_{\mg}(0) \mbox{ is
distributed as $D$}\bigr].
\label{p(s,t)-by-Z}
\end{equation}
Noting that
\[
q(t) = \lim_{s \to 0+} (-1)\cdot \frac{\dd p(s,t)}{\dd s},
\]
we have from (\ref{eq:f_A-by-p-st}), 
\begin{align*}
\E[A] &= \lim_{s \to 0+} (-1)\cdot \frac{\dd f_A^*(s)}{\dd s}
\\
&=
\frac{\E[H]}{\mu}
+
\frac{1}{\E[G]}
\int_{y=0}^{\infty}
\left( \frac{y}{2} + q(y)\right)y \, \dd F_G(y).
\end{align*}
Therefore, we obtain (\ref{eq:EA-by-q(y)-0}) and (\ref{eq:q-as-mean}).
In addition, (\ref{eq:q-t-inv}) is shown as follows:
\begin{align*}
\int_0^{\infty} q(y) e^{-\omega y} \dd y
&=
(-1)\cdot 
\int_0^{\infty} \lim_{s \to 0+}\frac{\partial p(s,y)}{\partial s} \cdot e^{-\omega y} \dd y
\\
&=
(-1)\cdot \lim_{s \to 0+} \frac{\partial p^*(s,\omega)}{\partial s}
=
-\frac{1-\rho_{\bg}}{\omega^2}
+
\frac{1}{\omega}
\left(
\frac{f_D^*(\psi(\omega))}{\psi(\omega)}
+\E[D]
\right).
\qedhere
\end{align*}
\end{proof}
}

\yoshiaki{
\begin{remark}
\label{remark:A-by-D}

\michel{The representation for $f_A^*(s)$ has an appealing interpretation, in terms of a decomposition. Indeed, we can} rewrite expression (\ref{eq:A-by-D}) as
\begin{align*}
f_A^*(s) 
&= 
f_H^*(s/\mu)
\int_{y=0}^{\infty}
\frac{1 - e^{-sy}}{sy}
\cdot
\mathcal{L}_{\omega}^{-1}
\left[
\frac{1}{\phi(s) - \omega}
\left(
\frac{s}{\psi(\omega)}
\cdot
f_D^*(\psi(\omega)) - f_D^*(s)
\right)
\right]\! (y)
\; \frac{y\,\dd F_G(y)}{\E[G]}
\\
&=
f_H^*(s/\mu)
\int_{y=0}^{\infty}
\E[e^{-s \mathrm{Unif}(0,y)}]
\cdot
\E[e^{-s Z(y; D)}] \; \dd F_{\check{G}}(y),
\end{align*}
where $\mathrm{Unif}(0,y)$ denotes \michel{a uniformly distributed} random variable on
$[0,y]$, \[Z(y; D) := [\hat{V}_{\mg}(y) \mid \hat{V}_{\mg}(0) = D]\]
denotes the transient virtual waiting time of the M/GI/1 queue with
initial workload $D$, and $\check{G}$ denotes the length-biased version
of the inter-generation time $G$.
This alternative expression shows that, \michel{conditional on} a value of
$\check{G}$, the stationary AoI $A$ \michel{can be} decomposed into the sum of
three (conditionally) independent random variables. \michel{Indeed, we have the distributional equality}
\[
[A \,|\, \check{G}=y] =_{\st} \frac{H}{\mu} + \mathrm{Unif}(0,y) + Z(y;D),
\]
\michel{where `$=_{\st}$' denotes that both sides of the equation have the same distribution,} and 
where $\check{G}$ and the terms on the right-hand side of this
equation are mutually independent.
In particular, we have \michel{for the mean that}
\begin{align*}
\E[A] &= 
\frac{\E[H]}{\mu} + \E[\mathrm{Unif}(0,\check{G})] + \E[Z(\check{G};D)]
\\
&= 
\frac{\E[H]}{\mu} + \frac{\E[G^2]}{2\E[G]} + \E[Z(\check{G};D)].
\end{align*}
Therefore, \michel{the} analysis of the \michel{mean} stationary AoI $A$ in this model is
essentially reduced to that of \michel{${\mathbb E}[Z(y; D)]$}, i.e., \michel{the expected value of the transient virtual
waiting time} of the M/GI/1 queue after the generation of a packet from
the tagged stream.
\end{remark}
}

The LST of the stationary AoI is thus given in terms of that of the
stationary system delay $D$. The characterization of $D$ is known in
the literature \cite{Hooke1972,Ott1984}, which takes a form of an
intuitively appealing decomposition formula:
\begin{lemma}[\!\!{\cite{Hooke1972,Ott1984}}]
\label{lemma:D-decomp}
Consider an GI/GI/1 queue which has inter-arrival times following the CDF
$F_G(\cdot)$ and service times with LST 
$f_{H_{\star}}^*(s) := f_H^*( (s+\lambda_{\bg}-\lambda_{\bg}f_{B_{\bg}}^*(s))/\mu )$ ($s \geq 0$).
Let $W_{\star}$ denote the stationary waiting time in this GI/GI/1 queue:
\begin{equation}
f_{W_{\star}}^*(s)
=
\exp\left[
-\sum_{k=1}^{\infty} \frac{1}{k}
\int_0^{\infty}
(1-e^{-sx})\, \dd F_{U_k}(x)
\right],
\quad
s \geq 0,
\label{eq:f_W_star}
\end{equation}
where $U_k$ denotes the sum of $k$ i.i.d.\ copies of the difference
$H_{\star} - G$ of the inter-arrival and service times, i.e.,
\begin{align}
F_{U_1}(x) 
&= 
\int_0^{\infty} \Pr( \michel{H_{\star}\leq x+y})\, \dd F_G(y),
\;\;
x \in (-\infty, \infty),
\label{eq:U_1-def}
\\
F_{U_k}(x)
&=
\int_{-\infty}^{\infty}
F_{U_{k-1}}(x-y)\,
\dd F_{U_1}(y),
\;\;
x \in (-\infty, \infty),
\,
k = 2,3,\ldots.
\label{eq:U_k-def}
\end{align}
We then have for the original model with GI/GI and M/GI input
streams, the LST of the system delay $D$ experienced by the GI/GI
stream is given by
\[
f_D^*(s) 
= 
\frac{(1-\rho_{\bg})s}
{s-\lambda_{\bg}+\lambda_{\bg}f_{H_{\bg}}^*(s/\mu)}
\cdot
f_{W_\star}^*(s - \lambda_{\bg}+\lambda_{\bg}f_{H_{\bg}}^*(s/\mu))
\cdot f_H^*(s/\mu).
\]
\end{lemma}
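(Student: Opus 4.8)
Since the two cited papers establish this as a classical workload decomposition, the plan is to recover it from the $\mathrm{M/GI/1}$ busy-period structure that is already implicit in our setup, and then to indicate an alternative transform-based verification.

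I would start by reducing the statement to one about the virtual waiting time $Y:=V(\alpha-)/\mu$ seen by a tagged arrival. Under FCFS, $D_n=(V(\alpha_n-)+H_n)/\mu$ by (\ref{eq:D_n-by-V}), with $H_n$ independent of $V(\alpha_n-)$; moreover, as already observed in the proof of Theorem~\ref{theorem:A-by-D}, between two consecutive tagged arrivals the workload evolves exactly as that of the background-only $\mathrm{M/GI/1}$ queue, so in stationarity $D\stackrel{d}{=}Y+H/\mu$ with independent summands, and $Y_n\stackrel{d}{=}\hat V_{\mg}(G_n)$ started from $\hat V_{\mg}(0)=D_{n-1}$, with $G_n$ independent of the background. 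The target becomes $f_Y^*(s)=\frac{(1-\rho_{\bg})s}{\phi(s)}\,f_{W_\star}^*(\phi(s))$, and I would prove it by exhibiting $Y$ as the independent sum of an ``ambient'' background part and a ``tagged-related'' part.

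For the tagged-related part I would use the completion-time / sub-busy-period device: processing an initial workload $w$ while fresh background work keeps piggybacking is an $\mathrm{M/GI/1}$ busy period whose length has LST $e^{-(w/\mu)\psi(s)}$ --- the case $w=H_{\bg}$ being precisely the Kendall functional equation, and $\psi=\phi^{-1}$ by (\ref{eq:phi-def})--(\ref{eq:psi-def}). Decomposing $\hat V_{\mg}(G_n)$ into the residual of its initial condition $D_{n-1}$ and the part generated by background arriving in $[\alpha_{n-1},\alpha_n)$, and iterating the recursion, one recognizes that the tagged-related component accumulates exactly as the workload built from the tagged inter-arrivals $G$ and the completion times $H_\star$ (note $f_{H_\star}^*(s)=f_H^*(\psi(s)/\mu)$, as in the statement); its stationary law is therefore the stationary waiting time $W_\star$ of the auxiliary $\mathrm{GI/GI/1}$ queue given by Spitzer's formula (\ref{eq:f_W_star}), but read on the ``raw-work'' scale rather than the ``expanded'' scale on which background busy periods are stretched out --- which replaces the argument $s$ by $\phi(s)=\psi^{-1}(s)$ and yields LST $f_{W_\star}^*(\phi(s))$. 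The remaining ``ambient'' part is fresh background workload, and since the tagged renewal stream is independent of the background compound-Poisson input, an ergodic/mixing argument identifies its stationary law with the time-stationary workload of the background-only $\mathrm{M/GI/1}$ queue, i.e.\ the Pollaczek--Khinchine law with LST $\frac{(1-\rho_{\bg})s}{\phi(s)}$, independently of the tagged-related part. Combining with the independent $H/\mu$ then gives the product in the statement.

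The main obstacle is the construction in the previous paragraph: rigorously splitting the workload seen by a tagged arrival into a ``tagged-related'' piece carrying the $W_\star$-structure and an independent ``ambient'' Pollaczek--Khinchine piece, and justifying that independence. This is the substantive content of the Hooke/Ott argument; everything else is bookkeeping about the maps $\phi,\psi$. An alternative that avoids the decomposition altogether is purely computational: substitute the claimed product form for $f_D^*$ into the fixed-point identity (\ref{eq:D-recursion}) and verify it directly, using $\phi\circ\psi=\mathrm{id}$, the Kendall equation, and the multiplicative identity satisfied by $f_{W_\star}^*$ in (\ref{eq:f_W_star}); that route then additionally needs the (non-trivial) fact that (\ref{eq:D-recursion}) admits a unique solution in the class of LSTs, which is why the decomposition-based proof is the cleaner one to present.
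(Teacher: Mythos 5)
Your plan targets exactly the decomposition that the paper attributes to Hooke (1972) and Ott (1984), and the skeleton you lay out is the right one: the sub-busy-period (completion-time) device with $e^{-(w/\mu)\psi(s)}$, the Kendall equation identifying $f_{B_{\bg}}^*$, the inverse pair $\phi\circ\psi=\mathrm{id}$, the auxiliary service time $H_\star$ with LST $f_{H_\star}^*(s)=f_H^*(\psi(s)/\mu)$, the scale change $s\mapsto\phi(s)$ explaining the argument of $f_{W_\star}^*$, and the independent Pollaczek--Khinchine factor $(1-\rho_{\bg})s/\phi(s)$. The difference is that the paper's own proof does none of this construction: it simply cites \cite{Hooke1972,Ott1984} for the decomposition of the virtual workload and adds the one-line observation that the result for the time-stationary $\hat V(t)$ and $V$ carries over to the arrival-conditional quantity $\hat V(\alpha_n-)$ that is needed here. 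In other words, the gap you candidly flag --- ``rigorously splitting the workload seen by a tagged arrival into a tagged-related $W_\star$ piece and an independent ambient PK piece, and justifying that independence'' --- is precisely the step the paper outsources to those references, so you are not missing anything the paper supplies; you are trying to reconstruct the cited papers' content. Two small cautions on that reconstruction: (i) the ambient factor is the \emph{stationary} background M/GI/1 workload, not the ``fresh background work arrived in the last inter-sampling interval,'' and its independence from the tagged-related part comes from the Poisson/sub-busy-period structure rather than from ``an ergodic/mixing argument''; (ii) the alternative verification route via the fixed-point identity \eqref{eq:D-recursion} does need a uniqueness argument, as you note, and the cleanest way to get it is from the regenerative (unique stationary law) structure already invoked in Section~\ref{sec:model}, not from the transform equation itself. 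Overall the proposal is sound as a roadmap to the Hooke/Ott result and is consistent with, but strictly more ambitious than, what the paper actually proves.
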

\begin{remark}
The expression (\ref{eq:f_W_star}) for the stationary waiting time
$W_{\star}$ in the GI/GI/1 queue can be found in \cite[P.\ 280]{Cohen1982}.
Its mean $\E[W_{\star}]$ is also given explicitly by \cite[P.\ 232]{Asmussen2003}:
\begin{equation}
\E[W_{\star}] = \sum_{k=1}^{\infty}\frac{1}{k} \int_0^{\infty} x \,\dd
F_{U_k}(x).
\label{eq:E_W_star}
\end{equation}
\end{remark}
\begin{proof}
Lemma \ref{lemma:D-decomp} readily follows from
\cite{Hooke1972,Ott1984}. Note that although these papers show 
the decomposition formula for only the time-depedent
and stationary virtual waiting time $\hat{V}(t)$ and $V$, their
discussion clearly carries over to the virtual waiting time just before
arrivals of the GI/GI stream.
\end{proof}

\begin{remark}
\label{remark:AoI-priorwork}
\michel{
In the prior work \cite{Inoue2019} a {\em single-source} model was considered, thus facilitating and analysis that intensively uses the Lindley recursion; see e.g.\ \cite[Ch.\ I, Eq.\ (5.11)]{Asmussen2003}. In the setup of the present paper a significant complication arose, in that we could not use this Lindley-type equation for deriving the peak AoI, since we needed to account for the packets from the background stream between two successive arrivals from the tagged stream. The analysis presented above resolves this issue.}
\end{remark}

Theorem \ref{theorem:A-by-D} and Lemma \ref{lemma:D-decomp} completely
determine the AoI distribution. In particular, \michel{they allow us to evaluate} the
value of $f_A^*(s)$ and its derivatives numerically, which enables us
to compute the moments and the CDF $F_A(\cdot)$ of the stationary 
AoI; we refer to e.g.\ \cite{Abate1992} for Laplace inversion methods to numerically compute the corresponding CDF.
\michel{This being said, these formulas do not provide explicit insight 
into the way the model parameters affect the AoI performance. 
In the next section, we explore a more explicit characterization utilizing
stochastic ordering techniques, with a focus on the mean stationary AoI.}

\section{Stochastic Orderings and Bounds \michel{on} the AoI}\label{sec:order}

\yoshiaki{
In this section, we derive stochastic comparison results and
closed-form bounds for the AoI \michel{by applying} stochastic ordering
techniques. 
\michel{The results of Section \ref{sec:exact} describe the full distribution of the AoI, but it requires numerical inversion to obtain the density or the cumulative distribution function. An important asset of the results of the present section is that the bounds provide {\it explicit} insight into the impact of the various model parameters on the mean AoI. 
This is even more relevant in light of the fact that, as backed by the experiments presented in Section \ref{sec:num}, the derived {\it upper bound} is typically remarkably close to the true value of the mean AoI.}

\yoshiaki{\michel{We start the section}
by presenting \michel{its} main results. \michel{The first result, Theorem~\ref{theorem:A-bound}, relies on the concept of {\it stochastic ordering}.}
\michel{With $Y_0$ and $Y_1$ denoting} non-negative random variables,
$Y_0$ is said to be smaller than $Y_1$ in \emph{the usual stochastic
order} (denoted by $Y_0 \leq_{\st} Y_1$) when
\begin{equation}
\overline{F}_{Y_0}(x) \leq \overline{F}_{Y_1}(x),
\quad
\forall x \geq 0.
\label{eq:u-st}
\end{equation}
It is known that $Y_0 \leq_{\st} Y_1$ if and only if
$\E[h(Y_0)] \leq \E[h(Y_1)]$ holds for all non-decreasing functions
$h(\cdot)$ such that the expectations exist \cite{Shaked2007}.
\michel{Theorem~\ref{theorem:A-bound}, to be proven in in Section \ref{ssec:stoch-comp}, provides (surprisingly simple) distributional
bounds on the stationary AoI.} 
\begin{theorem}
\label{theorem:A-bound}
The stationary AoI $A$ is bounded by the sum of independent random variables as
\begin{equation}
D^- + \widetilde{G} \leq_{\st} A \leq_{\st} D + \widetilde{G},
\label{eq:A-dist-bounds}
\end{equation}
where $D$ denotes the stationary system delay as before, $D^-$ denotes
the system delay in the ordinary M/GI/1 queue with
only the background traffic existing, and $\widetilde{G}$ denotes the 
generic random variable for residual \michel{inter-generation} times:
\begin{align*}
f_{D^-}^*(s) &= \frac{(1-\rho_{\bg})s}
{s-\lambda_{\bg}+\lambda_{\bg}f_{H_{\bg}}^*(s/\mu)},
\\
f_{\widetilde{G}}(x) &= \frac{\overline{F}_G(x)}{\E[G]},
\quad
f_{\widetilde{G}}^*(s) = \frac{1 - f_G^*(s)}{s\E[G]}.
\end{align*}
\end{theorem}

To understand the intuition behind the bounds \michel{in} (\ref{eq:A-dist-bounds}), \michel{we}
refer to (\ref{eq:A_peak-DG}) and (\ref{eq:f_A*-formula}):
from these equations \michel{it is seen} that if the inter-generation time
$G_n$ and the system delay $D_n$ \textit{were} independent, the
right-hand side of (\ref{eq:f_A*-formula}) reduces to
\[
\frac{f_D^*(s) - f_D^*(s)f_G^*(s)}{s\E[G]} 
=
f_D^*(s) \cdot \frac{1 -f_G^*(s)}{s\E[G]} 
=
f_{D+\widetilde{G}}^*(s).
\]
This means that the construction of the lower bound $A \geq_{\st} D^- +
\widetilde{G}$ in (\ref{eq:A-dist-bounds}) is relatively straightforward,
because it \michel{amounts} to ignoring the traffic load \michel{caused} by the
tagged stream, which decreases the magnitude of the system delay $D_n$,
while making $G_n$ and $D_n$ independent.

On the other hand, the upper bound $A \leq_{\st} D + \widetilde{G}$ is
far from trivial, because it states that just ignoring the dependence of
$G_n$ and $D_n$, with their magnitude retained, yields an upper bound \michel{on} the stationary AoI $A$.
We note that a similar inequality \michel{was} established in \cite{Inoue2019} 
for the mean AoI $\E[A]$ in a single-source GI/GI/1
queue, which is based on the well-known expression \cite{Kaul2012}:
\[
\E[A] = \frac{\E[G_nD_n]}{\E[G]} + \frac{\E[G^2]}{2\E[G]} 
=
\E[D] + \E[\widetilde{G}] + \frac{\Cov[G_n,D_n]}{\E[G]}.
\]
In \cite{Inoue2019}, it is shown that $G_n$ and $D_n$ are negatively
correlated (i.e., $\Cov[G_n,D_n] \leq 0$), so that $\E[A] \leq \E[D]
+\E[\widetilde{G}]$ holds for the single-source GI/GI/1 queue.
\michel{There are two crucial differences with our result, though.
In the first place, the proof of the result in \cite{Inoue2019}} heavily relies on the Lindley recursion for the
GI/GI/1 queue, which is not applicable to our model with the background
stream. Furthermore, the upper bound (\ref{eq:A-dist-bounds}) is
stronger than \michel{the one that appeared in \cite{Inoue2019}},} as it proves \michel{a {\it distributional} bound}: it provides not only \michel{an upper bound on} the mean AoI $\E[A]$ itself, but also \michel{an upper bound on} the non-linear AoI
penalty $\E[g(A)]$ for any (non-decreasing) AoI penalty function $g$.

\medskip

\michel{We now move to the second main result of this section: 
utilizing the stochastic ordering result of Theorem \ref{theorem:A-bound}, Theorem~\ref{theorem:A-bound-mean} provides} a
closed-form bound on the mean AoI by focusing on the case where
\textit{\michel{inter-generation} times have relatively small
variability}. Note that choosing less variable \michel{inter-generation} times is reasonable
because it leads to the reduction in the AoI as known in the literature \cite{Kaul2012,Inoue2019}.
Specifically, we derive a simpler bound on the mean AoI assuming
that the \michel{inter-generation} time distribution is `new better than used in
expectation' (NBUE), i.e.,
\[
\E[G-y \,|\, G > y] \leq \E[G],
\quad
\forall y \geq 0.
\]
Intuitively, for NBUE generation intervals, the expected residual
generation time does not increase as time passes.
The class of NBUE distributions includes gamma distributions with
shape parameter not smaller than $1$ (including the exponential
distribution) and the deterministic distribution,
which are typical choices when modeling \michel{inter-generation} times in
monitoring systems.

\begin{theorem}\label{theorem:A-bound-mean}
If the \michel{inter-generation} time distribution $F_G(\cdot)$ is NBUE, 
we have
\begin{align*}
\frac{\lambda_{\bg}\E[H_{\bg}^2]}{2(1-\rho_{\bg})\mu^2}
+ 
\frac{\E[H]}{\mu}
+
\frac{\E[G^2]}{2\E[G]}
\leq \E[A] \leq 
\frac{\lambda\E[\yoshiaki{H^2}]+\lambda_{\bg}\E[H_{\bg}^2]}{2(1-\rho-\rho_{\bg})\mu^2}
+ 
\frac{\E[H]}{\mu}
+ 
\frac{\E[G^2]}{2\E[G]}.
\end{align*}

\end{theorem}

As mentioned above, the \textit{upper} bound in Theorem \ref{theorem:A-bound-mean}
is particularly relevant as it is typically remarkably close to
the true value of the mean AoI, \michel{a claim corroborated by the experiments presented in Section \ref{sec:num}.}

\medskip

The rest of this section is organized as follows.
In Section \ref{ssec:stoch_orders}, we briefly review the
definitions of \michel{the types of stochastic orderings that we use}. 
In Section \ref{ssec:stoch-comp}, we prove several stochastic ordering
properties of the AoI, which leads to its distributional bounds,
\yoshiaki{\michel{most notably} Theorem \ref{theorem:A-bound} presented above.}
We then derive closed-form bounds on the mean AoI, \yoshiaki{including
Theorem \ref{theorem:A-bound-mean}}, in Section \ref{ssec:bounds}.
}

\if0
\michel{In particular, Theorem \ref{theorem:A-bound-mean} provides upper and lower bounds on ${\mathbb E}[A]$ that are explicit expressions in terms of the model primitives, which further simplify in Theorem \ref{theorem:A-bound-mean} in case the inter-generation time $G$ has a distribution that is `new better than used in expectation'.}
\fi

\subsection{Stochastic \michel{Ordering}}
\label{ssec:stoch_orders}

\yoshiaki{We briefly review} the definitions of some stochastic orders (see
\cite{Shaked2007} for more detailed explanations).
Let $Y_0$ and $Y_1$ denote two non-negative random valiables.
$Y_0$ is said to be smaller than $Y_1$ in \emph{the
increasing convex order} (denoted by $Y_0 \leq_{\icx} Y_1$) when
\begin{equation}
\int_x^{\infty} \overline{F}_{Y_0}(y)\,\dd y 
\leq 
\int_x^{\infty} \overline{F}_{Y_1}(y)\,\dd y,
\quad
\forall x \geq 0.
\label{eq:icx-def-int}
\end{equation}
As the name suggests, $Y_0 \leq_{\icx} Y_1$ if and only if 
$\E[h(Y_0)] \leq \E[h(Y_1)]$ holds for all non-decreasing convex
functions $h(\cdot)$ such that the expectations exist.
Similarly, $Y_0$ is said to be smaller than $Y_1$ in \emph{the
increasing concave order} (denoted by $Y_0 \leq_{\icv} Y_1$) when
\[
\int_0^x \overline{F}_{Y_0}(y)\,\dd y 
\leq 
\int_0^x \overline{F}_{Y_1}(y)\,\dd y,
\quad
\forall x \geq 0,
\]
which is equivalent to that $\E[h(Y_0)] \leq \E[h(Y_1)]$ holds for all
non-decreasing concave functions $h(\cdot)$ such that the expectations exist.

Furthermore, $Y_0$ is said to be smaller than $Y_1$ in \emph{the
convex order} (denoted by $Y_0 \leq_{\cx} Y_1$) when 
\[
Y_0 \leq_{\icx} Y_1\quad \mbox{and}\quad \E[Y_0] = \E[Y_1].
\]
The convex ordering $Y_0 \leq_{\cx} Y_1$ holds if and only if $\E[h(Y_0)] \leq \E[h(Y_1)]$
for all convex functions $h(\cdot)$ such that the expectations exist.
Because $\E[Y_i] = \int_0^{\infty} \overline{F}_{Y_i}(y)\,\dd y$, it is
readily seen that $Y_0 \leq_{\cx} Y_1$ is also equivalent to
$Y_0 \geq_{\icv} Y_1$ and $\E[Y_0] = \E[Y_1]$; note that the
inequalities have opposite directions here.

While the usual stochastic order $\leq_{\st}$ \yoshiaki{(defined 
in (\ref{eq:u-st}))} compares the magnitude of random variables, the
convex order $\leq_{\cx}$ compares random
variables in terms of the variability given the same mean.
The increasing convex and increasing concave orders compare both the
magnitude and variability. Specifically, $Y_0 \leq_{\icx} Y_1$ implies
that $Y_0$ is smaller and less variable than $Y_1$, while 
$Y_0 \leq_{\icv} Y_1$ implies that $Y_0$ is smaller but more variable
than $Y_1$.

Notice however that $\leq_{\st}$ is stronger than $\leq_{\icx}$ in the sense that 
$Y_0 \leq_{\st} Y_1 \Rightarrow Y_0 \leq_{\icx} Y_1$ (also we have
$Y_0 \leq_{\st} Y_1 \Rightarrow Y_0 \leq_{\icv} Y_1$).
On the other hand, if $Y_0 \leq_{\st} Y_1$ and $Y_0 \leq_{\cx} Y_1$,
then $Y_0 =_{\st} Y_1$ \cite[Th.\ 1.A.8]{Shaked2007}, where $=_{\st}$
denotes the equality in distribution \yoshiaki{(cf.\ Remark \ref{remark:A-by-D})}. 
Therefore, the usual stochastic
order and the convex order compare random variables from essentially
different viewpoints.

\subsection{Distributional Bounds for the AoI}
\label{ssec:stoch-comp}

\michel{In this subsection we prove Theorem \ref{theorem:A-bound}.}
As mentioned in the proof of Lemma \ref{lemma:A_peak-by-D}, 
the virtual waiting time process $(\hat{V}(t))_{t \in [\alpha_{n-1},\alpha_n)}$ during
an \michel{inter-generation} time behaves in the same way as the ordinary M/GI/1
queue with the arrival rate $\lambda_{\bg}$ and the service time distribution
$F_{\hat{H}_{\bg}}(\cdot)$, where $\hat{H}_{\bg} =_{\st} H/\mu$.
Recall that the virtual waiting time and the system delay are related as 
$D_n = \hat{V}(\alpha_{n})$.  We then define $(Z(t; D_n))_{t \geq 0}$ as the
virtual waiting time process of the M/GI/1 queue with only the
background traffic, whose initial value equals to the system
delay $D_n$:
\begin{equation}
\overline{F}_{Z(t; D_n)}(x)
=
\int_0^{\infty} 
\Pr(\hat{V}_{\mg}(t) > x \mid \hat{V}_{\mg}(0) = x_0)
\,\dd F_{D_n}(x_0),
\quad
x \geq 0.
\label{eq:F_Z-tx-orig-def}
\end{equation}
To make the notation simpler, we define
\begin{equation}
\overline{F}_{Z}(t, x; D_n)
=
\overline{F}_{Z(t; D_n)}(x), 
\quad
x \geq 0.
\label{eq:F_Z-tx-def}
\end{equation}
Let $W_n$ denote the waiting time of the $n$th packet 
(see (\ref{eq:D_n-by-V}) and (\ref{eq:hatV-def})): 
\[
W_n = \frac{V(\alpha_n-)}{\mu} = \hat{V}(\alpha_n-).
\]
By definition, we have 
\begin{align}
\overline{F}_{W_n}(x) 
&= 
\int_0^{\infty}
\overline{F}_{Z}(t, x; D_{n-1})\,
\dd F_G(t),
\quad
n = 1,2,\ldots.
\label{eq:D_n-recursion-by-F}
\end{align}
Note that (\ref{eq:D_n-recursion-by-F}) along with 
\begin{equation}
D_n = W_n + \frac{H_n}{\mu},
\label{eq:D_n-by-W_n}
\end{equation}
defines a recursive relation for $D_n$ ($n=0,1,\ldots$).

Imposing the following assumption on the initial virtual waiting time
$\hat{V}(0)$ largely simplifies the derivation of stochastic
comparison results. It is by no means restrictive, as such an assumption on $\hat{V}(0)$ does not affect any of the
results on stationary distributions.
\begin{assumption}
\label{asm:V(0)-mg1}
The initial virtual waiting time $\hat{V}(0)$ is distributed according
to the stationary virtual waiting time distribution in the M/GI/1
queue with only the background traffic existing, i.e.,
\[
\E[e^{-s\hat{V}(0)}] = \frac{(1-\rho_{\bg})s} {s-\lambda_{\bg}+\lambda_{\bg}f_{H_{\bg}}^*(s/\mu)},
\quad
s > 0.
\]
\end{assumption}
\begin{remark}
Assumption \ref{asm:V(0)-mg1} represents the situation that the system
had been running only with the background traffic until the first
generation time $\alpha_0$ of the packet by the sensor.
\end{remark}

Under this assumption, the following monotonicity property is known in the literature \cite{Ott1984}.
\begin{lemma}[\!\!{\cite[Th.\ 1.1]{Ott1984}}]
\label{lemma:Z-decreasing}

If Assumption \ref{asm:V(0)-mg1} is satisfied, then $Z(t; D_n)$ is
stochastically decreasing in $t$ for each $n$, i.e.,
\[
\overline{F}_Z(t_1,x; D_n) \geq \overline{F}_Z(t_2,x; D_n),
\quad
0 \leq t_1 \leq t_2,
\;\;
x \geq 0,
\;\;
n = 1,2,\ldots.
\]
\end{lemma}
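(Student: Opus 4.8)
The plan is to reduce the assertion to the classical monotonicity theorem for the transient workload of an M/GI/1 queue, \cite[Th.\ 1.1]{Ott1984}, whose content is: if the initial virtual waiting time of an M/GI/1 queue is stochastically larger than (or equal to) its stationary virtual waiting time, then the virtual waiting time at time $t$ is stochastically decreasing in $t$. By (\ref{eq:F_Z-tx-orig-def}), $(Z(t;D_n))_{t\geq 0}$ is precisely the virtual waiting time process of the background-only M/GI/1 queue with initial value distributed as $D_n$, so the whole task is to verify the hypothesis: writing $\hat V_{\infty}$ for a random variable with the LST appearing in Assumption \ref{asm:V(0)-mg1} (the stationary background-only virtual waiting time), it suffices to show $D_n \geq_{\st} \hat V_{\infty}$ for every $n=1,2,\ldots$.

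To prove $D_n\geq_{\st}\hat V_{\infty}$, first I would put on the same probability space the auxiliary process $(\hat V_{\bg}(t))_{t\geq 0}$ obtained by running the \emph{background-only} M/GI/1 dynamics with the same Poisson background input $X_{\bg}$ and the same initial value $\hat V(0)$; under Assumption \ref{asm:V(0)-mg1} this process is stationary, hence $\hat V_{\bg}(t)=_{\st}\hat V_{\infty}$ for every fixed $t\geq 0$ (a deterministic time is a.s.\ not a jump epoch, so the left limit has the same law). Since the tagged input only adds nonnegative work and Skorokhod reflection at $0$ is monotone in the net-input process, one has the pathwise domination $\hat V(t)\geq\hat V_{\bg}(t)$ for all $t\geq 0$. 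The key observation is that the generation epochs $(\alpha_n)$ are measurable with respect to $(\alpha_0,G_1,G_2,\ldots)$, which is independent of the background input and hence of $(\hat V_{\bg}(t))_{t\geq 0}$; conditioning on $\alpha_n=t$ therefore gives $\hat V_{\bg}(\alpha_n-)=_{\st}\hat V_{\infty}$, and combining with the pathwise inequality yields $W_n=\hat V(\alpha_n-)\geq_{\st}\hat V_{\infty}$. Finally $D_n=W_n+H_n/\mu\geq W_n$ pathwise, so $D_n\geq_{\st}\hat V_{\infty}$. (An equivalent route is induction on $n$ via the recursion (\ref{eq:D_n-recursion-by-F})--(\ref{eq:D_n-by-W_n}), using for the inductive step that the M/GI/1 map $x\mapsto Z(\,\cdot\,;x)$ is stochastically monotone in its initial condition and preserves the stationary law.)

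With $D_n\geq_{\st}\hat V_{\infty}$ established, applying \cite[Th.\ 1.1]{Ott1984} to the background-only M/GI/1 queue with initial law equal to that of $D_n$ gives $\overline F_Z(t_1,x;D_n)\geq\overline F_Z(t_2,x;D_n)$ for $0\leq t_1\leq t_2$ and all $x\geq 0$, which is the claim. The main obstacle is not this reduction but \cite[Th.\ 1.1]{Ott1984} itself: a self-contained proof of the M/GI/1 monotonicity would require the reflection (Lindley-type) representation of the transient workload, a time-reversal argument identifying the reflection term with $\sup_{0\le u\le t}(X_{\bg}(u)/\mu-u)$, which increases to $\hat V_{\infty}$, and a coupling against a stationary copy of the process; as we invoke it as a known result, the only genuinely new bookkeeping is the independence between $(\alpha_n)$ and the background workload that makes $\hat V_{\bg}(\alpha_n-)$ stationary even though $\alpha_n$ is random, together with the routine pathwise monotonicities used above.
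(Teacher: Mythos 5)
The paper gives no proof of this lemma; it is invoked directly as Ott's Theorem~1.1. Your attempt to fill in a self-contained reduction is therefore worthwhile, but it contains a genuine gap: the version of Ott's theorem you state and rely on — ``if the initial virtual waiting time of an M/GI/1 queue is stochastically larger than its stationary one, then the transient workload $V_t$ is stochastically decreasing in $t$'' — is \emph{false}. Stochastic dominance $V_0 \geq_{\mathrm{st}} V_\infty$ only yields $V_t \geq_{\mathrm{st}} V_\infty$ for all $t$ (monotone approach from above); it does not yield monotonicity in $t$. A concrete counterexample: take M/M/1 with $\lambda=0.5$, $\mu=1$, $S\sim\mathrm{Exp}(1)$, so $\Pr(V_\infty>x)=0.5e^{-0.5x}$ for $x>0$; let $V_0=\max(V_\infty,2)$, which trivially dominates $V_\infty$. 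Since $V_0\geq 2$ a.s., there is no idle time before time $2$, so $V_t=V_0-t+X(t)$ for small $t$, and a first-order expansion gives
\begin{equation*}
\Pr(V_t>x)=\Pr(V_0>x)+t\Bigl[\lambda\bigl(\Pr(V_0+S>x)-\Pr(V_0>x)\bigr)-f_{V_0}(x)\Bigr]+O(t^2).
\end{equation*}
At $x=2.5$, one computes $\Pr(V_0>2.5)\approx0.143$, $f_{V_0}(2.5)\approx0.072$, $\Pr(V_0+S>2.5)\approx0.670$, so the bracket is $\approx+0.19>0$, i.e.\ $\Pr(V_t>2.5)$ is strictly \emph{increasing} at $t=0$. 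The large atom of $V_0$ at $c=2$, together with a single early arrival, pushes mass above $c$ faster than the drift removes the tail.

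What Ott's Theorem~1.1 actually uses is the stronger convolution structure $V_0=_{\mathrm{st}}V_\infty+Y$ with $Y\geq0$ \emph{independent} of the background Poisson process (and hence of $V_\infty$); this is precisely what Assumption~\ref{asm:V(0)-mg1} feeds into via the Hooke--Ott decomposition. Under that independence one can show, via a time-reversal argument, that the pair (stationary component, excess) at time $t$ has the same joint law as $\bigl(V_\infty,\,(Y-L_t)^+\bigr)$ with $L_t$ independent of $V_\infty$ and nondecreasing in $t$, which does give the monotone decrease. Your coupling step establishes only the weaker fact $D_n\geq_{\mathrm{st}}\hat V_\infty$: the pathwise excess $\hat V(\alpha_n-)-\hat V_{\mathrm{bg}}(\alpha_n-)$ is a function of the same driving randomness as $\hat V_{\mathrm{bg}}(\alpha_n-)$, so the independence needed for the convolution form is not delivered, and (as the counterexample above shows) domination alone does not suffice. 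Repairing the argument requires establishing the convolution decomposition of $D_n$ itself, which is essentially the content of Ott's Theorem~1.1 — so the proposed reduction is circular rather than a proof.
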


Informally, the virtual waiting time process $\hat{V}(t)$ during
the time-interval $t \in [\alpha_{n-1}, \alpha_n)$
starts from the system delay $D_{n-1} = \hat{V}(\alpha_{n-1})$ and it tends
toward the stationary distribution of the M/GI/1 queue.
Lemma \ref{lemma:Z-decreasing} shows that this transition is in fact
decreasing in the usual stochastic order.

Recall that the exact peak AoI distribution is characterized by the
rather complicated expression in (\ref{eq:A_peak-by-D}).
The following lemma shows that \textit{ignoring the dependence
between $D_n$ and $G_n$} in (\ref{eq:A_peak-DG}) yields a simple upper
bound of the peak AoI in the convex order.
\begin{lemma}
\label{lemma:A_peak-cx-bound}

Under Assumption \ref{asm:V(0)-mg1}, the $n$th peak AoI $A_{\peak,n}$
is bounded above in the convex
order as
\[
A_{\peak,n} \leq_{\cx} D_n + G,
\]
where $G$ denotes a generic random variable for \michel{inter-generation} times
and it is independent of the system delay $D_n$.
\end{lemma}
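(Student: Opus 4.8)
The plan is to establish $A_{\peak,n} \leq_{\cx} D_n + G$ by exhibiting a suitable coupling together with a conditional-mean (martingale-type / mean-preserving spread) argument, exploiting the structure $A_{\peak,n} = W_n + H_n/\mu + G_n$ from \eqref{eq:A_peak-by-VHG} and $D_n = W_n + H_n/\mu$ from \eqref{eq:D_n-by-W_n}. Since both sides share the additive term $H_n/\mu$, and $H_n$ is independent of $(W_n,G_n)$, it suffices to show $W_n + G_n \leq_{\cx} D_n + G$, where on the right $G$ is an independent copy. The key observation is that $W_n$ is obtained from $D_{n-1}$ by running the background-only M/GI/1 virtual-waiting-time process for a duration $G_n$; that is, $W_n =_{\st} \hat V_{\mg}(G_n)$ started from $\hat V_{\mg}(0) =_{\st} D_{n-1}$, as already used in the proof of Lemma~\ref{lemma:A_peak-by-D}.

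First I would condition on $G_n = y$. By Lemma~\ref{lemma:Z-decreasing} (which requires Assumption~\ref{asm:V(0)-mg1} and is why that assumption is invoked), the conditional distribution $\overline F_Z(y,\cdot\,;D_{n-1})$ is stochastically decreasing in $y$, with limiting value (as $y\to\infty$) the stationary background virtual-waiting-time distribution, which equals the distribution of $\hat V(0)$ under Assumption~\ref{asm:V(0)-mg1}. Next I would use that $D_n = W_n + H_n/\mu$ is stationary, so $D_n =_{\st} D_{n-1}$; hence $D_n$ has the same distribution as $\hat V_{\mg}(G')$ averaged over an independent $G'$, and in particular $\E[W_n \mid G_n = y]$ is monotone in $y$. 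The cleanest route is a direct conditional-expectation inequality: I would show that $\E[\,g(W_n + G_n)\,] \le \E[\,g(D_n + G)\,]$ for every convex $g$ by writing, via the tower property and the coupling above,
\[
\E[g(W_n+G_n)\mid G_n=y] = \E\bigl[g(\hat V_{\mg}(y) + y)\mid \hat V_{\mg}(0)=_{\st}D_{n-1}\bigr],
\]
and comparing this, after integrating against $\dd F_G(y)$, with the right-hand side, where $G$ is decoupled. The mechanism making the decoupled sum more spread out in the convex order is that replacing the negatively associated pair $(W_n,G_n)$ — larger $G_n$ produces, through the transient M/GI/1 dynamics that drift toward the (small) stationary value, a stochastically not-larger $W_n$ — by an independent pair is a mean-preserving spread: the means agree because $\E[A_{\peak,n}] = \E[D_n] + \E[G]$ by stationarity together with \eqref{eq:A_peak-DG}, so only the $\leq_{\icx}$ part needs proof and then the equal-means condition upgrades it to $\leq_{\cx}$.

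For the $\leq_{\icx}$ part I would verify \eqref{eq:icx-def-int}: using \eqref{eq:D_n-recursion-by-F} and the representation of $D_n$ via $Z(\cdot\,;D_{n-1})$, write $\overline F_{W_n+G_n}(\cdot)$ as a mixture over $G_n=y$ of shifted tails $\overline F_Z(y,\cdot-y;D_{n-1})$, and compare the iterated tail integral $\int_x^\infty \overline F_{W_n+G_n}(z)\dd z$ against the product-form $\int_x^\infty \overline F_{D_n+G}(z)\dd z$; the monotonicity from Lemma~\ref{lemma:Z-decreasing} furnishes the required pointwise comparison of the "correlation correction'' term after an integration by parts in $y$. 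I expect the main obstacle to be making the coupling between $W_n$ (driven by $G_n$ through the background queue) and the independent pair $(D_n,G)$ fully rigorous at the level of convex functions rather than just at the level of tails — specifically, converting the stochastic monotonicity in $t$ of Lemma~\ref{lemma:Z-decreasing} into the statement that negative dependence between $G_n$ and $W_n$ reduces variability. This is precisely the ``ignoring the dependence'' step flagged before the lemma, and it is where one must be careful that the inequality goes in the claimed direction (decoupling \emph{increases} spread because the dependence is negative).
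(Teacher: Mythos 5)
Your setup matches the paper's exactly: you use the same decomposition $A_{\peak,n} = W_n + H_n/\mu + G_n$, $D_n = W_n + H_n/\mu$, you correctly reduce to showing $W_n + G_n \leq_{\cx} W_n + G$ (with $G$ independent of $W_n$; note your formulation ``$W_n + G_n \leq_{\cx} D_n + G$'' double-counts $H_n/\mu$), you correctly identify Lemma~\ref{lemma:Z-decreasing} as the role of Assumption~\ref{asm:V(0)-mg1}, and you correctly observe that the means agree so only the $\leq_{\icx}$ half must be proved. Up to that point the argument is the paper's argument. The problem is that the one step you yourself flag as ``the main obstacle''---converting the stochastic monotonicity of $Z(t;D_{n-1})$ into the statement that decoupling $(W_n,G_n)$ is a mean-preserving spread---is precisely the nontrivial content of the lemma, and your sketch of it does not go through.

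Concretely, the ``integration by parts in $y$'' route will not furnish a pointwise comparison: the mixed tail $\overline F_Z(y,\,z-y;D_{n-1})$ appearing when you condition on $G_n=y$ is \emph{not} monotone in $y$, because Lemma~\ref{lemma:Z-decreasing} only gives monotonicity in the first argument while the second argument $z-y$ moves the other way. What is actually needed is a Chebyshev/submodularity step: defining $\psi(t,y) := \E\bigl[g\bigl(Z(t;D_{n-1})+y\bigr)\bigr]$ for convex $g$, one must show $\psi$ is submodular (equivalently, for $y<y'$ the map $t\mapsto \E\bigl[g(Z(t;D_{n-1})+y) - g(Z(t;D_{n-1})+y')\bigr]$ is non-decreasing, which follows from Lemma~\ref{lemma:Z-decreasing} together with the fact that $z\mapsto g(z+y)-g(z+y')$ is non-increasing for $g$ convex and $y<y'$); submodularity then yields $\int\psi(y,y)\,\dd F_G(y)\leq\iint\psi(t,y)\,\dd F_G(t)\,\dd F_G(y)$. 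The paper packages exactly this in three citations: (i) Lemma~\ref{lemma:Z-decreasing} plus the covariance inequality $\E[g_1(Y)g_2(Y)]\geq\E[g_1(Y)]\E[g_2(Y)]$ for non-increasing $g_1,g_2$ shows $(W_n,G_n)$ is negatively quadrant dependent; (ii) Lehmann's theorem upgrades NQD to $\Cov[h_1(W_n),h_2(G_n)]\leq 0$ for all non-decreasing $h_1,h_2$; (iii) \cite[Th.\ 3.A.39]{Shaked2007} converts this covariance condition into $W_n+G_n\leq_{\cx} W_n+G$ with $G$ independent. You need to supply either this chain or the equivalent submodularity argument; without it the proof has a genuine gap, which you candidly acknowledged but did not close.
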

\begin{proof}

We consider the joint distribution of $W_n$ and the \michel{inter-generation}
time $G_n$ between $(n-1)$st and $n$th packets in the following form: 
\begin{align*}
\lefteqn{
\Pr(W_n > x, G_n \leq y)
}\qquad
\\
&=
\int_{x_0=0}^{\infty}
\Pr(\hat{V}_{\mg}(G_n) > x, G_n \leq y \mid \hat{V}_{\mg}(0) = x_0)
\,\dd F_{D_{n-1}}(x_0)
\\
&=
\int_{t=0}^{\infty}
\id{t \leq y}
\int_{x_0=0}^{\infty}
\Pr(\hat{V}_{\mg}(t) > x \mid \hat{V}_{\mg}(0) = x_0)
\,\dd F_{D_{n-1}}(x_0)
\,\dd F_G(t)
\\
&=
\int_{t=0}^{\infty}
\id{t \leq y}
\overline{F}_Z(t,x; D_{n-1})
\,\dd F_G(t),
\end{align*}
where $\overline{F}_Z(t,x; D_{n-1})$ is defined as in (\ref{eq:F_Z-tx-orig-def}) and (\ref{eq:F_Z-tx-def}).
We then have
\begin{align*}
\Pr(W_n > x, G_n \leq y)
&=
\E[\overline{F}_Z(G,x; D_{n-1})\id{G \leq y}]
\geq\,
\E[\overline{F}_Z(G,x; D_{n-1})]\E[\id{G \leq y}],
\end{align*}
where the inequality follows from the fact that 
for any random variable $Y$ and non-increasing functions
$g_1(\cdot)$ and $g_2(\cdot)$, we have $\E[g_1(Y)g_2(Y)] \geq
\E[g_1(Y)]\E[g_2(Y)]$ (see \cite{Gurland1967,Behboodian1994}).

Because (\ref{eq:D_n-recursion-by-F}) implies 
$\Pr(W_n > x) = \E[\overline{F}_Z(x,G; D_{n-1})]$,
we further obtain
\[
\Pr(W_n > x, G_n \leq y)
\geq
\Pr(W_n > x)\Pr(G_n \leq y),
\quad
x \geq 0,
\;\;
y \geq 0.
\]
Therefore, $W_n$ and $G_n$ are 
\textit{negatively quadrant dependent} \cite[Lemma 1]{Lehmann1966}.
It then follows from \cite[Th.\ 1 (ii) and Lemma 3]{Lehmann1966}
that for any non-decreasing functions
$h_1(\cdot)$ and $h_2(\cdot)$,
\[
\Cov[h_1(W_n), h_2(G_n)] \leq 0.
\]
We then have from \cite[Th.\ 3.A.39]{Shaked2007},
\begin{equation}
W_n + G_n \leq_{\cx} W_n + G,
\label{eq:V+G-cx-order}
\end{equation}
where $G$ on the right-hand side is taken to be independent of
$W_n$. Because the convex order is preserved 
under the addition of an independent random variable \cite[Eq.\ (3.A.46)]{Shaked2007},
we obtain from (\ref{eq:A_peak-DG}), (\ref{eq:D_n-by-W_n}), and 
(\ref{eq:V+G-cx-order}), 
\begin{align*}
A_{\peak,n+1} &= W_n + G_n + \frac{H_n}{\mu}
\\
&\leq_{\cx}
W_n + G + \frac{H_n}{\mu}
=
D_n + G.
\qedhere
\end{align*}
\end{proof}
As an application of Lemma \ref{lemma:A_peak-cx-bound}, a similar bound for the
stationary version follows immediately, as stated in the next corollary. 
\begin{corollary}
\label{cor:A_peak-cx-bound}

The stationary peak AoI $A_{\peak}$ is bounded above in the convex
order as
\[
A_{\peak} \leq_{\cx} D + G,
\]
where the stationary system delay $D$ and the \michel{inter-generation} time
$G$ are taken to be independent.
\end{corollary}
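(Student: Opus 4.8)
The plan is to deduce Corollary \ref{cor:A_peak-cx-bound} from Lemma \ref{lemma:A_peak-cx-bound} by passing to the stationary regime. The key point is that the convex order is preserved under weak convergence of distributions: if $Y_n \leq_{\cx} \tilde Y_n$ for every $n$ and $Y_n \Rightarrow Y$, $\tilde Y_n \Rightarrow \tilde Y$ (with the relevant means also converging, or with uniform integrability available), then $Y \leq_{\cx} \tilde Y$; see \cite[Th.\ 3.A.12]{Shaked2007}. So first I would invoke the regenerative structure guaranteed by the stability condition (\ref{eq:stability}): the time-averaged distribution of $(A_{\peak,n})_{n\ge 1}$ equals the stationary law $F_{A_{\peak}}(\cdot)$, and likewise $(D_n)_{n\ge 0}$ converges in distribution to $D$. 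Under Assumption \ref{asm:V(0)-mg1} — which, as the remark after that assumption notes, does not affect any stationary quantity — Lemma \ref{lemma:A_peak-cx-bound} gives $A_{\peak,n} \leq_{\cx} D_n + G$ for every $n$, with $G \perp D_n$.

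Second, I would identify the limit of the right-hand side. Since $D_n \Rightarrow D$ and $G$ is a fixed (independent) random variable, $D_n + G \Rightarrow D + G$ with $D \perp G$; continuity of convolution under weak convergence handles this. Third, I would verify that the means match up so that the convex order (not merely the increasing convex order) is inherited in the limit. From (\ref{eq:A_peak-DG}) we have $\E[A_{\peak,n}] = \E[D_n] + \E[G_n] = \E[D_n] + \E[G]$, and $\E[D_n+G] = \E[D_n] + \E[G]$, so in fact $\E[A_{\peak,n}] = \E[D_n + G]$ exactly for each $n$ — consistent with the convex ordering in Lemma \ref{lemma:A_peak-cx-bound}. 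Passing $n\to\infty$ gives $\E[A_{\peak}] = \E[D] + \E[G] = \E[D + G]$, which is precisely the equal-means requirement. Combined with the limiting increasing convex order, this yields $A_{\peak} \leq_{\cx} D + G$.

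The main obstacle is the technical justification that $\leq_{\cx}$ is closed under the weak limits at hand: this requires either uniform integrability of the two sequences $(A_{\peak,n})_n$ and $(D_n+G)_n$, or an appeal to a convex-order convergence theorem that presupposes convergence of means. Under (\ref{eq:stability}) both $D$ and $A_{\peak}$ have finite mean (indeed finite moments, via Lemma \ref{lemma:D-decomp} and (\ref{eq:E_W_star})), and the regenerative representation expresses the stationary means as averages of the per-cycle quantities, so uniform integrability along the Cesàro averages can be arranged; alternatively one may first establish $A_{\peak,n} \leq_{\cx} D_n + G$ for the \emph{stationary} $D_n$ directly by noting that, since the stationary distributions are insensitive to $\hat V(0)$, one may simply take the chain $(D_n)$ in its stationary regime from the outset, in which case $D_n =_{\st} D$ for all $n$ and Lemma \ref{lemma:A_peak-cx-bound} applied at any single $n$ already delivers $A_{\peak} \leq_{\cx} D + G$ with no limiting argument needed. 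I would present this second route as the clean proof, and remark that it works because Lemma \ref{lemma:A_peak-cx-bound} holds for every $n$ under Assumption \ref{asm:V(0)-mg1} while Assumption \ref{asm:V(0)-mg1} does not alter the stationary laws.
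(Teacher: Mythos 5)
Your first route is correct and is in substance what the paper intends when it says the corollary ``follows immediately'' from Lemma~\ref{lemma:A_peak-cx-bound}: apply that lemma for every $n$ under Assumption~\ref{asm:V(0)-mg1}, note that $\E[A_{\peak,n}]=\E[D_n]+\E[G]=\E[D_n+G]$ so the means match exactly at each step, and pass to the stationary limit, invoking finiteness of the stationary means under~(\ref{eq:stability}) to preserve $\leq_{\cx}$ in the limit; the remark that Assumption~\ref{asm:V(0)-mg1} is irrelevant for stationary laws then discharges the apparent dependence on the initial condition.

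Your preferred second route, however, has a gap and should not be presented as the clean proof. Initializing the chain $(D_n)$ in its stationary regime means $\hat{V}(0)$ carries the full GI+M/GI+GI/1 stationary workload law, which is \emph{not} Assumption~\ref{asm:V(0)-mg1}: that assumption puts $\hat{V}(0)$ in the background-only M/GI/1 stationary law. Lemma~\ref{lemma:A_peak-cx-bound} is proved under Assumption~\ref{asm:V(0)-mg1} precisely because its negative-quadrant-dependence step hinges on the monotonicity $\overline{F}_Z(t_1,x;D_{n-1})\ge\overline{F}_Z(t_2,x;D_{n-1})$ for $t_1\le t_2$ supplied by Lemma~\ref{lemma:Z-decreasing}, and that lemma only asserts this under Assumption~\ref{asm:V(0)-mg1}. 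So you cannot simply ``apply Lemma~\ref{lemma:A_peak-cx-bound} at a single $n$'' under stationary initialization; you would first need to check that the monotonicity persists when $D_{n-1}$ is distributed as the stationary $D$. That check can be carried out --- Lemma~\ref{lemma:D-decomp} exhibits $D$ as (in distribution) the independent sum of the background-only stationary delay and a non-negative remainder, so $D$ stochastically dominates the background-only stationary workload and the M/GI/1 process started above its equilibrium is stochastically non-increasing --- but it is an extra step you omit, and it makes route two no shorter than route one. Present route one.
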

\begin{remark}
Since Corollary \ref{cor:A_peak-cx-bound} refers only to the stationary
distribution, Assumption \ref{asm:V(0)-mg1} is irrelevant here.
\end{remark}

\yoshiaki{
\michel{By} Corollary \ref{cor:A_peak-cx-bound} and an additional \michel{argumentation} for
a lower bound, we \michel{can} obtain Theorem \ref{theorem:A-bound} as follows.
}

\medskip

\noindent
\yoshiaki{
\textit{Proof of Theorem \ref{theorem:A-bound}. }}
We first derive the upper bound.
From (\ref{eq:A_peak-DG}), (\ref{eq:icx-def-int}), and Corollary \ref{cor:A_peak-cx-bound}, we have
\begin{align*}
\int_0^x \overline{F}_{A_{\peak}}(y)\,\dd y 
&=
\E[A_{\peak}] - \int_x^{\infty} \overline{F}_{A_{\peak}}(y)\,\dd y 
\\
&=
\E[D]+\E[G] - \int_x^{\infty} \overline{F}_{A_{\peak}}(y)\,\dd y
\\
&\geq
\E[D]+\E[G] - \int_x^{\infty} \overline{F}_{D+G}(y)\,\dd y
=
\int_0^x \overline{F}_{D+G}(y)\,\dd y,
\quad
\forall x \geq 0.
\end{align*}
It then follows from (\ref{eq:F_A-formula}) that
\begin{align*}
F_A(x) 
&\geq 
\frac{1}{\E[G]}
\left\{
\int_0^x \overline{F}_{D+G}(y)\,\dd y
-
\int_0^x \overline{F}_{D}(y)\,\dd y
\right\}
\\
&=
\frac{1}{\E[G]}
\left\{
\int_{y=0}^x 
\left(
\overline{F}_D(y)
+
\int_{u=0}^y \overline{F}_G(y-u) \,\dd F_D(u)
\right) \dd y
-
\int_0^x \overline{F}_{D}(y)\,\dd y
\right\}
\\
&=
\int_{y=0}^x 
\left(
\int_{u=0}^y 
\frac{\overline{F}_G(y-u)}{\E[G]} \,\dd F_D(u)
\right)
\dd y
=
F_{D+\widetilde{G}}(x),
\quad
\forall x \geq 0,
\end{align*}
which proves the upper bound $A \leq_{\st} D+\widetilde{G}$.

We then consider the lower bound. Consider a modified workload process
$(V^-(t))_{t \geq 0}$
where the workload brought by the sensor after time $0$ is replaced by zero (cf.\ (\ref{eq:V(t)-def})):
\[
V^-(t) = V(0) + X_{\bg}^-(t) - \mu t + \mu \int_{u \in [0,t)}
\id{V^-(u) = 0} \dd u,
\quad
t \geq 0.
\]
It is obvious that we have $V^-(t) \leq V(t)$ (for $t \geq 0$).
We define $D_n^-$ (for $n = 0,1,\ldots$) by
\begin{equation}
D_n^- = \frac{V^-(\alpha_n)+H_n}{\mu},
\label{eq:D_n^-}
\end{equation}
which, together with (\ref{eq:D_n-by-V}), implies $D_n^- \leq D_n$ for each sample
path. Let us then define $A^-(t)$ as
\[
A^-(t) := t - \max\{\alpha_n;\, \alpha_n + D_n^- \leq t\},
\]
which clearly satisfies
\begin{equation}
A^-(t) \leq A(t),
\label{eq:A(t)-lower}
\end{equation}
for each sample path (cf.\ (\ref{eq:A(t)-def})).
Furthermore, we define the corresponding peak AoI as (cf.\
(\ref{eq:A_peak-DG}))
\begin{equation}
A_{\peak,n}^- 
:= 
\lim_{t \to \alpha_n+D_n^- -}A^-(t)
=
D_n^- + G_n.
\label{eq:A_peak-}
\end{equation}

Now suppose that the initial workload $V(0)$ is distributed according
to the stationary distribution of $(V^-(t))_{t \geq 0}$.
Under this condition, $D_n^-$ defined in (\ref{eq:D_n^-}) has the same
distribution as $D^-$ and it is independent of $G_n$.
Therefore, we have from (\ref{eq:A_peak-}) that
\[
\overline{F}_{A_{\peak,n}^-}(y)
= 
\overline{F}_{D^-}(y)
+ \int_0^y \overline{F}_G(y-u) \,\dd F_{D^-}(u).
\]
Applying (\ref{eq:F_A-formula}) to the modified AoI process
$(A^-(t))_{t \geq 0}$, we obtain its stationary random variable
$A^-$ as
\[
F_{A^-}(x) 
=
\int_{y=0}^x
\left(
\int_{u=0}^y \frac{\overline{F}_G(y-u)}{\E[G]} \dd F_{D^-}(u)
\right)\dd y,
\]
i.e., we have $A^- =_{\st} D^- + \widetilde{G}$.
Because (\ref{eq:A(t)-lower}) still holds under this initial
condition, their time-averaged distributions satisfy 
\[
\frac{1}{T}\int_0^T \id{A^-(t) > x}\dd t 
\leq 
\frac{1}{T}\int_0^T \id{A(t) > x}\dd t,
\;\;
T > 0, x \geq 0.
\]
Letting $T \to \infty$ in this equation (cf.\ (\ref{eq:A-time-avg-dist})),
we thus obtain the lower bound in (\ref{eq:A-dist-bounds}).
\qed

\subsection{Closed-Form Bounds for the Mean AoI}
\label{ssec:bounds}

Theorem \ref{theorem:A-bound} enables us to obtain closed-form bounds
for the mean AoI.
\begin{corollary}
\label{corollary:EA-bound0}
The mean AoI $\E[A]$ is bounded by
\begin{align}
\E[A] 
&\geq 
\frac{\lambda_{\bg} \E[H_{\bg}^2]}{2(1-\rho_{\bg})\mu^2 }
+ \frac{\E[H]}{\mu}
+
\frac{\E[G^2]}{2\E[G]},
\label{eq:EA-lower}
\\
\E[A] 
&\leq 
\frac{\lambda_{\bg} \E[H_{\bg}^2]}{2(1-\rho_{\bg})\mu^2 }
+ \frac{\E[H]}{\mu}
+ \frac{\E[G^2]}{2\E[G]}
+ (1-\rho_{\bg})\E[W_{\star}],
\label{eq:EA-upper-W_star}
\end{align}
where $\E[W_{\star}]$ is given by (\ref{eq:E_W_star}).
\end{corollary}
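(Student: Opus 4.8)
The plan is to read the bounds straight off Theorem~\ref{theorem:A-bound} and then evaluate a few elementary means. Since $Y_0\leq_{\st}Y_1$ implies $\E[Y_0]\leq\E[Y_1]$, and since the two bounding random variables in \eqref{eq:A-dist-bounds} are independent sums, taking expectations yields at once
\[
\E[D^-]+\E[\widetilde G]\;\leq\;\E[A]\;\leq\;\E[D]+\E[\widetilde G],
\]
so it remains only to compute $\E[\widetilde G]$, $\E[D^-]$, and $\E[D]$ in closed form.

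The residual term is immediate: from $f_{\widetilde G}(x)=\overline F_G(x)/\E[G]$ together with $\int_0^\infty x\,\overline F_G(x)\,\dd x=\tfrac12\E[G^2]$ (or, equivalently, by expanding $f_{\widetilde G}^*(s)=(1-f_G^*(s))/(s\E[G])$ to first order in $s$), one gets $\E[\widetilde G]=\E[G^2]/(2\E[G])$, a term common to both bounds. For $\E[D^-]$ I would use the construction of $D^-$ in the proof of Theorem~\ref{theorem:A-bound}: $D^-$ is the stationary system delay of the ordinary $\mathrm{M/GI/1}$ queue fed only by the background stream, hence the independent sum of that queue's stationary virtual waiting time --- whose LST is the Pollaczek--Khinchine transform $(1-\rho_{\bg})s/(s-\lambda_{\bg}+\lambda_{\bg}f_{H_{\bg}}^*(s/\mu))$, with mean $\lambda_{\bg}\E[H_{\bg}^2]/(2(1-\rho_{\bg})\mu^2)$ --- and the scaled tagged service requirement $H/\mu$, with mean $\E[H]/\mu$. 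Adding these two means and then $\E[\widetilde G]$ gives exactly \eqref{eq:EA-lower}.

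For the upper bound I would compute $\E[D]$ from the decomposition of Lemma~\ref{lemma:D-decomp}, which writes $f_D^*(s)$ as the product of the Pollaczek--Khinchine factor $(1-\rho_{\bg})s/\phi(s)$, the factor $f_{W_\star}^*(\phi(s))$, and $f_H^*(s/\mu)$, with $\phi$ as in \eqref{eq:phi-def}. Each factor equals $1$ at $s=0$ (note $\phi(s)\sim(1-\rho_{\bg})s$ as $s\downarrow0$), so $\E[D]=-f_D^{*\prime}(0)$ splits into the three factors' contributions: $\lambda_{\bg}\E[H_{\bg}^2]/(2(1-\rho_{\bg})\mu^2)$ from the first, then $-\tfrac{\dd}{\dd s}f_{W_\star}^*(\phi(s))\big|_{s=0}=\bigl(-f_{W_\star}^{*\prime}(0)\bigr)\phi'(0)=(1-\rho_{\bg})\E[W_\star]$ from the second (using $\phi(0)=0$, $\phi'(0)=1-\rho_{\bg}$, and \eqref{eq:E_W_star}), and $\E[H]/\mu$ from the third. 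Hence $\E[D]=\E[D^-]+(1-\rho_{\bg})\E[W_\star]$, and adding $\E[\widetilde G]$ produces \eqref{eq:EA-upper-W_star}.

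There is no substantial obstacle; the argument is bookkeeping on top of Theorem~\ref{theorem:A-bound} and Lemma~\ref{lemma:D-decomp}. The only points deserving a little care are keeping track of the scaled service term $\E[H]/\mu$, which sits inside $\E[D^-]$ and hence enters both bounds, and checking $\phi(0)=0$ so that the chain rule anchors the middle factor's contribution at $f_{W_\star}^{*\prime}(0)=-\E[W_\star]$ rather than at some other point.
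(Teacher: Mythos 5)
Your proof is correct and follows exactly the route the paper intends (the paper's own proof is the one-line remark that the bounds follow from Lemma~\ref{lemma:D-decomp} and Theorem~\ref{theorem:A-bound}); you have simply supplied the elementary mean computations --- $\E[\widetilde G]$ as the forward recurrence time, $\E[D^-]$ from the Pollaczek--Khinchine factor plus the tagged service term $\E[H]/\mu$, and $\E[D]$ by differentiating the three-factor decomposition in Lemma~\ref{lemma:D-decomp} at $s=0$ --- all of which check out, including the observation $\phi(0)=0$, $\phi'(0)=1-\rho_{\bg}$ needed for the chain-rule term $(1-\rho_{\bg})\E[W_\star]$.
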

\begin{proof}
From Lemma \ref{lemma:D-decomp} and Theorem \ref{theorem:A-bound}, we can readily verify 
(\ref{eq:EA-lower}) and (\ref{eq:EA-upper-W_star}).
\end{proof}

\begin{remark}
Recall that $\E[W_{\star}]$ denotes the mean waiting time in the GI/G/1 queue, where the LST of
the service time $H_{\star}$ is given by $f_{H_{\star}}^*(s) =
f_H^*((s+\lambda_{\bg}-\lambda_{\bg}f_{B_{\bg}}^*(s))/\mu)$.
We can then replace $\E[W_{\star}]$ in the upper bound (\ref{eq:EA-upper-W_star})
by an expression only with the first two moments of the model parameters
\cite{Daley1977}:
\[
\E[W_{\star}] 
\leq 
\frac{\Var[H_{\star}] +
\rho_{\star}(2-\rho_{\star})\Var[G]}{2(1-\rho_{\star})\E[G]},
\]
where
\[
\rho_{\star} = \frac{\rho}{1-\rho_{\bg}},
\quad
\E[H_{\star}] = \frac{\E[H]}{\mu(1-\rho_{\bg})},
\quad
\Var[H_{\star}] 
= 
\frac{\Var[H]}{\mu^2 (1-\rho_{\bg})^2}
+
\frac{\rho_{\bg}\E[H_{\bg}^2]}{\mu^2(1-\rho_{\bg})^3}.
\]

\end{remark}

\yoshiaki{
We are now in a position to prove Theorem \ref{theorem:A-bound-mean}.

\medskip

\noindent
\textit{Proof of Theorem \ref{theorem:A-bound-mean}. }}
Owing to Corollary \ref{corollary:EA-bound0}, it is sufficient to show
that
\begin{equation}
\frac{\lambda_{\bg}\E[H_{\bg}^2]}{2(1-\rho_{\bg})\mu^2}
+(1-\rho_{\bg})\E[W_{\star}] 
\leq
\frac{\lambda\E[H^2]+\lambda_{\bg}\E[H_{\bg}^2]}{2(1-\rho-\rho_{\bg})\mu^2}.
\label{eq:D-bound-NBUE}
\end{equation}
Recall that $W_{\star}$ denotes the stationary waiting time in the GI/GI/1 queue as in Lemma \ref{lemma:D-decomp}, which has the inter-arrival time distribution $F_G(\cdot)$ and the modified service time distribution $F_{H_{\star}}(\cdot)$ with LST $f_{H_{\star}}^*(s) =
f_H^*((s+\lambda_{\bg}-\lambda_{\bg}f_{B_{\bg}}^*(s))/\mu)$. From (\ref{eq:E_W_star}), $\E[W_{\star}]$ is given by
\[
\E[W_{\star}] = \sum_{k=1}^{\infty}\frac{1}{k} \cdot \E[\max(0,U_k)],
\]
where $U_k$ is defined as the sum of $k$ i.i.d.\ copies of the difference
$H_{\star} - G$ of the inter-arrival and service times (see
(\ref{eq:U_1-def}) and (\ref{eq:U_k-def})).

\yoshiaki{
It \michel{is} known that $G$ is NBUE if and only if \cite[Th.\ 3.A.55]{Shaked2007}
\begin{equation}
G \leq_{\cx} \mathrm{Exp}(\E[G]),
\label{eq:NBUE-cx}
\end{equation}
where $\mathrm{Exp(\E[G])}$ denotes an exponentially distributed
random variable with the same mean as~$G$.}
Because $\max(0, x)$ is a convex function of $x$, (\ref{eq:NBUE-cx}) implies that if $G$ is NBUE, then 
$\E[W_{\star}]$ is bounded above by the mean stationary waiting time in the M/G/1 queue with the same arrival rate $1/\E[G]$ and the same service time distribution, i.e.,
\begin{equation}
\E[W_{\star}] 
\leq 
\frac{\lambda\E[H_{\star}^2]}{2(1-\lambda \E[H_{\star}])}
=
\frac{1}{1-\rho_{\bg}}
\left\{
\frac{\lambda \E[H^2] + \lambda_{\bg} \E[H_{\bg}^2]}{2(1-\rho-\rho_{\bg})\mu^2}
-
\frac{\lambda_{\bg}\E[H_{\bg}^2]}{2(1-\rho_{\bg})\mu^2}
\right\},
\label{eq:W_star-NBUE-bound}
\end{equation}
where the last equality can be verified with a straightforward calculation using
\[
\lambda\E[H_{\star}] 
= \frac{\rho}{1-\rho_{\bg}},
\quad
\lambda\E[H_{\star}^2]
=
\frac{\rho\lambda_{\bg}\E[H_{\bg}^2]}{(1-\rho_{\bg})^3\mu^2}
+\frac{\lambda\E[H^2]}{(1-\rho_{\bg})^2\mu^2}.
\]
We then obtain (\ref{eq:D-bound-NBUE}) from (\ref{eq:W_star-NBUE-bound}).
\qed

\if0
\section{Constant \michel{inter-generation} times}

In this section, we consider a special case where \michel{inter-generation} times equal to some constant $\tau$ ($\tau > 0$), i.e., $G=\tau$ with probability one. This would probably be the most natural assumption for sensing applications as well as video streaming systems with fixed frame rate. 

In this case, the analytical result for the AoI is simplified.
 Note first that we have $A_{\peak,n} = D_n + \tau$ from (\ref{eq:A_peak-DG}), which implies
\[
f_{A_{\peak}}^*(s) = f_{D}^*(s) e^{-s\tau},
\]
so we obtain from (\ref{eq:f_A*-formula}),
\[
f_{A}^*(s) = f_{D}^*(s) \cdot \frac{1-e^{-s\tau}}{s\tau}.
\]
Therefore, the stationary AoI is given by a sum of the stationary system delay and a uniform random variable with support $[0,\tau)$.
Furthermore, $f_{D}^*(s)$ is obtained from Lemma \ref{lemma:D-decomp} as follows:

\begin{corollary}

The LST of the system delay $D$ is given by
\[
f_D^*(s) 
= 
\frac{(1-\rho_{\bg})s}
{s-\lambda_{\bg}+\lambda_{\bg}f_{H_{\bg}}^*(s/\mu)}
\cdot
f_{W_\star}^*(s - \lambda_{\bg}+\lambda_{\bg}f_{H_{\bg}}^*(s/\mu))
\cdot f_H^*(s/\mu),
\]
where
\begin{align*}
f_{W_{\star}}^*(s)
=
\exp\left[
-\sum_{k=1}^{\infty} \frac{1}{k}
\int_{k\tau}^{\infty}
(1-e^{-s(x-k\tau)})
\mathcal{L}^{-1}
\left[
f_H^*\Biggl(
\frac{s+\lambda_{\bg}-\lambda_{\bg}f_{B_{\bg}}^*(s)}{\mu}
\Biggr)^k
\right](x)
\dd x
\right].
\end{align*}
\end{corollary}
\fi

\section{Phase-type \michel{inter-generation} time distribution}\label{sec:PH}

\michel{In this section, we develop a numerical algorithm to compute the mean AoI ${\mathbb E}[A]$, for the class of PH+M/GI+GI/1 systems, presented in Theorem \ref{thm_meanA}. It is recalled that one can approximate any distribution on the positive
half-line arbitrarily closely by a distribution from the phase-type
class described above \cite[Th.\ III.4.2]{Asmussen2003}. This motivates considering the queueing model PH+M/GI+GI/1, rather than the formally more general model GI+M/GI+GI/1. It is noted that the resulting recipe for evaluating  ${\mathbb E}[A]$ is particularly useful, as it can be used to assess the tightness of the closed-form bounds derived in Theorem \ref{theorem:A-bound-mean}; this will be done in Section \ref{sec:num}.}

We consider a special case where the \michel{inter-generation} times
$(G_n)_{n=1,2,\ldots}$ follow an $M$-state ($M=1,2,\ldots$) phase-type
distribution \cite[Section III.4]{Asmussen2003} with representation $(\bm{\gamma}, \bm{\Gamma})$:
\begin{align}
F_G(x) &= 1 - \bm{\gamma} \exp[\bm{\Gamma} x] \bm{e},
\quad
x \geq 0,
\nonumber
\\
f_G(x) &= \bm{\gamma} \exp[\bm{\Gamma} x] (-\bm{\Gamma})\bm{e},
\quad
x \geq 0,
\label{eq:f_G-PH}
\end{align}
where $\bm{e}$ denotes an $M\times 1$ vector with all elements equal
to one. The mean \michel{inter-generation} time is then given by
\begin{equation}
\E[G] = \bm{\gamma} (-\bm{\Gamma})^{-1} \bm{e}.
\label{eq:EG-PH}
\end{equation}
The phase-type distribution represents the absorption time of a Markov
chain with the transition rate matrix $\bm{\Gamma}$ started from an
initial state distributed according to the probability vector
$\bm{\gamma}$. By definition, we have 
$\bm{\gamma} \geq 0$ and $\bm{\gamma}\bm{e} = 1$. Also, $\bm{\Gamma}$
has negative diagonal elements and non-negative off-diagonal elements.
Because the Markov chain governed by $\bm{\Gamma}$ is absorbing,
$\bm{\Gamma}$ is a \textit{defective} transition rate matrix, i.e., its row
sums are non-positive (i.e., $\bm{\Gamma}\bm{e} \leq 0$) with at least
one of them being strictly negative.

We impose the common assumption that 
the transition rate matrix $\bm{\Gamma} +
(-\bm{\Gamma}\bm{e})\bm{\gamma}$ is irreducible, where 
$(-\bm{\Gamma}\bm{e})\bm{\gamma}$ corresponds to the event of
absorption and the transition to an initial state for the next \michel{inter-generation} time.
It is then ensured that  $\bm{\Gamma} + (-\bm{\Gamma}\bm{e})\bm{\gamma}$ 
has a unique stationary probability vector $\bm{\pi}$, which  solves
\[
\bm{\pi}
\{\bm{\Gamma} + (-\bm{\Gamma}\bm{e})\bm{\gamma}\}
= \bm{0},
\quad
\bm{\pi}\bm{e} = 1.
\]
By multiplying $(\bm{\Gamma}^{-1})\bm{e}$ from the right on both sides
of this equation and rearranging terms, we can verify that 
\[
\bm{\pi}(-\bm{\Gamma}\bm{e})
= 
\frac{1}{\bm{\gamma} (-\bm{\Gamma})^{-1}\bm{e}}
=
\lambda,
\]
where the last equality follows from
(\ref{eq:EG-PH}) and $\lambda = 1/\E[G]$.

\yoshiaki{

\begin{remark}
\label{remark:MAP}

In this model, the total input process 
$(X_{\total}(t))_{t \geq 0} := (X(t) + X_{\bg}(t))_{t \geq 0}$
follows a compound \textit{Markovian arrival process (MAP)}.
To be more specific, let $\bm{C}$ and $\bm{D}(x)$ (for $x \geq 0$) denote
$M \times M$ matrices defined as 
\begin{equation}
\bm{C} = \bm{\Gamma} - \lambda_{\bg} \bm{I},
\quad
\bm{D}(x) 
= 
(-\bm{\Gamma}\bm{e})\bm{\gamma} F_H(x) 
+ \lambda_{\bg} F_{H_{\bg}}(x)\bm{I},
\label{eq:C-D(x)-def}
\end{equation}
where $\bm{I}$ denotes an $M \times M$ unit matrix.
Let $S(t) \in \{1,2,\ldots,M\}$ denote the phase of the system at time
$t$, i.e., the state of the phase-type distribution of the \michel{inter-generation} time.
We define $\bm{x}_{\total}(t, x)$ as a $1 \times M$ vector whose
$j$th element $j=1,2,\ldots,M$ is given by
\[
[\bm{x}_{\total}(t)]_j = \Pr(X(t) \leq x, S(t) = j).
\]
\michel{Then} the vector $(X_{\total}(t), S(t))_{t \geq 0}$ forms a bivariate Markov process, and its 
transition law is completely determined by $\bm{C}$ and $\bm{D}(\cdot)$: as $\varDelta t \downarrow 0+$,
\[
\bm{x}_{\total}(t+\varDelta t, x)
= 
\bm{x}_{\total}(t, x) \,(\bm{I}+\bm{C} \varDelta t)
+
\int_0^x \bm{x}_{\total}(t, x-y) \,\dd \bm{D}(y) \,\varDelta t
+o(\varDelta t). 
\]
The first and second term on the right-hand side of this
equation correspond \michel{to} respectively the state-transitions without and
with an upward jump in $X_{\total}(t)$.

\end{remark}
}

\yoshiaki{
Recall that} evaluating the expression for $\E[A]$, as given in (\ref{eq:EA-by-q(y)-0}), requires Laplace inversion
as well as integration over $y\in[0,\infty)$.
It is known that the accuracy of Laplace-inversion methods is highly
dependent on the shape of the function under consideration, which
sometimes leads to inconsistent numerical results between different
inversion algorithms \cite{Davies1979}.
This led us to develop an alternative approach that utilizes our
assumption of phase-type distributed \michel{inter-generation} times.

The key idea in this alternative approach lies in expanding $f_G(\cdot)$, as given by (\ref{eq:f_G-PH}), as an
Poisson-weighted average of non-negative numbers (cf.\ the uniformization
technique \cite[P. 154]{Tijms1994}); the favorable numerical properties of the resulting computation scheme are due to the fact that the weights used all have the same sign. 

\yoshiaki{
We start by presenting \michel{a convenient representation of ${\mathbb E}[A]$}, whose proof is provided
in Appendix \ref{appendix:EA-formula-PH}.

\begin{lemma}
\label{lemma:EA-formula-PH}

The mean AoI $\E[A]$ can be expressed as
\begin{equation}
\E[A]
=
\frac{\E[H]}{\mu}
+
\sum_{k=0}^{\infty}
\frac{c_k}{\theta\E[G]}
\left\{
\frac{(k+1)(k+2)}{2\theta^2}
+
(k+1) q_{k+1}
\right\},
\label{eq:EA-formula-PH}
\end{equation}
where $\theta$ denotes the maximum absolute values of
diagonal elements of $\bm{\Gamma}$
\[
\theta := \max_i |[\bm{\Gamma}]_{i,i}|,
\]
and $c_k$ ($k=0,1,\ldots$) and $q_k$ ($k=0,1,\ldots$) denote
non-negative numbers given by
\begin{align}
c_k 
&:= 
\bm{\gamma} (\bm{I}+\theta^{-1}\bm{\Gamma})^k (-\bm{\Gamma})\,\bm{e},
\quad
k = 0,1,\ldots,
\label{eq:c-uniform}
\\
q_k &:= \int_0^{\infty}
q(y) 
\cdot \frac{e^{-\theta y}(\theta y)^k}{k!}\,
\dd y,
\quad
k = 0,1,\ldots.
\label{eq:q_k-def}
\end{align}
\end{lemma}
}


\yoshiaki{
\michel{Lemma \ref{lemma:EA-formula-PH} entails that} the mean AoI $\E[A]$ \michel{can be expressed} in terms of $q_k$ ($k=0,1,
\ldots$) defined as (\ref{eq:q_k-def}). 
Recall that, as shown in (\ref{eq:q-t-inv}),
\[
\int_0^{\infty} q(y)e^{-\omega y} \, \dd y
=
-\frac{1-\rho_{\bg}}{\omega^2}
+
\frac{1}{\omega}
\left(
\frac{f_D^*(\psi(\omega))}{\psi(\omega)}
+\E[D]
\right).
\]
This Laplace transform is related to $q_k$ ($k=0,1,\ldots$) through 
its $z$-transform $q^*(z)$ (for $|z| < 1$):
\begin{align}
q^*(z) 
:= 
\sum_{k=0}^{\infty} q_k z^k
&=
\int_0^{\infty}
q(y) e^{-(\theta-\theta z)y}
\dd y
\nonumber
\\
&=
-\frac{1-\rho_{\bg}}{(\theta - \theta z)^2}
+
\frac{1}{\theta-\theta z}
\left(
\frac{f_D^*(\psi(\theta - \theta z))}{\psi(\theta - \theta z)}
+\E[D]
\right),
\label{eq:q^*-by-psi}
\end{align}
where the second equality follows from (\ref{eq:q_k-def}).
\michel{This means that} $q_k$ ($k=0,1,\ldots$) is obtained as the $k$th coefficient of
the power series expansion of (\ref{eq:q^*-by-psi}) with respect to $z$.

\michel{Bearing in mind the form of \eqref{eq:q^*-by-psi}, the next step is to obtain expressions for $f_D^*(\cdot)$ and $\psi(\cdot)$ evaluated in the relevant arguments. This concretely means that, in order to determine} the coefficients $q_k$, the key quantities are 
the coefficients $d^{(m)}$ ($m=0,1,\ldots$) and $b^{(i)}$ ($i=0,1,
\ldots$) defined via
\begin{align}
f_D^*(\zeta - \zeta z) =& \sum_{m=0}^{\infty} d^{(m)} z^m,
\label{eq:d-m-def}
\\
b(z) :=&
\frac{\theta}{\theta + \lambda_{\bg}} \cdot z 
+\frac{\lambda_{\bg}}{\theta + \lambda_{\bg}} \cdot f_{B_{\bg}}^*(\theta - \theta z)
\nonumber
\\
=&
\sum_{i=0}^{\infty} b^{(i)} z^i,
\label{eq:b(z)-def}
\end{align}
where $\zeta$ is defined as
\begin{equation}
\zeta := \theta + \lambda_{\bg}.
\label{eq:theta-zeta}
\end{equation}
Recall that $f_{B_{\bg}}^*(s)$ is given by the solution of (\ref{eq:f_B_bg-def}). 
By definition (\ref{eq:psi-def}) of $\psi(\omega)$, we can verify that
$\psi(\theta - \theta z)$ and $f_D^*\bigl(\psi(\theta - \theta z)\bigr)
$ on the right-hand side of (\ref{eq:q^*-by-psi}) are represented as
\begin{align}
\psi(\theta - \theta z) &= 
\theta - \theta z + \lambda_{\bg} - \lambda_{\bg}f_{B_{\bg}}^*(\theta - \theta z)
\nonumber
\\
&=
\theta + \lambda_{\bg} 
- (\theta + \lambda_{\bg})
\cdot
\left\{
\frac{\theta}{\theta + \lambda_{\bg}} \cdot z 
+ \frac{\lambda_{\bg}}{\theta + \lambda_{\bg}} \cdot f_{B_{\bg}}^*(\theta - \theta z)
\right\}
\nonumber
\\
&=
\zeta - \zeta b(z)
\label{eq:psi-by-b}
=
\zeta - \zeta \sum_{i=0}^{\infty} b^{(i)} z^i,
\\
f_D^*\bigl(\psi(\theta - \theta z)\bigr) 
&= 
\sum_{m=0}^{\infty} d^{(m)}\psi(\theta - \theta z)
=
\sum_{m=0}^{\infty} 
d^{(m)} 
\cdot
\Bigl\{
\zeta - \zeta \sum_{i=0}^{\infty} b^{(i)} z^i
\Bigr\}.
\nonumber
\end{align}
Furthermore, the mean system delay $\E[D]$ is given by
\begin{equation}
\E[D] = \frac{1}{\zeta}\sum_{m=0}^{\infty} md^{(m)},
\label{eq:ED-PH}
\end{equation}
which is obvious from the relation
\[
\E[D] = (-1) \cdot \lim_{s \to 0+} \frac{\dd}{\dd s} f_D^*(s)
= 
\frac{1}{\zeta}\lim_{z \to 1} \frac{\dd}{\dd z} f_D^*(\zeta - \zeta
z).
\]
Therefore, once $d^{(m)}$ ($m=0,1,\ldots$) and $b^{(i)}$ ($i=0,1,
\ldots$) are obtained, the coefficients $q_k$ ($k=0,1,\ldots$) in
(\ref{eq:EA-formula-PH}) are determined from (\ref{eq:q^*-by-psi}) in a straightforward way. \michel{In Lemma \ref{lemma:b-coeff} we develop a computationally sound scheme for evaluating
the $b^{(i)}$, while Lemma \ref{lemma:d-m} provides the $d^{(m)}$.}

\begin{remark}
For any non-negative random variable $X$ with LST $f_X^*(s)$, 
we have the following expression for any $\eta > 0$:
\[
f_X^*(\eta - \eta z)
=
\int_0^{\infty} e^{-(\eta - \eta z)x}\, \dd F_X(x)
=
\sum_{m=0}^{\infty} 
z^m
\int_0^{\infty} \frac{e^{-\eta x} (\eta x)^m}{m!} \, \dd F_X(x)
=
\sum_{m=0}^{\infty} 
x^{(m)}
z^m,
\]
where
\[
x^{(m)} := \int_0^{\infty} \frac{e^{-\eta x} (\eta x)^m}{m!} \, \dd
F_X(x),
\quad
m = 0,1,\ldots.
\]
Therefore, $x^{(m)}$ represents the probability that a Poisson process
of rate $\eta$ generates $m$ arrivals in a random interval of length
$X$.

From this observation, we can verify that $d^{(m)}$ ($m=0,1,\ldots$)
and $b^{(i)}$ ($i=0,1, \ldots$) defined in (\ref{eq:d-m-def}) and
(\ref{eq:b(z)-def}) are proper probability mass functions.
\end{remark}

\michel{We can compute $d^{(m)}$ ($m=0,1,\ldots$) and $b^{(i)}$ ($i=0,1, \ldots$) in a numerically stable way, as follows.}
As we will see, they are given in terms of \michel{the} probability mass functions 
$\hat{h}^{(m)}$ and $\hat{h}_{\bg}^{(m)}$ ($m=0,1,\ldots$) \michel{that are defined as}
\begin{equation}
\hat{h}^{(m)} 
:= 
\int_0^{\infty} 
\frac{e^{-\zeta x/\mu}(\zeta x/\mu)^m }{m!}
\dd F_H(x),
\quad
\hat{h}_{\bg}^{(m)} 
:= 
\int_0^{\infty} 
\frac{e^{-\zeta x/\mu}(\zeta x/\mu)^m }{m!}
\dd F_{H_{\bg}}(x).
\label{eq:h-uniform}
\end{equation}
It is noted that $\hat{h}^{(m)}$ (and $\hat{h}_{\bg}^{(m)}$,
respectively) represents the probability that a Poisson process of rate $\zeta$ generates 
$m$ arrivals in \michel{an interval of random length $H/\mu$} ($H_{\bg}/\mu$, respectively). 
\michel{As a consequence, for given specific
distributions of the random variables $H$ and $H_{\rm bg}$, the computation of $\hat{h}^{(m)}$ and
$\hat{h}_{\bg}^{(m)}$ is straightforward.}

The sequences $b^{(i)}$ ($i=0,1,\ldots$) and $d^{(m)}$ ($m=0,1,\ldots$)
are determined by Lemmas~\ref{lemma:b-coeff} and \ref{lemma:d-m} below. 
Their proofs are provided in Appendices \ref{appendix:b-coeff} and \ref{appendix:d-m}. 
\begin{lemma}
\label{lemma:b-coeff}

$b^{(i)}$ ($i = 0,1,\ldots$) is given by
\begin{equation}
b^{(i)} 
= 
\frac{\id{i=1}\theta}{\theta+\lambda_{\bg}}
+
\frac{\lambda_{\bg}}{\theta+\lambda_{\bg}}
\sum_{k=i}^{\infty} 
\frac{1}{k-i+1}
\comb{k}{i}
\left(\frac{\theta}{\theta + \lambda_{\bg}}\right)^i
\left(\frac{\lambda_{\bg}}{\theta + \lambda_{\bg}}\right)^{k-i}
y_{k-i+1}^{(k)},
\label{eq:b^i-def}
\end{equation}
where $y_k^{(m)}$ given recursively by
\begin{align}
y_1^{(m)} &= \hat{h}_{\bg}^{(m)},
\quad
m = 0,1,\ldots,
\label{eq:y_k-recursion-1}
\\
y_k^{(m)} &= \sum_{i=0}^m y_{k-1}^{(i)} \hat{h}_{\bg}^{(m-i)},
\quad
k = 1,2,\ldots,\,
m = 0,1,\ldots,
\label{eq:y_k-recursion-k}
\end{align}
with $\hat{h}_{\bg}^{(m)}$ defined as (\ref{eq:h-uniform}).
\end{lemma}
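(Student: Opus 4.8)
The plan is to extract the power-series coefficients of $b(z)$ in (\ref{eq:b-def}) directly. The linear term $\frac{\theta}{\theta+\lambda_{\bg}}z$ contributes only to the coefficient of $z^1$, which accounts for the summand $\id{i=1}\theta/(\theta+\lambda_{\bg})$ in (\ref{eq:b^i-def}); all the remaining work lies in expanding $\frac{\lambda_{\bg}}{\theta+\lambda_{\bg}}f_{B_{\bg}}^*(\theta-\theta z)$ as a series in $z$ and identifying its coefficients with the stated expression.

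First I would invoke the classical (Borel--Tanner/Kendall) series representation of the busy-period transform of the background $\mathrm{M/G/1}$ queue, which has arrival rate $\lambda_{\bg}$ and service time $H_{\bg}/\mu$ with CDF $B(\cdot):=F_{H_{\bg}/\mu}(\cdot)$:
\[
f_{B_{\bg}}^*(s)=\sum_{j=1}^{\infty}\frac{1}{j}\int_{0}^{\infty}e^{-st}\,\frac{e^{-\lambda_{\bg}t}(\lambda_{\bg}t)^{j-1}}{(j-1)!}\,\dd B^{\star j}(t),\qquad s\ge0,
\]
where $B^{\star j}$ denotes the $j$-fold convolution. This identity is the unique modulus-$\le 1$ solution of the Kendall functional equation stated above (\cite{Abate1995,Takagi1991}); probabilistically it records that a busy period serving $j$ customers lasts a sum of $j$ i.i.d.\ service times during which exactly $j-1$ Poisson($\lambda_{\bg}$) arrivals occur, with the factor $1/j$ being the usual cycle-lemma factor. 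Alongside this I would record the identity $y_j^{(m)}=\int_{0}^{\infty}\frac{e^{-\zeta t}(\zeta t)^{m}}{m!}\,\dd B^{\star j}(t)$, which follows from (\ref{eq:h-uniform}), the change of variables $t=x/\mu$, and the fact that the defining recursion (\ref{eq:y_k-recursion-1})--(\ref{eq:y_k-recursion-k}) is a discrete convolution in the upper index, so that the generating function of $(y_j^{(m)})_m$ is the $j$-th power of that of $(\hat h_{\bg}^{(m)})_m$.

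Next, I would set $s=\theta(1-z)$, write $e^{-\theta(1-z)t}=e^{-\theta t}\sum_{i\ge0}(\theta z t)^i/i!$, and interchange the sums with the integral — legitimate by Tonelli since all terms are non-negative for $z\in[0,1)$ and the total equals $f_{B_{\bg}}^*(\theta(1-z))\le1$. Merging the two Poisson factors via $e^{-\theta t}e^{-\lambda_{\bg}t}=e^{-\zeta t}$ and the elementary identity
\[
\frac{(\theta t)^i}{i!}\cdot\frac{(\lambda_{\bg}t)^{j-1}}{(j-1)!}=\binom{i+j-1}{i}\Bigl(\frac{\theta}{\theta+\lambda_{\bg}}\Bigr)^{i}\Bigl(\frac{\lambda_{\bg}}{\theta+\lambda_{\bg}}\Bigr)^{j-1}\frac{(\zeta t)^{i+j-1}}{(i+j-1)!},
\]
the inner integral becomes $\int_0^\infty \frac{e^{-\zeta t}(\zeta t)^{i+j-1}}{(i+j-1)!}\,\dd B^{\star j}(t)=y_{j}^{(i+j-1)}$. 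Collecting the coefficient of $z^{i}$ then yields that coefficient in $f_{B_{\bg}}^*(\theta(1-z))$ as $\sum_{j\ge1}\frac{1}{j}\binom{i+j-1}{i}(\tfrac{\theta}{\theta+\lambda_{\bg}})^i(\tfrac{\lambda_{\bg}}{\theta+\lambda_{\bg}})^{j-1}y_j^{(i+j-1)}$; multiplying by $\lambda_{\bg}/(\theta+\lambda_{\bg})$, adding the linear term, and re-indexing $k:=i+j-1$ (so $j=k-i+1$ and the sum over $k$ starts at $i$) reproduces (\ref{eq:b^i-def}) verbatim.

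The only real obstacle is the bookkeeping of the double series and the justification that coefficients of $z^i$ may be read off term by term; the non-negativity of every summand for $z\in[0,1)$, together with the uniqueness of power-series coefficients, disposes of this. A fully self-contained alternative that avoids quoting the Borel--Tanner form would be to verify directly that the right-hand side of (\ref{eq:b^i-def}), reassembled into $b(z)$, satisfies $\psi(\theta-\theta z)=\zeta-\zeta b(z)$ through Kendall's equation and the uniqueness already established in the excerpt; this route is workable but computationally heavier, so I would present the busy-period expansion as the main argument.
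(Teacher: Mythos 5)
Your argument coincides with the paper's: it begins from the same Borel--Tanner integral representation of the busy-period transform (the paper cites Cohen, p.~653), substitutes $s=\theta(1-z)$, expands $e^{\theta z t}$, merges the two exponentials into a single Poisson kernel via the same binomial identity, identifies the inner integral as $y_{j}^{(i+j-1)}$, and reindexes with $k=i+j-1$ to obtain \eqref{eq:b^i-def}. The only cosmetic difference is that you make explicit the Tonelli justification for interchanging sums and integrals and the generating-function reading of the recursion \eqref{eq:y_k-recursion-1}--\eqref{eq:y_k-recursion-k}, which the paper leaves implicit.
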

\begin{lemma}
\label{lemma:d-m}

Let $(\bm{Q}^{(n)})_{n=0,1,\ldots}$ denote a sequence of $M \times M$
matrices defined as
\begin{equation}
\bm{Q}^{(0)} = \bm{C},
\quad
\bm{Q}^{(n)}
=
\bm{C} +
\sum_{k=0}^{\infty} 
\hat{\bm{D}}^{(k)}
\cdot
(\bm{I}+\zeta^{-1}\bm{Q}^{(n-1)})^k,
\;\;
n = 1,2,\ldots,
\label{eq:Q-recursion-uniform}
\end{equation}
where $\bm{C}$ is given by (\ref{eq:C-D(x)-def}) and 
$\hat{\bm{D}}^{(m)}$ ($m=0,1,\ldots$) is defined as
\begin{equation}
\hat{\bm{D}}^{(m)}
=
(-\bm{\Gamma}\bm{e})\bm{\gamma}
\hat{h}^{(m)}
+ \bm{I}\lambda_{\bg}\hat{h}_{\bg}^{(m)}.
\label{eq:hatD-m}
\end{equation}
$(\bm{Q}^{(n)})_{n=0,1,\ldots}$ is an elementwise non-decreasing sequence
and its limit $\bm{Q} := \lim_{n \to \infty} \bm{Q}^{(n)}$ represents a
proper transition rate matrix of an irreducible continuous-time Markov
chain with $M$ states. Let $\bm{\kappa}$ denote the stationary
probability vector associated with $\bm{Q}$:
\begin{equation}
\bm{\kappa}\bm{Q} = \bm{0},
\quad
\bm{\kappa}\bm{e} = 1.
\label{eq:kappa-def}
\end{equation}
The coefficients $d^{(m)}$ ($m=0,1,\ldots$) of $d^*(\zeta-\zeta z)$ are then given by 
\begin{equation}
d^{(m)} 
= 
\frac{1}{\lambda}
\sum_{k=0}^m \hat{\bm{v}}^{(k)}
(-\bm{\Gamma}\bm{e})\hat{h}^{(m-k)},
\quad
m =0,1,\ldots,
\label{eq:d(m)-PH}
\end{equation}
where $\hat{\bm{v}}^{(m)}$ is given by
\begin{align}
\hat{\bm{v}}^{(0)} 
&= (1-\rho-\rho_{\bg})\bm{\kappa}
(\bm{I}-\bm{R}^{(0)})^{-1},
\label{eq:v-uniform-0}
\\
\hat{\bm{v}}^{(m)} 
&= \left(\sum_{k=0}^{m-1} \hat{\bm{v}}^{(k)}\bm{R}^{(m-k)}\right)
(\bm{I}-\bm{R}^{(0)})^{-1},
\quad
m = 1,2,\ldots,
\label{eq:v-uniform-m}
\end{align}
with $\bm{R}^{(m)}$ ($m=0,1,\ldots$) defined as
\begin{equation}
\bm{R}^{(m)} 
= 
\zeta^{-1}
\sum_{k=0}^{\infty}
\hat{\bm{D}}^{(m+k+1)}
(\bm{I}+\zeta^{-1}\bm{Q})^k.
\label{eq:R-uniform}
\end{equation}
\end{lemma}

\begin{remark}
\label{remark:IR-inverse_IQ_subst}

The following facts are shown in Appendix \ref{appendix:d-m} (see
Remark \ref{remark:IR-inverse_IQ_subst_proof}), which
ensures the well-behavedness of the computational algorithm to be
developed. \michel{More specifically,}
\begin{itemize}
\item[(i)] $\bm{I}+\zeta^{-1}\bm{Q}^{(n)}$ ($n=0,1,\ldots$) is
substochastic, i.e., it has non-negative elements with row sums not exceeding one, \michel{and} 
\item[(ii)] the matrix inverse $(\bm{I}-\bm{R}^{(0)})^{-1}$ exists.
\end{itemize}
\end{remark}
}

\yoshiaki{
We are now in a position to present a (numerically stable) computation
scheme for the mean AoI $\E[A]$; the proof of the following theorem is
provided in Appendix \ref{appendix:meanA-PH}.
\begin{theorem}\label{thm_meanA}
The mean AoI is given by (\ref{eq:EA-formula-PH}) with $q_k$
($k = 0,1,\ldots$) determined recursively by 
\begin{align}
q_0 
&= 
-\frac{1-\rho_{\bg}}{\theta^2}
+
\frac{\E[D]}{\theta} + \frac{1}{\theta\zeta}
\sum_{\ell=0}^{\infty} b_{\ell}^{(0)}\sum_{m=0}^{\ell}d^{(m)},
\label{eq:q-recursion-0}
\\
q_k
&=
q_{k-1}
-\frac{1-\rho_{\bg}}{\theta^2}
+ 
\frac{1}{\theta\zeta}
\sum_{\ell=0}^{\infty} 
b_{\ell}^{(k)}\sum_{m=0}^{\ell}d^{(m)},
\quad
k = 1,2,\ldots,
\label{eq:q-recursion-k}
\end{align}
where $d^{(m)}$ ($m = 0,1,\ldots$) is given in Lemma \ref{lemma:d-m},
and $\E[D]$ is given by (\ref{eq:ED-PH}).
Also, $b_{\ell}^{(k)}$ (for $\ell = 0,1,\ldots$, $k=0,1,\ldots$) is
defined as
\begin{equation}
b_0^{(k)} = \id{k=0},
\quad
b_{\ell}^{(k)} = \sum_{i=0}^k b_{\ell-1}^{(k-i)} b^{(i)},
\;\;
\ell = 1,2,\ldots,
\label{eq:b_ell-recursion}
\end{equation}
with $b^{(i)}$ ($i=0,1,\ldots$) given in Lemma \ref{lemma:b-coeff}.
\end{theorem}
}

In Figure \ref{figure:algorithm}, we summarize the resulting computational
algorithm for $\E[A]$.
\yoshiaki{An implementation of the algorithm is publicly available at \cite{github}.}
We note that in the computations it is necessary to truncate some infinite series; cf.\ (\ref{eq:q-recursion-0}) and (\ref{eq:q-recursion-k}).
To check the numerical precision of the procedure, the interpretation
(\ref{eq:q-as-mean}) of $q(y)$ as the
mean transient workload in the M/GI/1 queue is useful.
More specifically, we can verify from (\ref{eq:q_k-def}) that 
\begin{align*}
\lim_{k \to \infty} q_k 
&= 
\frac{1}{\theta}
\cdot
\lim_{k \to \infty} 
\int_0^{\infty}
q(y) 
\cdot \frac{e^{-\theta y}(\theta y)^k \theta}{k!}\,
\dd y
\\
&=
\frac{1}{\theta}
\cdot
\lim_{k \to \infty}
\E\left[
q(E_{k+1})
\right]
\\
&= 
\frac{1}{\theta}
\cdot
\lim_{y \to \infty} q(y),
\end{align*}
where $E_k$ denotes a random variable following a $k$-stage Erlang distribution
with rate $\theta$.
Because $q(y)$ represents the transient virtual waiting time in the
M/GI/1 queue with only the background stream (cf.\ (\ref{eq:q-as-mean})),
it tends to the stationary mean virtual waiting time in the M/GI/1
queue as $y \to \infty$. We thus have
\begin{equation}\label{eq:precloss}
\lim_{k \to \infty} q_k 
=
\frac{1}{\theta}
\cdot
\frac{\lambda_{\bg}\E[H_{\bg}^2]}{2(1-\rho_{\bg})\mu^2}.
\end{equation}
Therefore, for large values of $k$, the difference between the computed $q_k$ and the right-hand side of \eqref{eq:precloss} serves as a
straightforward indication of the loss in numerical precision due to the
truncation.

\begin{figure}[tp]
\centering
\fbox{%
\begin{minipage}{0.95\textwidth}
\begin{itemize}
\item[\hbox to 2.2ex{Input:\hss}] 
\hspace*{3.0ex}
$\lambda_{\bg}$, $F_{H_{\bg}}(\cdot)$, $(\bm{\gamma}, \bm{\Gamma})$,
$F_H(\cdot)$, $\mu$
\item[\hbox to 2.2ex{Output:\hss}] 
\hspace*{3.0ex}
$\E[A]$
\item[(a)]
Compute $\bm{C}$ with (\ref{eq:C-D(x)-def}), 
$c_k$ ($k = 0,1,\ldots$) with (\ref{eq:c-uniform}),
and $\hat{h}^{(m)}$  and $\hat{h}_{\bg}^{(m)}$
($m = 0,1,\ldots$) with (\ref{eq:h-uniform}).

\item[(b)]
Compute $\bm{Q}$ by letting $\bm{Q}^{(0)} = \bm{C}$ and iterating
(\ref{eq:Q-recursion-uniform}) until convergence.

\item[(c)]
Compute $\bm{R}^{(m)}$ ($m = 0,1,\ldots$) with (\ref{eq:R-uniform})
and compute $\hat{\bm{v}}^{(m)}$ ($m = 0,1,\ldots$) with
(\ref{eq:v-uniform-0}) and (\ref{eq:v-uniform-m}).
\yoshiaki{Also, compute $\bm{\kappa}$ satisfying (\ref{eq:kappa-def}).}

\item[(d)]
Compute $d^{(m)}$ ($m = 0,1,\ldots$) with (\ref{eq:d(m)-PH}) and
compute $\E[D]$ with (\ref{eq:ED-PH}).

\item[(e)]
Compute $b^{(i)}$ ($i = 0,1,\ldots$) with (\ref{eq:y_k-recursion-1}),
(\ref{eq:y_k-recursion-k}), and (\ref{eq:b^i-def}).

\item[(f)]
Compute $\E[A]$ with (\ref{eq:EA-formula-PH}),
(\ref{eq:q-recursion-0}), and (\ref{eq:q-recursion-k}).
\end{itemize}\end{minipage}}
\caption{The algorithm for computing the mean AoI $\E[A]$ in the PH+M/GI+GI/1 queue.}
\label{figure:algorithm}
\end{figure}

\section{Numerical experiments}\label{sec:num}

\michel{In this section we provide a sequence of numerical experiments that systematically assess the impact of the model parameters on the expectation of the AoI. In particular, we will assess the precision of the bounds on ${\mathbb E}[A]$ that were given in Theorem \ref{theorem:A-bound-mean}, where the benchmark is computed, for the class of PH+M/GI+GI/1 systems, by applying Theorem \ref{thm_meanA}. The main conclusion is that the upper bound is particularly tight. This insight leads to a closed-form expression for the near-optimal \yoshiaki{generation} rate $\lambda$.}

We work with the setup of Section~\ref{sec:PH} where the \michel{inter-generation} times
$(G_n)_{n=1,2,\ldots}$ follow a phase-type distribution. \michel{As we are in the PH-M/GI/GI/1 setting, we can use the algorithm given in Figure \ref{figure:algorithm} to numerically evaluate ${\mathbb E}[A].$}
The idea is that we rely on the two-moments fit advocated in \cite{Tijms1994}, which identifies a convenient low-dimensional phase-type distribution with a given mean and variance.
With the coefficient of variation of the random variable $G$ defined as
\[s_G = \sqrt{\frac{{\mathbb V}{\rm ar}[G]}{({\mathbb E}[G])^2}},\]
it distinguishes between the cases $s_G\le 1$ and $s_G>1.$ 
In the former case, we fit a mixture of two Erlang distributions: with probability $p$ the fitted distribution is of Erlang($k,\nu$) type, and with probability $1-p$ of Erlang($k+1,\nu$) type. 
The integer-valued parameter $k$ is chosen such that $s_G$ lies in the interval between $1/(k+1)$ and $1/k$, and then the parameters $p$ and $\nu$ are chosen such that the mixed Erlang random variable has the given mean and variance. In the latter case, we fit a mixture of two Exponential distributions: with probability $p$ the fitted distribution is exponential with parameter $\nu_1$, and with probability $1-p$ exponential with parameter $\nu_2$. 
Under the additional constraint of `balanced means', one can uniquely pick the parameters $p$, $\nu_1$, and $\nu_2$ so that the resulting random variable has the given mean and variance.

In a first experiment we study the effect of the `variability' of the \michel{inter-generation} times. 
Through this experiment we can quantify the error we would make by wrongly assuming that these \michel{inter-generation} times are exponentially distributed. 
We have picked the following parameters.
The \yoshiaki{generation} rate $\lambda$ of the tagged stream, which is the reciprocal of the mean \michel{inter-generation} time ${\mathbb E}[G]$, was put equal to $0.05$.
The (Poissonian) arrival rate $\lambda_{\rm bg}$ corresponding to the background traffic equals $0.8$, $0.85$, and $0.9$.
The service times of the tagged stream as well as the background stream have been chosen deterministic of size 1, and time is normalized such that the service rate $\mu$ equals $1$.
Figure \ref{Fig_E1} shows the mean AoI as a function of the coefficient of variation $s_G$ pertaining to the \michel{inter-generation} times. 
As expected, ${\mathbb E}[A]$ is an increasing function of $s_G$: more variability typically leads to longer queues.
The more important conclusion, however, is that misspecifying our GI+M/GI+GI/1 model by its more elementary M+M/GI+GI/1 counterpart, that was covered by \cite{Moltafet2020,Inoue2024}, may lead to significant errors.
Indeed, for the parameters considered, the difference between ${\mathbb E}[A]$ in the M+M/GI+GI/1 case (i.e., the special case of the GI+M/GI+GI/1 system in which $s_G=1$) and the GI+M/GI+GI/1 can already be a factor~2. 

From Figure \ref{Fig_E1}, we also observe that these three curves with
different $\rho_{\bg}$ appear to be in parallel with each other, i.e.,
the variability $s_G$ of the \michel{inter-generation} times has a common impact
on $\E[A]$ regardless of the background traffic intensity. Notice that
the mean AoI is approximated by the sum of the
mean forward recurrence time $\E[G^2]/(2\E[G]) = (1+s_G^2)\E[G]/2$ of
\michel{inter-generation} times and the mean system delay; cf.\ \cite[Equation (2)]{Inoue2019}, and also
\eqref{eq:EA-lower}--\eqref{eq:EA-upper-W_star} in the present paper. 
Therefore, the effect of $s_G$ on $\E[A]$ is roughly evaluated as
$(1+s_G^2)\E[G]/2$; this is indeed verified in Figure
\ref{Fig_diff_CvG}, where the `residual component' $\E[A]- (1+s_G^2)\E[G]/2$ as a
function of $s_G$ is shown to be almost constant.

\begin{figure}[tp]
\centering
\includegraphics[scale=1.0]{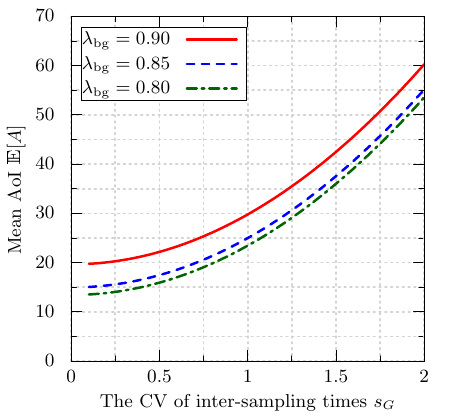}
\caption{\label{Fig_E1}The mean AoI, as a function of the coefficient of variation (CV) of the \michel{inter-generation} times $s_G$.}
\mbox{}
\\
\includegraphics[scale=1.0]{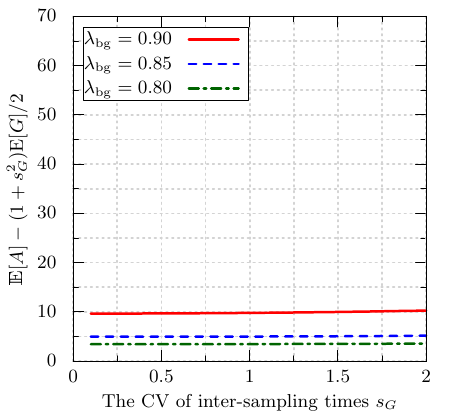}
\caption{\label{Fig_diff_CvG}The AoI value of Figure \ref{Fig_E1} subtracted by the mean forward-recurrence time of the \michel{inter-generation} times.}
\end{figure}

In the second experiment, we pick the parameters as before, but now
with $s_G$ being fixed at $0.3$ and varying $\lambda$ \yoshiaki{and the coefficient of variation $s_H$ of service times}. The result is
shown in Figure \ref{Fig_E2}, where the explicit bounds derived in
Theorem~\ref{theorem:A-bound-mean} are also plotted.
We observe a pattern that was found for more elementary AoI models: the mean AoI first decreases in $\lambda$, and later increases. 
The explanation for the phenomenon is that small values of $\lambda$ mean that the monitor is provided with relatively little information, leading to untimely knowledge and hence a large AoI, whereas for large values of $\lambda$ there will be more queueing delay and hence also a large AoI.
The shape of the graph interestingly means that there is an optimal value of $\lambda$, i.e., a value of $\lambda$ that minimizes ${\mathbb E}[A].$
In this experiment we also observe that the optimal $\lambda$ decreases in $\lambda_{\rm bg}$.
To understand this, recall that the mean AoI is approximated by the
sum of the mean forward recurrence time of \michel{inter-generation} times and the mean system delay as mentioned above. Then observe that the former is a decreasing function of the \yoshiaki{generation} rate $\lambda$, whereas the latter is an increasing function of $\lambda$.
For smaller values of the rate $\lambda_{\rm bg}$, the system delay significantly increases from a smaller value of $\lambda$ on, entailing that the optimal $\lambda$ becomes smaller.

\michel{The most important conclusion from Figure \ref{Fig_E2} is that the explicit upper bound accurately follows the exact curve of the mean AoI $\E[A]$ as a function of the
\yoshiaki{generation} rate $\lambda$. In particular, the value of $\lambda$ that minimizes the upper bound is a precise approximation of the value of $\lambda$ that minimizes the mean AoI $\E[A]$. Interestingly, the optimizer of the upper bound can be given in closed form.  Indeed, replacing  ${\mathbb E}[G]$ by $\lambda^{-1}$ and ${\mathbb E}[G^2]$ by $(s_G^2+1)\,\lambda^{-2}$ (so as to enforce that $G$ has mean $\lambda^{-1}$ and coefficient of variation $s_G$) in the upper bound of Theorem \ref{theorem:A-bound-mean}, the rate 
$\lambda^\star$ that optimizes this upper bound is the minimizer of ${\mathbb E}^+_\lambda[A]$, where 
\begin{equation}\label{eq:foc}{\mathbb E}^+_\lambda[A]:=\frac{\lambda{\mathbb E}[{H^2}]+\lambda_{\bg}{\mathbb E}[H_{\bg}^2]}{2\mu(\mu-\lambda{\mathbb E}[H]-\lambda_{\bg}{\mathbb E}[H_{\bg}])}
+ 
\frac{{\mathbb E}[H]}{\mu}
+ 
\frac{s_G^2+1}{2\lambda}.\end{equation}
Observing that ${\mathbb E}^+_\lambda[A]$ is a convex function of $\lambda$, we can obtain $\lambda^\star$ by equating its first derivative to zero.
Taking the first derivative of the expression displayed in the right-hand side of \eqref{eq:foc}, 
\[\frac{1}{2\mu}\frac{{\mathbb E}[H^2]\,(\mu-\lambda_{\bg}{\mathbb E}[H_{\bg}])+\lambda_{\bg}{\mathbb E}[H_{\bg}^2]\,{\mathbb E}[H]}{(\mu-\lambda{\mathbb E}[H]-\lambda_{\bg}{\mathbb E}[H_{\bg}])^2}- \frac{s_G^2+1}{2\lambda^2}.\]
After some elementary calculations we thus find that $\lambda^\star$ satisfies
\begin{equation}\frac{\mu-\lambda{\mathbb E}[H]-\lambda_{\bg}{\mathbb E}[H_{\bg}]}{\Omega}=\frac{\lambda}{\sigma_G},\label{eq:firstordercondition}
\end{equation}
where we denote $\sigma_G:=\sqrt{s_G^2+1}$ and 
\[\Omega :=\sqrt{{\mathbb E}[H^2]+\big({\mathbb E}[H]{\mathbb E}[H_\bg^2]-{\mathbb E}[H^2]{\mathbb E}[H_\bg]\big)\,\lambda_\bg/\mu};\]
note that the expression under the root-sign is positive due to $\rho_{\rm bg}<1.$
 After again a few straightforward steps, we conclude that
\begin{equation}\lambda^\star =\frac{\mu(1-\rho_\bg)\sigma_G}{\Omega\,+\,{\mathbb E}[H]\sigma_G}.\label{eq:optimallambda}\end{equation}
This expression provides us with various useful quantitative insights that are highly relevant from a practical standpoint: it for instance reveals how $\lambda^\star$ decreases as $\varrho_\bg\uparrow 1$ and how $\lambda^\star$ increases in $\sigma_G$ (and hence in $s_G$). 
As a next step, we can insert $\lambda^\star$ into the upper bound to get insight into the mean AoI as a function of the model parameters. Applying that $\lambda^\star$ satisfies \eqref{eq:firstordercondition}, we directly find that our optimal upper bound reduces to
\begin{align*}
{\mathbb E}^+_{\lambda^\star}[A]  &=\frac{\lambda^\star{\mathbb E}[{H^2}]+\lambda_{\bg}{\mathbb E}[H_{\bg}^2]}{2\mu}\frac{\sigma_G}{\lambda^\star \Omega}+ 
\frac{{\mathbb E}[H]}{\mu}
+ 
\frac{\sigma_G^2}{2\lambda^\star}
\\
&=\frac{\E[H^2]\sigma_G}{2\mu\Omega}+\frac{{\mathbb E}[H]}{\mu}+\frac{1}{\lambda^\star}\left(\frac{\lambda_{\rm bg}{\mathbb E}[H_{\rm bg}^2]\sigma_G}{2\mu\Omega}+\frac{\sigma^2_G}{2}\right)
\\
&=\frac{\E[H^2]\sigma_G}{2\mu\Omega}+\frac{{\mathbb E}[H]}{\mu}+\frac{\Omega\,+\,{\mathbb E}[H]\sigma_G}{\mu(1-\rho_\bg)\sigma_G}\left(\frac{\lambda_{\rm bg}{\mathbb E}[H_{\rm bg}^2]\sigma_G}{2\mu\Omega}+\frac{\sigma^2_G}{2}\right),
\end{align*}
where in the last step we plugged in the expression for $\lambda^\star$ identified in \eqref{eq:optimallambda}. After elementary but tedious computations, it turns out that this expression simplifies to
\begin{equation}
{\mathbb E}^+_{\lambda^\star}[A]  = \frac{2\mu\Omega\sigma_G+2(\mu-\lambda_{\rm bg}\E[H_{\rm bg}]){\mathbb E}[H]+\lambda_\bg \E[H_{\rm bg}^2]+\mu\E[H]\sigma^2_G}{2\mu(\mu-\lambda_{\rm bg}\E[H_{\rm bg}])}.
\label{eq:EA-star}
\end{equation}
Figure \ref{Fig_E2} suggests that the AoI minimizing $\lambda$, to be denoted by $\lambda_{\rm opt}$, is close to $\lambda^*$ (i.e., the $\lambda$ that minimizes the upper bound); in the figure the points $(\lambda^\star,{\mathbb E}^+_{\lambda^\star}[A])$ have been made visible by stars ($\star$). Table \ref{tab:EA} shows that, across a set of representative examples, the expected AoI corresponding to $\lambda_{\rm opt}$ is just a tiny fraction lower than the expected AoI corresponding to $\lambda^\star$, as given by \eqref{eq:EA-star}; the difference is typically well below $0.1\%$. This justifies the use of $\lambda^\star$, as given by \eqref{eq:optimallambda}, as a proxy for the AoI minimizing generation rate.}

It is further noted from Figure \ref{Fig_E2} that the lower bound \michel{is a decreasing function of $\lambda$ which}
significantly underestimates
$\E[A]$; this is in particular the case when $\lambda$ has a large impact on the system congestion, 
i.e., in a situation where $\rho$ is close to the
`remaining capacity' $1-\rho_{\bg}$ of the system. 
To obtain further insights related to this perspective, we set
$\lambda=0.05$ and scale up $\lambda_{\bg}$ and $\mu$ simultaneously
while keeping $\rho_{\bg}$ fixed, where the other parameters are the same as those underlying Figure~\ref{Fig_E2}. Figure~\ref{Fig_E3} plots the exact value of $\E[A]$ as well as the corresponding bounds, as functions of the occupancy $\rho_{\bg}
/(\rho+\rho_{\bg})$ of the background traffic. 
From this figure, we observe that the explicit bounds in Theorem~\ref{theorem:A-bound-mean} are tight when the background traffic is
dominant and the single sensor accounts for at most 1\% of the traffic; 
in practice, modern broadband communication systems simultaneously serve vast
numbers of data flows, thus resulting in the occupancy
of a single sensor typically dropping far below 1\%. In this sense, Theorem~\ref{theorem:A-bound-mean} provides a simple yet highly accurate measure of
`information freshness' for monitoring systems with coexisting background
streams.

\begin{figure}
\centering
\includegraphics[scale=1.0]{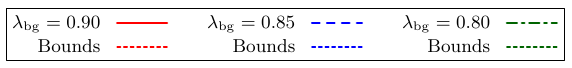}
\\[1ex]
\begin{minipage}[t]{0.49\textwidth}
\centering
\includegraphics[scale=1.0]{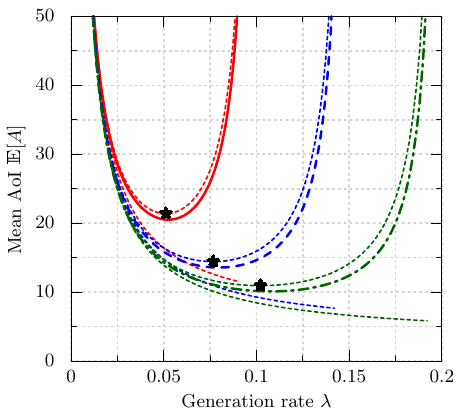}
\mbox{}
\\[-5ex]
(a) $s_H = 0$
\end{minipage}
\begin{minipage}[t]{0.49\textwidth}
\centering
\includegraphics[scale=1.0]{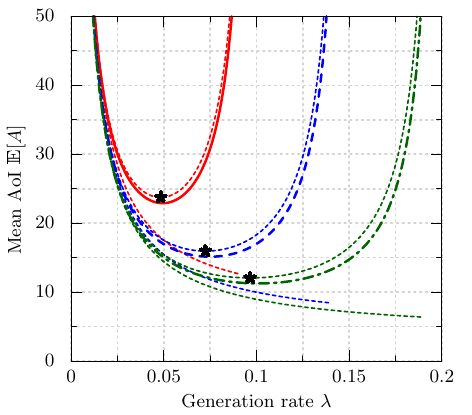}
\mbox{}
\\[-5ex]
(b) $s_H = 0.5$
\end{minipage}
\\[3ex]
\begin{minipage}[t]{0.49\textwidth}
\centering
\includegraphics[scale=1.0]{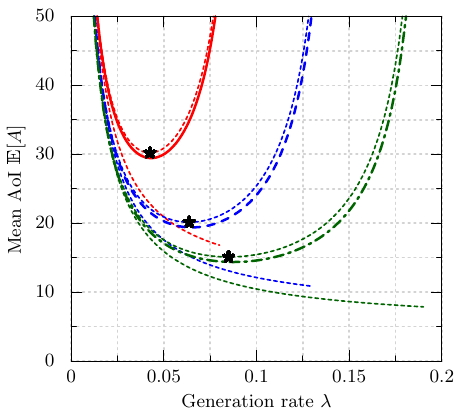}
\mbox{}
\\[-5ex]
(c) $s_H = 1$
\end{minipage}
\begin{minipage}[t]{0.49\textwidth}
\centering
\includegraphics[scale=1.0]{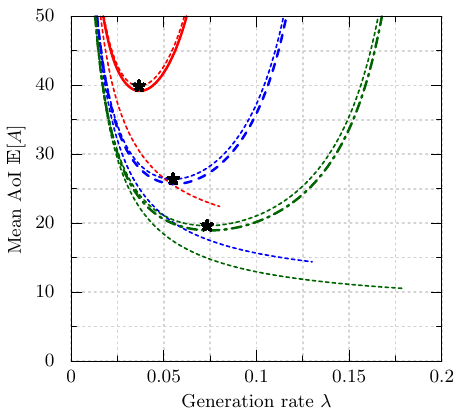}
\mbox{}
\\[-5ex]
(d) $s_H = 1.5$
\end{minipage}
 \caption{\label{Fig_E2}The mean AoI  as a function of the coefficient of the \yoshiaki{generation} rate $\lambda$, \yoshiaki{for different values of the coefficient of variation $s_H$ of the service times}. The explicit bounds \michel{of} Theorem \ref{theorem:A-bound-mean} have also been plotted. \yoshiaki{The $\star$ marks in the figure represent \michel{the points} ($\lambda^{\star}$, $\E_{\lambda^{\star}}^+[A]$) given by the closed-form formulas (\ref{eq:optimallambda}) and (\ref{eq:EA-star}), \michel{respectively}.}}
\end{figure}

\begin{table}[tp]
\centering
\yoshiaki{
\caption{\label{tab:EA}\yoshiaki{Comparison of $\lambda^{\star}$ given by (\ref{eq:optimallambda}) with the optimal $\lambda$ (denoted by $\lambda_{\mathrm{opt}}$) that minimizes $\E[A]$. The column ``$\E[A]$ for $\lambda_{\rm opt}$'' provides the value of $\E[A]$ at $\lambda = \lambda_{\rm opt}$, while the column ``$\E[A]$ for $\lambda^\star$'' provides the value of $\E[A]$ at $\lambda = \lambda^{\star}$.}}
\begin{tabular}{c|c||c|c|c|c|c}
$s_H$ & $\lambda_{\bg}$ & $\lambda_{\mathrm{opt}}$ & $\lambda^{\star}$ 
& $\E[A]$ for $\lambda_{\mathrm{opt}}$
& $\E[A]$ for $\lambda^{\star}$
& Difference in $\E[A]$
\\  \hline
0 & 0.8 & 0.108 & 0.102 & 10.109 & 10.137 & 0.028
\\
0 & 0.85 & 0.080 & 0.077 & 13.571 & 13.591 & 0.020
\\
0 & 0.9 & 0.052 & 0.051 & 20.509 & 20.522 & 0.013
\\
0.5 & 0.8 & 0.101 & 0.097 & 11.261 & 11.282 & 0.021
\\
0.5& 0.85 & 0.075 & 0.072 & 15.138 & 15.153 & 0.015
\\
0.5 & 0.9 & 0.049 & 0.048 & 22.906 & 22.915 & 0.009
\\
1.0 & 0.8 & 0.088 & 0.085 & 14.371 & 14.381 & 0.010
\\
1.0 & 0.85 & 0.065 & 0.064 & 19.391 & 19.398 & 0.007
\\
1.0 & 0.9 & 0.043 & 0.042 & 29.442 & 29.447 & 0.005
\\
1.5 & 0.8 & 0.075 & 0.073 & 18.959 & 18.964 & 0.005
\\
1.5 & 0.85 & 0.056 & 0.055 & 25.697 & 25.701 & 0.004
\\
1.5 & 0.9 & 0.037 & 0.037 & 39.185 & 39.188 & 0.003
\end{tabular}
}
\end{table}

\begin{figure}
\centering
\includegraphics[scale=1.0]{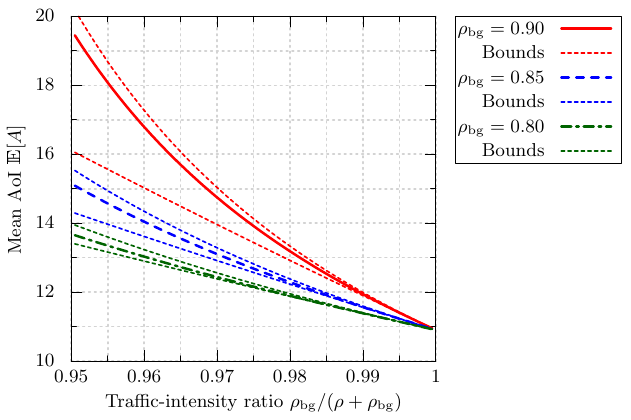}
 \caption{\label{Fig_E3}The mean AoI, as a function of the ratio of the traffic intensity $\rho_{\bg}/(\rho+\rho_{\bg})$, where $\lambda_{\bg}$ and $\mu$ are scaled up simultaneously so that $\rho_{\bg}$ is kept fixed. The explicit bounds in Theorem \ref{theorem:A-bound-mean} are also plotted.}
\end{figure}

\section{Discussion and concluding remarks}\label{sec:conc}
This paper has described a methodology to analyze the AoI in the context of the highly general GI+M/GI+GI/1 model. In a first main result, Theorem \ref{theorem:A-by-D}, we provide a closed-form expression for its Laplace-Stieltjes transform, enabling the numerical evaluation of the AoI moments. The second main result (Theorem \ref{theorem:A-bound}) concerns stochastic lower and upper bounds on the AoI, leading to explicit, insightful lower and upper bounds on the mean AoI (Theorem \ref{theorem:A-bound-mean}). We then approximate the tagged stream's inter-generation times via a phase-type distribution, which can be done at any desired precision level. For the resulting PH+M/GI+GI/1 model we succeed, as a third main result, in developing a stable computational algorithm for the mean AoI; see Theorem \ref{thm_meanA} for the resulting expression. \michel{We further observed that the generation rate that minimizes the upper bound in Theorem \ref{theorem:A-bound-mean}, as given in \eqref{eq:optimallambda}, yields a highly accurate proxy for the AoI minimizing generation rate.}

While with our GI+M/GI+GI/1 model we have reached a great level of generality, some further extensions can be thought of. In our setup we have, to model the background packet arrival process, relied on the fact that the superposition of many relatively homogeneous streams behaves essentially Poissonian. This means that we have to develop alternative techniques for settings in which the background stream does not correspond to a large traffic aggregate. A relevant specific case could be the one in which the background stream is modeled as the superposition of relative few streams with deterministic inter-arrival times, each of them with a random `phasing' (cf.\ the $N*$D/D/1 queue \cite[Section 15.2]{Roberts1996}). 

A second challenge could lie in the algorithmic computation of the AoI moments. Concrete objectives could concern optimizing the precision and run time of the algorithm for the mean AoI, as presented in Figure~\ref{figure:algorithm}, and its extension to higher moments. 

\appendices

\yoshiaki{
\section{Proof of Lemma \ref{lemma:EA-formula-PH}}
\label{appendix:EA-formula-PH}

We rewrite $f_G(x)$ as
\begin{align}
f_G(x) 
= 
\bm{\gamma} e^{-\theta x} \exp[(\theta\bm{I}+\bm{\Gamma}) x] (-\bm{\Gamma})\bm{e}
&=
\bm{\gamma} e^{-\theta x} 
\sum_{k=0}^{\infty} \frac{[(\theta\bm{I}+\bm{\Gamma}) x]^k}{k!}
\cdot
(-\bm{\Gamma})\bm{e}
\nonumber
\\
&=
\sum_{k=0}^{\infty} 
\frac{e^{-\theta x} (\theta x)^k}{k!}
\cdot
c_k.
\label{eq:f_G-uniform}
\end{align}
With this expression, we rewrite (\ref{eq:EA-by-q(y)-0}) as
\begin{equation}
\E[A]
=
\frac{\E[H]}{\mu}
+
\frac{\E[G^2]}{2\E[G]}
+
\sum_{k=0}^{\infty}
\frac{c_k}{\theta\E[G]}
\int_{y=0}^{\infty}
q(y) 
\cdot y \cdot \frac{e^{-\theta y}(\theta y)^k}{k!} \cdot \theta\,
\dd y.
\label{eq:EA-by-q(y)}
\end{equation}
From (\ref{eq:f_G-uniform}), we have 
\begin{align}
\E[G^2]
=
\int_0^{\infty}
y^2
\sum_{k=0}^{\infty}
\frac{e^{-\theta y}(\theta y)^k}{k!} 
\cdot
c_k
\, \dd y
&=
\sum_{k=0}^{\infty}
\frac{c_k}{\theta}
\int_{y=0}^{\infty} y^2 \cdot \frac{e^{-\theta y}(\theta y)^k}{k!} 
\cdot \theta\, \dd y
\nonumber
\\
&=
\sum_{k=0}^{\infty}
\frac{c_k}{\theta}
\cdot
\frac{(k+1)(k+2)}{\theta^2}.
\label{eq:G2-PH}
\end{align}
where the last equality is an immediate consequence of the definition
of the gamma-integral.
Also,
\begin{equation}
\int_0^{\infty}
q(y) 
\cdot y \cdot \frac{e^{-\theta y}(\theta y)^k}{k!} \cdot \theta\,
\dd y
=
(k+1)\int_0^{\infty}
q(y) 
\cdot \frac{e^{-\theta y}(\theta y)^{k+1}}{(k+1)!} 
\dd y
= 
(k+1) q_{k+1}.
\label{eq:qyy-int}
\end{equation}
We then obtain (\ref{eq:EA-formula-PH}) \michel{by combining} (\ref{eq:EA-by-q(y)}),
(\ref{eq:G2-PH}), and (\ref{eq:qyy-int}).
\qed
}

\section{Proof of Lemma \ref{lemma:b-coeff}}
\label{appendix:b-coeff}

Using the integral representation of the M/GI/1 busy period
length \cite[P.\ 653]{Cohen1982}, we have
\begin{align*}
f_{B_{\bg}}^*(\theta - \theta z)
&=
\int_0^{\infty} 
e^{-(\theta-\theta z)x}
\sum_{k=1}^{\infty} 
\frac{e^{-\lambda_{\bg} x} (\lambda_{\bg} x)^{k-1}}{k!}
\dd F_{Y_k}(x),
\end{align*}
where $Y_n$ denotes the sum of $k$ i.i.d random variables following
the CDF $F_{H_{\bg}/\mu}(\cdot)$. Therefore, 
\begin{align*}
f_{B_{\bg}}^*(\theta - \theta z)
&=
\sum_{m=0}^{\infty}
z^m
\sum_{k=1}^{\infty} 
\int_0^{\infty} 
\frac{e^{-\theta x}(\theta x)^m}{m!} 
\cdot
\frac{e^{-\lambda_{\bg} x} (\lambda_{\bg} x)^{k-1}}{k!}
\dd F_{Y_k}(x)
\\
&=
\sum_{m=0}^{\infty}
z^m
\sum_{k=1}^{\infty} 
\frac{1}{k}
\cdot
\frac{(m+k-1)!}{m!(k-1)!}
\cdot
\left(\frac{\theta}{\theta + \lambda_{\bg}}\right)^m
\left(\frac{\lambda_{\bg}}{\theta + \lambda_{\bg}}\right)^{k-1}
\\
&\hspace{10em}{}\cdot
\int_0^{\infty} 
\frac{e^{-(\theta+\lambda_{\bg}) x}\{(\theta + \lambda_{\bg})
x\}^{m+k-1}}{(m+k-1)!}
\dd F_{Y_k}(x)
\\
&=
\sum_{m=0}^{\infty}
z^m
\sum_{k=1}^{\infty} 
\frac{1}{k}
\comb{m+k-1}{m}
\left(\frac{\theta}{\theta + \lambda_{\bg}}\right)^m
\left(\frac{\lambda_{\bg}}{\theta + \lambda_{\bg}}\right)^{k-1}
y_k^{(m+k-1)}
\\
&=
\sum_{m=0}^{\infty}
z^m
\sum_{k=m}^{\infty} 
\frac{1}{k-m+1}
\comb{k}{m}
\left(\frac{\theta}{\theta + \lambda_{\bg}}\right)^m
\left(\frac{\lambda_{\bg}}{\theta + \lambda_{\bg}}\right)^{k-m}
y_{k-m+1}^{(k)},
\end{align*}
where
\[
y_k^{(m)} = \int_0^{\infty} 
\frac{e^{-\zeta x}(\zeta x)^m}{m!}
\dd F_{Y_k}(x),
\quad
m = 0,1,\ldots, k= 0,1,\ldots.
\]
By definition, it is readily verified that $y_k^{(m)}$ satisfies
the recursion (\ref{eq:y_k-recursion-1}) and (\ref{eq:y_k-recursion-k}).
We thus obtain (\ref{eq:b^i-def}) from (\ref{eq:b(z)-def}).
\qed

\yoshiaki{
\section{Proof of Lemma \ref{lemma:d-m}}
\label{appendix:d-m}

\subsection{Preliminaries}

Recall that $V(t)$ denotes the workload in the system, and that
$\hat{V}(t)/\mu$ denotes the virtual waiting time. 
Let $\hat{V}$ and $S$ denote generic random variables following the
(joint) stationary distribution of $(\hat{V}(t), S(t))$. 
Let $\hat{\bm{v}}(x)$ (for $x \geq 0$) denote a $1 \times M$ vector whose
$j$th ($j = 1,2,\ldots,M$) element represents the stationary joint
probability that the virtual waiting time is not greater than $x$ and the phase 
equals $j$:
\[
[\hat{\bm{v}}(x)]_j = \Pr(\hat{V} \leq x, S=j).
\]
We define $\bm{D}^*(\cdot)$ and $\hat{\bm{v}}^*(\cdot)$ as the LSTs of 
$\bm{D}(\cdot)$ and $\hat{\bm{v}}(\cdot)$:
\begin{align*}
\bm{D}^*(s) &= \int_0^{\infty} e^{-sx} \dd \bm{D}(x),
\quad
s \geq 0,
\\
\hat{\bm{v}}^*(s) &= \int_0^{\infty} e^{-sx} \dd \hat{\bm{v}}(x),
\quad
s \geq 0.
\end{align*}
We further define and $\hat{\bm{D}}(\cdot)$ 
and $\hat{\bm{D}}^*(\cdot)$ as
\begin{align}
\hat{\bm{D}}(x) &= \bm{D}(\mu x),
\quad
x \geq 0,
\label{eq:hat-D-def}
\\
\hat{\bm{D}}^*(s) 
&= 
\int_0^{\infty} e^{-sx} \dd \hat{\bm{D}}(x) = \bm{D}^*(s/\mu),
\quad
s \geq 0,
\nonumber
\end{align}
which correspond to $\bm{D}(\cdot)$ and $\bm{D}^*(\cdot)$ with
$H$ and $H_{\bg}$ replaced with $H/\mu$ and $H_{\bg}/\mu$.
We can verify that $\bm{D}^{(m)}$ defined in (\ref{eq:hatD-m}) satisfies
\begin{align*}
\hat{\bm{D}}^{(m)} 
&= 
(-\bm{\Gamma}\bm{e})\bm{\gamma}
\int_0^{\infty} 
\frac{e^{-\zeta x/\mu}(\zeta x/\mu)^m }{m!}
\dd F_H(x)
+
\bm{I}
\lambda_{\bg} \int_0^{\infty} 
\frac{e^{-\zeta x/\mu}(\zeta x/\mu)^m }{m!}
\dd F_{H_{\bg}}(x)
\\
&=
\int_0^{\infty} 
\frac{e^{-\zeta x}(\zeta x)^m }{m!}
\dd \hat{\bm{D}}(x), 
\end{align*}
where the second equality follows from the definitions
(\ref{eq:C-D(x)-def}) and (\ref{eq:hat-D-def}) of $\bm{D}(\cdot)$ and
$\hat{\bm{D}}(\cdot)$. 

Notice that the recursion (\ref{eq:Q-recursion-uniform}) for 
$\bm{Q}^{(n)}$ is equivalent to 
\begin{equation}
\bm{Q}^{(0)} = \bm{C},
\quad
\bm{Q}^{(n)} 
= 
\bm{C} + \int_0^{\infty} \dd \hat{\bm{D}}(x) \exp[\bm{Q}^{(n-1)}x],
\;\;
n = 1,2,\ldots,
\label{eq:Q-recursion}
\end{equation}
which can be verified with
\begin{align}
\exp[\bm{Q}^{(n)} x] &= 
\sum_{k=0}^{\infty} \frac{(\bm{Q}^{(n)} x)^k}{k!}
=
\sum_{k=0}^{\infty} \frac{e^{-\zeta x} (\zeta x)^k}{k!} 
\cdot 
(\bm{I}+\zeta^{-1}\bm{Q}^{(n)})^k.
\label{eq:uniformization-Q-n}
\end{align}
The expression (\ref{eq:uniformization-Q-n}) is known as the uniformization
technique \cite[P. 154]{Tijms1994}, which enables us to compute the
matrix exponential in a numerically stable way.

The following results are known for this model \cite{Takine1994,Takine2002}.

\begin{lemma}[\!\!\cite{Takine2002}]
\label{lemma:v-by-R}

The sequence $(\bm{Q}^{(n)})_{n=0,1,\ldots}$ given by (\ref{eq:Q-recursion})
is an elementwise non-decreasing sequence and its limit $\bm{Q} :=
\lim_{n \to \infty} \bm{Q}^{(n)}$ represents a proper transition rate
matrix of an irreducible continuous-time Markov chain with $M$ states.
Let $\bm{\kappa}$ denote the stationary probability vector associated with $\bm{Q}$:
\begin{equation}
\bm{\kappa}\bm{Q} = \bm{0},
\quad
\bm{\kappa}\bm{e} = 1.
\label{eq:kappa-def-2}
\end{equation}
The vector LST $\hat{\bm{v}}^*(s)$ of the stationary virtual waiting
time is then given by
\begin{equation}
\hat{\bm{v}}^*(s) =
(1-\rho-\rho_{\bg})\bm{\kappa}[\bm{I} - \bm{R}^*(s)]^{-1},
\quad
s \geq 0,
\label{eq:v-by-R}
\end{equation}
where $\bm{R}^*(s)$ is defined as
\begin{align}
\bm{R}^*(s)
=
\int_{x=0}^{\infty}e^{-sx} \dd x \int_{y=x}^{\infty}
\dd \hat{\bm{D}}(y) \exp[\bm{Q}(y-x)],
\label{eq:Rs-def}
\end{align}
and for $s \geq 0$, eigenvalues of $\bm{R}^*(s)$ have absolute values strictly less
than one, i.e., \begin{equation}[\bm{I}-\bm{R}^*(s)]^{-1} = \sum_{m=0}^{\infty}
(\bm{R}^*(s))^m < \infty\end{equation} holds.
\end{lemma}
\begin{remark}[\!\!\cite{Takine1994}]
For this model, we also have a more straightforward generalized version of the
Pollaczek–Khinchine formula, as follows:
\begin{equation}
\hat{\bm{v}}^*(s)[s\bm{I}+\bm{C}+\hat{\bm{D}}^*(s)] 
=
(1-\rho-\rho_{\bg})s \bm{\kappa}.
\label{eq:v-by-CD}
\end{equation}
This equation is readily deduced from (\ref{eq:kappa-def-2}) and
(\ref{eq:v-by-R}), by noting that (\ref{eq:Rs-def}) implies
\begin{equation}
[\bm{I}-\bm{R}^*(s)](s\bm{I}+\bm{Q}) =
s\bm{I}+\bm{C}+\hat{\bm{D}}^*(s).
\label{eq:R-and-CD}
\end{equation}
However, the advantage of the expression (\ref{eq:v-by-R}) over
(\ref{eq:v-by-CD}) is that $[\bm{I}-\bm{R}^*(s)]^{-1}$ exists for
all $s \geq 0$ as stated in Lemma \ref{lemma:v-by-R}.
This is not the case for $s\bm{I}+\bm{C}+\hat{\bm{D}}^*(s)$ because 
(\ref{eq:R-and-CD}) implies that 
$\det(s\bm{I}+\bm{C}+\hat{\bm{D}}^*(s)) = 0$ when $s$ equals an
eigenvalue of $-\bm{Q}$, which all have non-negative real parts owing
to the Perron–Frobenius theorem.
\end{remark}

\subsection{Proof of Lemma \ref{lemma:d-m}}

The following lemma is essential in proving Lemma \ref{lemma:d-m}:

\begin{lemma}
\label{lemma:hat-v-m-recursion}

Let $\hat{\bm{v}}^{(m)}$ (for $m=0,1,\ldots$) denote the $m$th
coefficient of $\hat{\bm{v}}^*(\zeta-\zeta z)$ as a function of
$z$: 
\[
\hat{\bm{v}}^*(\zeta-\zeta z) = \sum_{m=0}^{\infty}\hat{\bm{v}}^{(m)}
z^m.
\]
$\hat{\bm{v}}^{(m)}$ ($m=0,1,\ldots$) is determined by the following recursion:
\begin{align*}
\hat{\bm{v}}^{(0)} 
&= (1-\rho-\rho_{\bg})\bm{\kappa}
(\bm{I}-\bm{R}^{(0)})^{-1},
\\
\hat{\bm{v}}^{(m)} 
&= \left(\sum_{k=0}^{m-1} \hat{\bm{v}}^{(k)}\bm{R}^{(m-k)}\right)
(\bm{I}-\bm{R}^{(0)})^{-1},
\quad
m = 1,2,\ldots,
\end{align*}
where $\bm{R}^{(m)}$ (for $m=0,1,\ldots$) is given by (\ref{eq:R-uniform}).
\end{lemma}
\begin{proof}
From (\ref{eq:Rs-def}), we have
\begin{align*}
\bm{R}^*(\zeta - \zeta z)
&=
\int_{x=0}^{\infty} e^{-(\zeta - \zeta z)x} \dd x
\int_{y=x}^{\infty} \dd \hat{\bm{D}}(y) \exp[\bm{Q}(y-x)]
\\
&=
\int_{y=0}^{\infty} 
\dd \hat{\bm{D}}(y) 
\int_{x=0}^y
\exp[\bm{Q}(y-x)]
e^{-(\zeta - \zeta z)x} \dd x
\\
&=
\int_{y=0}^{\infty} 
\dd \hat{\bm{D}}(y) 
\int_{x=0}^y
\sum_{k=0}^{\infty}
\frac{e^{-\zeta (y-x)}\{\zeta (y-x)\}^k}{k!}
\cdot
(\bm{I}+\zeta^{-1}\bm{Q})^k
e^{-(\zeta - \zeta z)x} \dd x
\\
&=
\sum_{k=0}^{\infty}
\int_{y=0}^{\infty} 
e^{-\zeta y}
\dd \hat{\bm{D}}(y) 
(\bm{I}+\zeta^{-1}\bm{Q})^k
\int_{x=0}^y
\frac{\{\zeta (y-x)\}^k}{k!}
e^{\zeta zx} \dd x
\\
&=
\sum_{k=0}^{\infty}
\int_{y=0}^{\infty} 
e^{-(\zeta - \zeta z) y}
\dd \hat{\bm{D}}(y) 
(\bm{I}+\zeta^{-1}\bm{Q})^k
\int_{x=0}^y
\frac{\{\zeta (y-x)\}^k}{k!}
e^{-\zeta z(y-x)} \dd x
\\
&=
\sum_{k=0}^{\infty}
z^{-(k+1)}\zeta^{-1}
\int_{y=0}^{\infty} 
e^{-(\zeta - \zeta z) y}
\dd \hat{\bm{D}}(y) 
(\bm{I}+\zeta^{-1}\bm{Q})^k
\int_{x=0}^y
\frac{e^{-\zeta z(y-x)} \{\zeta z(y-x)\}^k }{k!}
\cdot
\zeta z
\dd x
\\
&=
\sum_{k=0}^{\infty}
\sum_{m=k+1}^{\infty}
z^{-(k+1)}\zeta^{-1}
\int_{y=0}^{\infty} 
e^{-(\zeta - \zeta z) y}
\cdot
\frac{e^{-\zeta zy} (\zeta zy)^m }{m!}
\dd \hat{\bm{D}}(y) 
(\bm{I}+\zeta^{-1}\bm{Q})^k
\\
&=
\sum_{k=0}^{\infty}
\sum_{m=k+1}^{\infty}
z^{m-k-1}\zeta^{-1}
\int_{y=0}^{\infty} 
\frac{e^{-\zeta y}(\zeta y)^m }{m!}
\dd \hat{\bm{D}}(y) 
(\bm{I}+\zeta^{-1}\bm{Q})^k
\\
&=
\sum_{k=0}^{\infty}
\sum_{m=k+1}^{\infty}
z^{m-k-1}\zeta^{-1}
\hat{\bm{D}}^{(m)}
(\bm{I}+\zeta^{-1}\bm{Q})^k
\\
&=
\sum_{k=0}^{\infty}
\sum_{m=0}^{\infty}
z^m\zeta^{-1}
\hat{\bm{D}}^{(m+k+1)}
(\bm{I}+\zeta^{-1}\bm{Q})^k
\\
&=
\sum_{m=0}^{\infty}
z^m
\zeta^{-1}
\sum_{k=0}^{\infty}
\hat{\bm{D}}^{(m+k+1)}
(\bm{I}+\zeta^{-1}\bm{Q})^k,
\end{align*}
Therefore, we have proved
\[
\bm{R}^*(\zeta-\zeta z)
=
\sum_{m=0}^{\infty} \bm{R}^{(m)} z^m.
\]
Note that we have from (\ref{eq:v-by-R}), 
\begin{align*}
\hat{\bm{v}}(\zeta-\zeta z)
&=
(1-\rho-\rho_{\bg})\bm{\kappa}
+
\hat{\bm{v}}(\zeta-\zeta z)
\bm{R}^*(\zeta-\zeta z),
\end{align*}
i.e.,
\begin{align*}
\sum_{m=0}^{\infty}
\hat{\bm{v}}^{(m)} z^m
&=
(1-\rho-\rho_{\bg})\bm{\kappa}
+
\sum_{m=0}^{\infty}
z^m
\sum_{k=0}^m
\hat{\bm{v}}^{(k)} \bm{R}^{(m-k)}.
\end{align*}
Comparing the coefficients of both sides of this equation and
re-arranging terms, we have established the claim of Lemma \ref{lemma:hat-v-m-recursion}.
\end{proof}
\begin{remark}
\label{remark:IR-inverse_IQ_subst_proof}

The statements in Remark \ref{remark:IR-inverse_IQ_subst} are verified
as follows. 
\begin{itemize}
\item[(i)] The existence of $(\bm{I}-\bm{R}^{(0)})^{-1}$ is
ensured by Lemma \ref{lemma:v-by-R} because $\bm{R}^{(0)} = \bm{R}^*(\zeta)$.
\item[(ii)] From (\ref{eq:C-D(x)-def}) and (\ref{eq:theta-zeta}), 
we have $\zeta = \max_i |[\bm{C}]_{i,i}|$, i.e., it equals to the maximum
absolute value  of diagonal elements of $\bm{C}$.
As shown in Lemma \ref{lemma:v-by-R}, $\bm{Q}^{(n)}$ is an elementwise
non-decreasing sequence with $\bm{Q}^{(0)}=\bm{C}$, so that we have
$\max_i |[\bm{Q}^{(n)}]_{i,i}| \leq \zeta$. Therefore, 
$\bm{I}+\zeta^{-1}\bm{Q}^{(n)}$ is a substochastic matrix.
\end{itemize}
\end{remark}

From the conditional PASTA property \cite{Door1988}, the LST
$f_D^*(s)$ of the stationary system delay $D$ is obtained as
\begin{equation}
f_D^*(s) 
= 
\frac{\hat{\bm{v}}^*(s)(-\bm{\Gamma}\bm{e})f_H^*(s/\mu)}{\lambda}.
\label{eq:f_D-PH}
\end{equation}
Therefore, we obtain Lemma \ref{lemma:d-m} from 
Lemma \ref{lemma:hat-v-m-recursion} and (\ref{eq:f_D-PH}).
\qed
}

\yoshiaki{
\section{Proof of Theorem \ref{thm_meanA}}
\label{appendix:meanA-PH}

By definition (\ref{eq:b_ell-recursion}), it is obvious that
\[
\{b(z)\}^{\ell} = \sum_{m=0}^{\infty} b_{\ell}^{(m)} z^m.
\]
We then have from (\ref{eq:d-m-def}), (\ref{eq:b(z)-def}), and (\ref{eq:psi-by-b}),
\begin{align}
\frac{f_D^*(\psi(\theta - \theta z))}{\psi(\theta - \theta z)}
=
\frac{f_D^*(\zeta - \zeta b(z))}{\zeta - \zeta b(z)}
\nonumber
&=
\frac{1}{\zeta}
\sum_{k=0}^{\infty}\{b(z)\}^k
\sum_{m=0}^{\infty} d^{(m)} \{b(z)\}^m
\nonumber
\\
&=
\frac{1}{\zeta}
\sum_{\ell=0}^{\infty} \{b(z)\}^{\ell}
\sum_{m=0}^{\ell}d^{(m)}
\nonumber
\\
&=
\sum_{i=0}^{\infty}
z^i
\cdot
\frac{1}{\zeta}
\sum_{\ell=0}^{\infty} 
b_{\ell}^{(i)}\sum_{m=0}^{\ell}d^{(m)},
\label{eq:psi-frac}
\end{align}
where we put $\ell = m+k$ to obtain the second equality.
Therefore, we obtain from (\ref{eq:q^*-by-psi}) and (\ref{eq:psi-frac}),
\[
q^*(z) 
=
\sum_{k=0}^{\infty}
z^k
\left(
-\frac{(k+1)(1-\rho_{\bg})}{\theta^2}
+\frac{\E[D]}{\theta}
+\frac{1}{\theta}
\sum_{i=0}^k 
\frac{1}{\zeta}
\sum_{\ell=0}^{\infty} 
b_{\ell}^{(i)}\sum_{m=0}^{\ell}d^{(m)}
\right),
\]
i.e.,
\[
q_k 
=
-\frac{(k+1)(1-\rho_{\bg})}{\theta^2}
+\frac{\E[D]}{\theta}
+\frac{1}{\theta}
\sum_{i=0}^k 
\frac{1}{\zeta}
\sum_{\ell=0}^{\infty} 
b_{\ell}^{(i)}\sum_{m=0}^{\ell}d^{(m)},
\quad
k = 0,1,\ldots.
\]
It is then readily verified that $(q_k)_{k=0,1,\ldots}$ satisfies the
recursion (\ref{eq:q-recursion-0}) and (\ref{eq:q-recursion-k}).
\qed
}

\end{document}